\documentclass[times, letter]{article}

\usepackage{geometry}

\usepackage{self}
\usepackage{amsthm}
\usepackage{times}
\usepackage{url}

\usepackage{amsmath}
\usepackage{longtable}
\usepackage{amsfonts}
\usepackage{amssymb}

\usepackage{graphicx}

\usepackage{multirow}
\usepackage{subcaption}
\usepackage{verbatim}
\usepackage{lineno}
\usepackage{enumitem}
\usepackage{color}
\usepackage{subcaption}
\usepackage{epsfig}
\usepackage{amsthm}

\usepackage{lineno}
\modulolinenumbers[5]


\usepackage{hyperref}
\usepackage{xcolor}
\hypersetup{
    colorlinks,
    linkcolor={red!50!black},
    citecolor={blue!50!black},
    urlcolor={blue!80!black}
}

\newtheorem{lemma}{Lemma}
\newtheorem{proposition}{Proposition}
\newtheorem{definition}{Definition}

\newtheorem{remark}{Remark}

\newcommand{\xx}{{\mathsf x}}
\newcommand{\XX}{{\mathsf X}}

\pagenumbering{arabic}

\usepackage[numbers]{natbib}

\hypersetup{
    colorlinks,
    linkcolor={red!50!black},
    citecolor={blue!50!black},
    urlcolor={blue!80!black}
}

\bibliographystyle{agsm}

\graphicspath{{img/}}

\begin{document}

\title{Statistical inference of probabilistic origin-destination demand using day-to-day traffic data}

\author{Wei Ma, Zhen (Sean) Qian\\
Department of Civil and Environmental Engineering\\ Carnegie Mellon University, Pittsburgh, PA 15213\\
\{weima, seanqian\}@cmu.edu}
\maketitle
\begin{abstract}
Recent transportation network studies on uncertainty and reliability call for modeling the probabilistic O-D demand and probabilistic network flow. Making the best use of day-to-day traffic data collected over many years, this paper develops a novel theoretical framework for estimating the mean and variance/covariance matrix of O-D demand considering the day-to-day variation induced by travelers' independent route choices. It also estimates the probability distributions of link/path flow and their travel cost where the variance stems from three sources, O-D demand, route choice and unknown errors. The framework estimates O-D demand mean and variance/covariance matrix iteratively, also known as iterative generalized least squares (IGLS) in statistics.  Lasso regularization is employed to obtain sparse covariance matrix for better interpretation and computational efficiency.  Though the probabilistic O-D estimation (ODE) works with a much larger solution space than the deterministic ODE, we show that its estimator for O-D demand mean is no worse than the best possible estimator by an error that reduces with the increase in sample size. The probabilistic ODE is examined on two small networks and two real-world large-scale networks. The solution converges quickly under the IGLS framework. In all those experiments, the results of the probabilistic ODE are compelling, satisfactory and computationally plausible. Lasso regularization on the covariance matrix estimation leans to underestimate most of variance/covariance entries. A proper Lasso penalty ensures a good trade-off between bias and variance of the estimation.

\end{abstract}

\section{Introduction}
Origin-destination (O-D) demand is a critical input to system modeling in transportation planning, operation and management. For decades, O-D demand is deterministically modeled, along with deterministic models of link/path flow and travel cost/time in classical traffic assignment problems. Transportation network uncertainty and reliability call for modeling the stochasticity of O-D demand, namely its spatio-temporal correlation and variation. With the increasing quantity and quality of traffic data collected years along, it is possible to learn the stochasticity of O-D demand for better understanding stochastic travel behavior and stochastic system performance metrics. Some studies considered the stochastic features of O-D demand, but few estimated the mean and variance of O-D demand from day-to-day traffic data. What is missing in the literature is the capacity to estimate spatially correlated multivariate O-D demand, in conjunction with a sound network flow theory on probabilistic route choices that can be learned from day-to-day traffic data. In view of this, this paper develops a novel data-friendly framework for estimating the mean and variance/covariance of O-D demand based on a generalized statistical network equilibrium. The statistical properties towards the estimated probabilistic O-D demand are also analyzed and provided. The process of O-D demand estimation (ODE) is further examined in real-world networks for insights.

O-D estimation (ODE) requires an underlying behavioral model, based on which O-D demand is estimated such that it best fits observations. Behavioral models in the static network context, namely route choice models, are also known as static traffic assignment models. The classical traffic assignment models \citep[e.g.,][]{dafermos1969traffic,patriksson1994traffic} deterministically map the deterministic O-D demand $q \in \mathbb{R}^{R\times S}_{+} $ to link flow $x \in \mathbb{R}^{N}_{+} $ or path flow $f \in \mathbb{R}^{K}_{+}$ (where $R$, $S$, $N$ and $K$ are the cardinality of sets of origins, destinations, links, and paths, respectively). In fact, the deterministic O-D demand $q$ is assumed to represent the \textit{mean} number of vehicles in the same peak hour from day to day. Likewise, link (path) flow is also deterministic, representing the \textit{mean} number of vehicles on a link (path) in the same hour from day to day. The classical traffic assignment models (such as User Equilibrium and Stochastic User Equilibrium) lay out the foundation of deterministic O-D estimation methods, namely to estimate $q$ in a way to best fit observed data related to a subset of link/path flow $x,f$. Deterministic O-D estimation (ODE) include the entropy maximizing models \citep{van1980most}, maximum likelihood models \citep{spiess1987maximum, watling1994maximum}, generalized least squares (GLS) models \citep{cascetta1984estimation, bell1991estimation, yang1992estimation, wang2016two}, Bayesian inference models \citep{maher1983inferences, tebaldi1998bayesian} and some recent emerging combined models \citep{ashok2002estimation, castillo2008trip, li2009markov, castillo2014hierarchical, lu2015enhanced, woo2016data}. For more details, readers are referred to the comprehensive reviews by \citet{bera2011estimation, castillo2015state}.

Classical traffic assignment models and ODE methods overlook the variance/covariance of demand and link/path flow, an essential feature for network traffic flow. Recent studies on network reliability and uncertainty model stochasticity of the network flow. Since ODE requires a traffic assignment model, we first review statistical traffic assignment models, followed by ODE models that take into account stochasticity.

One aspect of the stochastic network flow is using probability distributions to represent O-D demand. For example, \citet{waller2006network, duthie2011influence} sampled O-D demand from given multivariate normal distributions (MVN) and evaluated the network performance under classical User Equilibrium (UE) condition. \citet{chen2002travel} used a similar simulation-based method to evaluate travelers' risk-taking behavior due to probabilistic O-D demand. All these studies indicate the O-D variation is of great importance to network modeling and behavioral analysis. Statistical traffic assignment models consider various O-D probability distributions such as Poisson distributions \citep{clark2005modelling}, MVN \citep{castillo2008predicting}, multinomial distributions \citep{nakayama2016effect}. \citet{nakayama2014consistent} summarized different formulations and proposed a unified framework for stochastic modeling of traffic flows. Advantages and disadvantages of modeling traffic with those probabilistic distributions are also discussed by \citet{castillo2014probabilistic}. \citet{shao2006demand, shao2006reliability} proposed a reliability-based traffic assignment model (RUE) and extended it to consider different travelers' risk taking behavior. \citet{lam2008modeling} further extended the model to consider the traffic uncertainty and proposed reliability-based statistical traffic equilibrium. \citet{zhou2008comparative, chen2010alpha} proposed a $\alpha$-reliable mean-excess traffic assignment model which explicitly models the travel time distribution and consider the reliability and uncertainty of the travel time on travelers' route choice behavior. Other studies \citep{haas1999modeling, waller2001evaluation, duthie2011influence} show that the variance/covariance matrix of the O-D demand have a significant influence on network traffic conditions.

Though adopting stochastic O-D demand, those traffic assignment models (except \citet{clark2005modelling}) assumed non-atomic (infinitesimal) players and therefore are unable to capture the stochasticity of route choices that vary from day to day (the proof is shown in \citet{GESTA}). Classical UE, SUE and RUE are all deterministic route choice models where the number of (infinitesimal) players assigned to each route is fixed, rather than being stochastic. Thus, those models are unable to explain the day-to-day variation of observed traffic counts at the same location and the same time of day. To further see how classical equilibrium models overlook the route choice stochasticity, suppose there is $Q$ travelers where $Q$ is a random variable to be realized on each day. Given the probability of choosing a route $p$, the route flow is deterministically identified by the number of infinitesimal users who take this route, $F = pQ$. Even if the route choice probability $p$ is determined by stochastic choice models (such as Probit, Logit, etc.), these models still assume that a fixed number $Qp$ of travelers take this route on each day, as a result of non-atomic equilibrium. This does not, theoretically, allow to model the day-to-day variations of travelers' choices. Recent studies on statistical traffic assignment models indicate that the route flow is the aggregation of random choices of O-D demand, and thus also random \citep{watling2002second, watling2002second2, Shoichiro2006STOCHASTIC, nakayama2003traffic, nakayama2014consistent}. Travelers' route choice follows a multinomial distribution with the probability obtained from route choice models, $f \sim \text{Multinomial}(Q, p)$. To distinguish to what extent stochasticity is modeled for route choices, we refer to the former (classical) models as ``\textit{fixed portions with stochastic route choice models }" and the latter models as ``\textit{probabilistic distributions with stochastic route choice models}". Though route choices are stochastic, these studies did not work directly with the covariance of demand among all O-D pairs. A detailed comparison of those assignment models is further illustrated in \citet{GESTA}.


Given a route choice probability, we can derive probability distributions of path/link flow. However, it is non-trivial to establish a statistical network equilibrium where the route choice probability is determined endogenously as a result of stochastic O-D demand, path/link flow and network conditions. In the deterministic settings, UE, SUE or RUE simultaneously determines the \textit{mean} path/link flow, and the route choice probability \citep[e.g.,][]{yang1992estimation}. Very few studies examined the statistical network equilibrium. \citet{davis1993large} proposed a Markov process to model the day-to-day variation of traffic flow given a fixed regional population. The stochastic route choice on a particular day is assumed to be related to the stochastic network condition of the previous days \citep{cascetta1991day}. When the network evolves from day to day, there exists a network equilibrium where stabilized probability distributions of path/link flow are reached. Different from this approach, \citet{GESTA} proposed a generalized statistical equilibrium where each traveler makes stochastic choices based on his/her entire past experience, namely the probability distributions of the equilibrated network conditions. \citet{GESTA} integrates multivariate probability distributions of O-D demands and link/path flow into the stochastic route choice models, and ultimately solved for the probabilistic network flow. It also analytically decomposes the variance of link/path flow into three sources, O-D demand variation, route choice variation and measurement error.




With little work on statistical traffic assignment models, probabilistic ODE can be challenging. An ideal probabilistic ODE should posses three critical features: estimating variance/covariance of O-D demand in addition to its mean, a statistical network equilibrium that can fit massive data collected years along, and consideration of day-to-day route choice variation. To our best knowledge, few literature works with ODE models that take into account any of the three features. We summarize existing representative ODE models in Figure~\ref{fig:lit}.


\begin{figure}[h]
\centering
\includegraphics[width=0.7\textwidth]{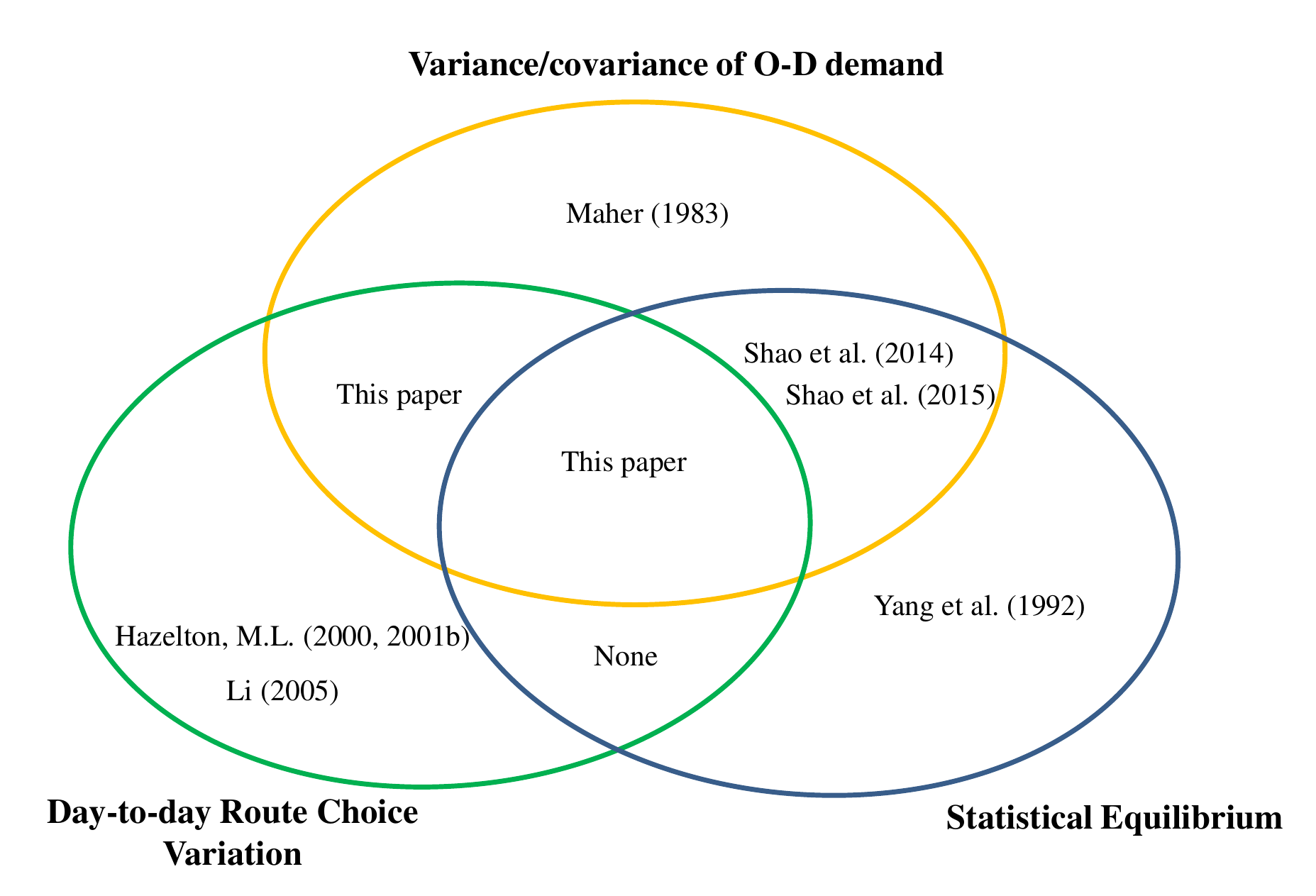}
\caption{\footnotesize O-D estimation methods by categories}
\label{fig:lit}
\end{figure}

To further distinguish our work from existing literature, \citet{vardi1996network, hazelton2000estimation, hazelton2001estimation, hazelton2001inference, li2005bayesian, parry2012estimation} assumed O-D demand follows Poisson distributions that are independent among O-D pairs, and \citet{hazelton2000estimation, hazelton2001estimation, hazelton2001inference, li2005bayesian, parry2012estimation} further considered day-to-day route choice variation. They formulated maximum likelihood estimator and Bayesian inference method for O-D demand, whereas the demand covariance among O-D pairs is not considered. In addition, using uncongested networks simplifies the route choice model, and thus no network equilibrium under congestion is proposed. \citet{shao2014estimation, yang2017stochastic} proposed a generalized model to estimate the mean and variance of O-D demand with MVN. \citet{shao2015estimation} further extended the model to estimate multi-class O-D demand and used $L^1$ regularizer to shrink the model dimensions.  An equilibrium is used to model deterministic route choices (similar to classical UE and RUE), while neither of them considered day-to-day route choice variation.


To our best knowledge, there is a lack of study that estimates correlated multivariate probabilistic O-D demand under a mathematically sound statistical network equilibrium (namely a truly stochastic route choice model) while considering day-to-day random route choices simultaneously. To fill up this gap, this paper builds an ODE model based on the generalized statistical traffic assignment model (GESTA) proposed by \citet{GESTA}. Any observation (such as link/path flow, or travel time/cost) has a variance that stems from three sources, O-D demand variation, route choice variation and unknown errors, all from day to day. Using GESTA as the underlying behavioral model, we estimate probability distributions of O-D demand using data from various data sources. Furthermore, conventional goodness-of-fit indicators \citep[e.g.,][]{antoniou2015towards} is not suitable for probabilistic ODE. This paper also proposes new goodness-of-fit indicators based on probability distributions to evaluate the performance of probabilistic ODE.

Another important issue for ODE is the observability problem \citep{castillo2008observability, yang2015data}. It is well known that the ODE is underdetermined using observed link-based traffic counts \citep{yang1992estimation}. \citet{hazelton2003some} suggested to use second-order statistical information of the observed link counts to estimate Poisson distributed O-D demand without covariance among O-D pairs. By utilizing the second-order information, the Poisson distributed O-D demand can be estimated uniquely. \citet{yang2015data} proposed to uniquely determine the O-D demand by properly selecting data among observed link flow data and historical O-D information. \citet{yang2017stochastic} also showed that possibly dissimilar multi-day observation improves the observability of OD demand \citep{yang2017stochastic}.  \citet{hazelton2015network} proposed new method for sampling
latent route flows conditional on the observed link counts when the network’s link-path
incidence matrix is unimodular. In this paper, the observability of the proposed probabilistic ODE is examined. The new ODE guarantees that its estimator for the O-D demand mean is no worse than the conventional deterministic ODE by the order of ${\cal O}(\frac{1}{n})$, provided with sufficient data.

Under the proposed probabilistic ODE framework, we propose to estimate the O-D demand mean and variance-covariance matrix iteratively, which decomposes the complex ODE into two sub-problems. In the sub-problem of estimating O-D demand mean vector $q$, we extend the statistical ODE from \citet{menon2015fine}, and both single level and bi-level ODE formulations are discussed. In the sub-problem of estimating O-D demand variance-covariance matrix $\Sigma_q$, we utilize estimated link flow covariance to formulate the estimation problem, and then apply Lasso regularization and convex relaxation on the formulation. How to use traffic speed data in addition to traffic counts data is discussed. Furthermore, the statistical properties of the estimated probabilistic O-D demand are provided for insights. The observability of the ODE problem is also examined.

The main contributions of this paper are summarized as follows:
\begin{enumerate}[label=\arabic*)]
\item It proposes a novel theoretical framework for estimating probabilistic O-D demand (namely mean and variance/covariance of O-D demand) considering newly defined generalized statistical network equilibrium. The statistical equilibrium simultaneously integrates probabilistic O-D demand and travelers' day-to-day route choice variation.
\item It defines the goodness-of-fit for probabilistic ODE and develops a theory for variance analysis of estimated probabilistic O-D demand and path/link flow.
\item It discusses the observability issue for probabilistic O-D demand, and shows that the estimated mean of O-D demand using the the probabilistic ODE is no worse than the estimate O-D demand using a deterministic ODE model, provided with sufficient data.
\item It intensively examines the proposed probabilistic ODE framework on two large-scale real networks using both simulated traffic data and real world traffic counts to gain insights from solutions, as well as to show the computational efficiency of the solution algorithms.
\end{enumerate}

The remainder of this paper is organized as follows. Section \ref{sec:example} presents an illustrative example to further explain how and why it is necessary to consider day-to-day route choice variation. Section \ref{sec:formulation} discusses the formulation details, followed by section~\ref{sec:analysis} presenting properties of the model and addressing the observability issues. Section \ref{sec:soluion} proposes the entire probabilistic ODE framework. In section~\ref{sec:exp}, two simple illustrative examples are used to demonstrate the concepts and ODE results. Two large scale networks are used to demonstrate the computational efficiency of the probabilistic ODE method, and its ability to work with real-world data. Finally, conclusions are drawn in section \ref{sec:con}.

\section{An illustrative example}
\label{sec:example}
Observations of link/path flow varies from day to day. In principle, the day-to-day variation is attributed to three sources, O-D demand variation, route choice variation and sensing measurement error. The impact of day-to-day O-D demand variation and measurement error on link/path flow have been thoroughly discussed by \citet{yang1992estimation, waller2001evaluation, shao2014estimation}. Here we use a toy example to compare probabilistic O-D estimation results with and without the consideration of day-to-day route choice variation. It is intended to illustrate that where the day-to-day route choice variation comes from, and why it is important to not neglect it when estimating O-D demand.

Consider a toy network as shown in Figure \ref{fig:net1}, on each day $Q$ amount of vehicles depart from node $r$ to node $s$. We assume $Q$ is normally distributed, so is the link flow (i.e. path flow in this example) $X_1$ and $X_2$, the number of vehicles on link 1 and link 2, respectively. Suppose both links are indifferent. We do not consider measurement error here.
\begin{figure}[h]
\centering
\includegraphics[width=0.4\textwidth]{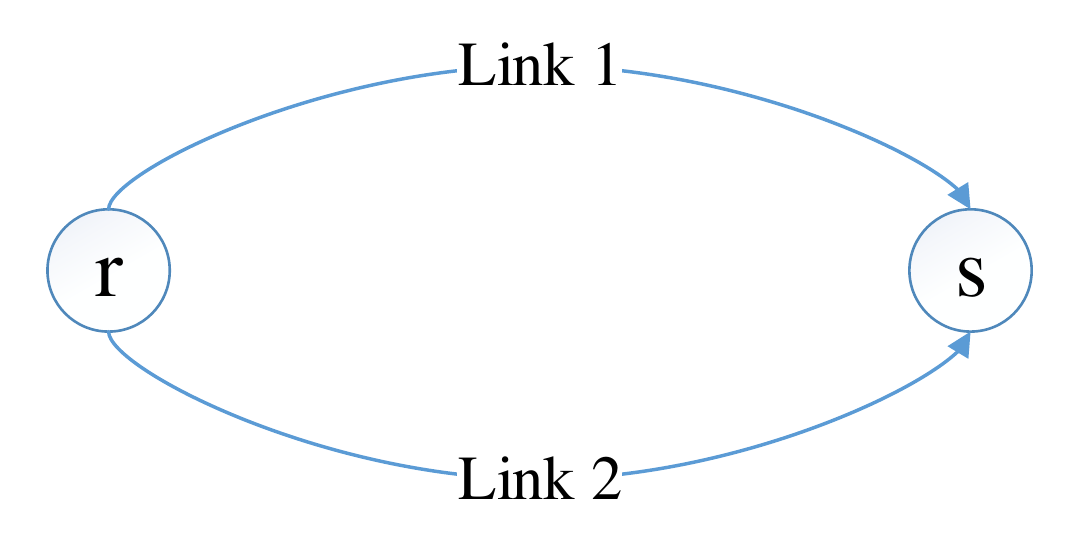}
\caption{\footnotesize A toy network}
\label{fig:net1}
\end{figure}

On each day, the demand $Q$ is realized, but cannot be directly measured. Instead, we measure the link flow $X_1$ on each day. Suppose after observations of many days, we determine the probability distribution of $X_1 \sim \N(50, 10^2)$.  If we do not consider the day-to-day route choice variation, the estimated O-D demand follows the rounded normal distribution $Q = \left[Q'\right], Q' \sim \N(100, 20^2)$ since $Q \simeq 2X_1$ by equilibrium conditions. In fact, regardless of which route choice models (logit, probit, etc.), the distribution of the O-D demand can be directly computed since both links are indifferent to travelers. Both links have the same flow distributions.

Now consider travelers' route choice that varies from day to day. The probability of any vehicle choosing link $1$ and link $2$ is $p_1 =p_2= 0.5$. All $Q$ vehicles make this choice independently on each day. Thus, the link flow follows a multinomial distribution (binomial distribution in this example),
\begin{eqnarray}
(X_1, X_2)^T &\sim& \text{Multinomial}(\left[Q\right], (0.5, 0.5)^T)\\
Q &\sim& \N(q, \Var(Q))
\end{eqnarray}
Where $q$ denotes the mean of the probability distribution of $Q$. Since $q= 2\Expect(X_1) = 100$, by law of total variance we have,
\begin{eqnarray}
\Var(X_1) &=& \Var(\mathbb{E}(X_1|Q)) + \mathbb{E}( \Var(X_1|Q)) = p_1^2 \Var(Q)+ p_1p_2\mathbb{E}(Q)\\
100&=& 0.25 \Var(Q) + 0.25 \mathbb{E}(Q)\\
100&=&  0.25 \Var(Q) + 25
\end{eqnarray}
Thus $\Var(Q) = 300$. Clearly, the probabilistic O-D estimation without considering day-to-day route choice variation overestimates the variance of O-D demand by the term $p_1p_2\mathbb{E}(Q)$, namely $33.3\%$, a fairly substantial quantity. Intuitively, probabilistic route choices that are made by travelers independently on a daily basis tend to reduce the day-to-day variation of route flow. As shown in \citet{GESTA}, classical Stochastic User Equilibrium theory works with non-atomic (infinitesimal) users, which in theory does not allow day-to-day choice variation. This greatly simplifies the computation. However, it potentially conflicts with real-world observations (such as counts and speeds) that vary from day to day, and thus might be difficult to handle massive data collected over many years. Provided that the route choice variation can be substantial in real-world traffic, our ODE method needs to model atomic users, and take day-to-day route choice variation into consideration while estimating probability distributions of O-D demand.


\section{Formulations}
\label{sec:formulation}
In this section, we discuss the probabilistic ODE framework. We first present the notations and assumptions, compare this framework to existing formulations, and finally discuss each component of the framework in details.

\subsection{Notations}
\label{sec:notation}
Please refer to Table~\ref{tab:notation}. The superscript $\cdot^o$ indicates that the variable is projected onto observed links only. The hat symbol, $\hat{\cdot}$, indicates the variable is an estimator for the true (unknown) variable.

\begin{longtable}{p{3cm}p{10cm}}
\caption{\footnotesize List of notations}
\label{tab:notation}
\endfirsthead
\endhead	
\multicolumn{2}{c}{\textbf{Network Variables}}\\
$A$ & The set of all links\\
$A^o$ & The set of links with flow observations\\
$N$ & The number of links\\
$K_q$ & The set of all O-D pairs\\
$K_{rs}$ & The set of all paths between O-D pair $r-s$\\
$\Delta$ & Path/link incidence matrix\\
$\Delta^o$ & Path/observed link incidence matrix \\
$M$ & Path/O-D demand incidence matrix \\

\multicolumn{2}{c}{}\\
\multicolumn{2}{c}{\textbf{Random Variables}}\\
$Q_{rs}$ & The demand of O-D pair $r-s$\\
$Q$  & The vector of O-D demands\\
$X_a$ & Link flow on link $a$\\
$X$  & The vector of link flow\\
$X^o$  & The vector of link flow on observed links\\
$X_m$  & The vector of measured link flow\\
$F_{rs}^k$ & Path flow on path $k$ between O-D pair $r-s$\\
$F_{rs}$  & The vector of path flow between O-D pair $r-s$\\
$F$  & The vector of path flow between all O-D pairs\\
$C$ & The vector of path costs between all O-D pairs\\
$E$ & The vector of unknown error for links\\
\multicolumn{2}{c}{}\\
\multicolumn{2}{c}{\textbf{Parameters for Probability Distributions}}\\
$q=\mathbb{E}(Q)$  & The vector of means of O-D demands \\
$q_{rs}=\mathbb{E}(Q_{rs})$ & The mean of O-D demand $Q_{rs}$\\
$\Sigma_q$  & Covariance matrix of O-D demands\\
$p_{rs}^k$ & The probability of choosing path $k$ in all paths between O-D pair $r-s$\\
$p_{rs}$ & The vector of route choice probabilities of all paths between O-D pair $r-s$\\
$p$  & Route choice probability matrix, consisting of all $p_{rs}^k$\\
$c=\mathbb{E}(C)$ & The vector of means of path costs for all O-D pairs\\
$\Sigma_c$	& Covariance matrix of path costs vector\\
$x=\mathbb{E}(X)$  & The vector of means of link flow\\
$x_a=\mathbb{E}(X_a)$ & The mean of link flow $X_a$\\
$\Sigma_x$  & Covariance matrix of link flow\\
$f=\mathbb{E}(F)$  & The vector of means of path flow\\
$f_{rs}^k=\mathbb{E}(F_{rs}^k)$ & The mean of path flow $F_{rs}^k$\\
$\Sigma_f$  & Covariance matrix of path flow\\
$\Sigma_e$  & Covariance matrix of unknown error\\

\multicolumn{2}{c}{}\\
\multicolumn{2}{c}{\textbf{Parameters for Conditional Probability Distributions}}\\
$\Sigma_{f|Q}$ & Covariance matrix of path flow conditional on O-D demands $Q$\\
$\Sigma_{f|q}$ & Covariance matrix of path flow conditional on O-D demands $Q=q$, $\Sigma_{f|q} = \Sigma_{f|Q = q}$ \\

\multicolumn{2}{c}{}\\
\multicolumn{2}{c}{\textbf{Observed Link Flow Data}}\\
$\xx^o$ & Observed link flow matrix\\
$n$	 & Number of observed link flow vectors, namely number of observed days\\
$\xx_i^o$  & The $i$-th observed link flow vector (on the $i$-th day), $1 \leq i \leq n$\\
$S_x^o$   & Empirical covariance matrix of the observed link flow
\end{longtable}

\subsection{Assumptions}
\begin{enumerate}
\item O-D demands follow a rounded multivariate normal (MVN) distribution with mean $q$ and covariance matrix $\Sigma_q$.
\begin{eqnarray}
Q = \left[ Q' \right],  Q' \sim \N(q, \Sigma_q)
\end{eqnarray}
where the random variable $Q$ are generated from a standard continuous MVN $Q'$ and then rounded to the nearest integer. When the number of travelers is sufficiently large, the rounding error is negligible \citep{nakayama2016effect}. In this paper demand is  approximated by a continuous MVN. Because the demand is always non-negative, the MVN is truncated at zero. When the O-D demands are sufficiently large, the effect of the truncation is also negligible.

\item All travelers between the same O-D pair are homogeneous when making route choices. They perceive the probability distribution of travel time/cost of the network from their entire past experience, similar to \citet{nakayama2014consistent}. On each day, each atomic traveler independently and identically makes a route choice.

\item Each traveler makes her route choice decision solely based on the perception of the past traffic conditions, unlike the classical Wardrop UE or atomic Nash Equilibrium where each traveler is fully aware of others' choices. 

\item The day-to-day variation of observed path/link traffic flow is resulted from O-D demand variation, travelers' route choice variation and unknown errors. Unknown errors include measurement errors, and other unobserved error (such as traffic incident and non-recurrent traffic behavior). Unknown errors are link-based and have zero mean, which also follow MVN.
\begin{eqnarray}
E \sim \N(\vec{0}, \Sigma_e)
\end{eqnarray}

\item The observed link flow vector $\xx_i^o$ on the $i$-th day ($1 \leq i \leq n$) is an i.i.d sample from the probability distribution of the observed link flow vector $X_m^o$.

\end{enumerate}


\subsection{Review of the generalized statistical traffic assignment (GESTA) model }

As the underlying model, a generalized statistical traffic assignment model (GESTA) is proposed in \citet{GESTA}. We briefly review the GESTA, and then propose the ODE framework based on GESTA.

Consider a graph with $|N|$ nodes, $|A|$ links, $|K_q|$ O-D pairs and $|K_{rs}|$ paths for each O-D pair $r-s$, GESTA maps O-D demands $Q$ to link/path flows $X, F$ and path costs $C$ under a \textbf{statistical equilibrium}. From day to day, the recurrent traffic reaches a \textbf{statistical equilibrium} among $Q,C,F,X$ such that, provided with exogenous demand $Q$ (a random variable vector), $C,F,X$ follow stabilized probability distributions that can be represented by a stabilized route choice probability vector $p$ \citep{GESTA}. The \textbf{statistical equilibrium} is defined as follows,

\begin{definition} \citep{GESTA}
A transportation network is under a statistical equilibrium, if all travellers practice the following behavior: on each day, each traveler from origin $r$ to destination $s$ independently chooses a route $k$ with a deterministic probability $p_{rs}^k$. For a sufficient number of days, this choice behavior leads to a stabilized probability distribution of travel costs with parameters $\vartheta$. This stabilized probability distribution, in turn, results in the probabilities $p=\psi(\vartheta)$ where $\psi(\cdot)$ is a general route choice function.
\end{definition}

Note that the definition of statistical equilibrium is independent of time and describes the equilibrated condition. The definition indicates that the traffic flow/costs on each day are random and by no means identical from day to day. However, the probability distributions of traffic conditions over a course of a number of days (such as months or years) are stabilized. Based on the O-D demand $Q$, we express the probability distributions of $X, F, C$ all by the deterministic probability vector $p$, and construct a fixed point problem regarding $p$ to solve for $X, F, C$.

Note that the route choice model $\psi(\vartheta)$ can be generic. For instance, $\psi(\vartheta)$ can represent the classical UE where $\vartheta\doteq\mathbb{E}(C) = c$. If we use a random utility model as the route choice model, then the route choice probability becomes
\begin{equation}
p_{rs}^k = \Pr\left[C_{rs}^k \leq \min_{i \neq k} C_{rs}^i\right] \label{eq:choice}\\
\end{equation}
Each traveler perceives his/her stochastic costs on all possible paths between O-D pair $r-s$, and the probability of choosing path $k$ equals the probability of path costs outperforming costs of other paths. For this reason, $\psi(\vartheta)$ can also easily represent the Logit model and Probit model. However, Probit model would be the most convenient and natural representation. This is because $C$ is approximated by MVN under GESTA, and this is consistent with the normally distributed perceived travel cost under Probit. If the Logit model is adopted for $\psi(\vartheta)$, then one needs to assume that travelers' perception on travel time/cost would add a term that follows Gumbel distribution in addition to the cost mean $\mathbb{E}(C) = c$.

The hierarchical formulation of GESTA is represented in Equation~\ref{eq:multi}.
\begin{equation}
\begin{aligned}
\label{eq:multi}
Level~1: && X_m = X+ \epsilon_e  && \text{(Unknown Error)}\\
&&\epsilon_e \sim \N(\vec{0}, \Sigma_e)\\
Level~2: && X = \Delta F\\
&& F \sim \mathcal{MN}(\tilde{p}Q, \Sigma_f)  && \text{(Route choice variation)}\\
Level~3: && Q \sim \mathcal{N}(q, \Sigma_q)  && \text{(Demend variation)}
\end{aligned}
\end{equation}

Level $3$ represents the O-D demand variation. The Level $2$ formulation indicates that each traveler makes their route choice independently, thus the path flow vector $F$ follows a multinomial distribution. In Level $1$, the unknown errors are applied to the link flow vector. $X_m$ is the random variable implying the observations of link flow.

Define $\tilde{p}\doteq diag(p)B $ and $B$ is a transition matrix (with blocks of ones and zeros) defined in \citet{watling2002second,GESTA}. After some derivations, the path/link flow distributions can be obtained, which are stated in the following two propositions.

\begin{proposition}
\label{pro:gesta1}
The marginal distribution of $F$ can be approximated by,
$$F \sim \N (f, \Sigma_f) $$
where $f =\tilde{p} q$, $\Sigma_f =\Sigma_{f|q} + \tilde{p}\Sigma_{q}\tilde{p}^T$, $\Sigma_{f|q}$ is the covariance matrix of path flow conditional on O-D demands $Q=q$ in multinomial distribution, $\Sigma_{f|q} = \Sigma_{f|Q = q}$. The matrix $\Sigma_{f|Q}$ is built upon $p$, which can be found in \citet{GESTA}.
\end{proposition}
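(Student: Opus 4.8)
The plan is to pin down the first two moments of $F$ exactly by conditioning on the O-D demand vector $Q$, and then invoke a normal approximation to recover the distributional form. By the Level 2 construction of Equation~\ref{eq:multi}, each atomic traveler routes independently, so conditionally on a realized demand vector $Q$ the path flow $F$ is a stack of independent multinomial vectors, one per O-D pair, with conditional mean $\mathbb{E}(F\mid Q)=\tilde p Q$ and conditional covariance $\Var(F\mid Q)=\Sigma_{f|Q}$, the block-diagonal matrix whose $r$--$s$ block is the multinomial covariance on $Q_{rs}$ trials and therefore scales linearly in $Q_{rs}$. Here $\tilde p=\mathrm{diag}(p)B$, and the explicit form of $\Sigma_{f|Q}$ is the bookkeeping I would import directly from \citet{GESTA}.

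For the mean, the tower rule gives $f=\mathbb{E}(F)=\mathbb{E}(\mathbb{E}(F\mid Q))=\mathbb{E}(\tilde p Q)=\tilde p\,\mathbb{E}(Q)=\tilde p q$, using that $\tilde p$ is deterministic and that $\mathbb{E}(Q)=q$ up to a negligible rounding/truncation correction for large demand. For the covariance I would apply the law of total variance,
\[
\Sigma_f=\Var(F)=\mathbb{E}(\Var(F\mid Q))+\Var(\mathbb{E}(F\mid Q))=\mathbb{E}(\Sigma_{f|Q})+\Var(\tilde p Q).
\]
The second term equals $\tilde p\,\Sigma_q\,\tilde p^{T}$ by linearity of the variance operator under a deterministic linear map. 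For the first term, the key observation is that every entry of $\Sigma_{f|Q}$ is an affine --- in fact linear --- function of the components of $Q$ (each O-D block has the form $Q_{rs}(\mathrm{diag}(p_{rs})-p_{rs}p_{rs}^{T})$), so the map $Q\mapsto\Sigma_{f|Q}$ commutes with expectation and $\mathbb{E}(\Sigma_{f|Q})=\Sigma_{f|Q=\mathbb{E}(Q)}=\Sigma_{f|q}$. Adding the two pieces gives $\Sigma_f=\Sigma_{f|q}+\tilde p\,\Sigma_q\,\tilde p^{T}$.

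It remains to argue the normal form. Conditionally on $Q$, a multinomial vector on a large number of trials is well approximated by $\N(\tilde p Q,\Sigma_{f|Q})$ through the central limit theorem, which is where the large-demand hypothesis enters. Replacing the $Q$-dependent conditional covariance $\Sigma_{f|Q}$ by its mean $\Sigma_{f|q}$ --- accurate to leading order when the O-D demands are large and tightly concentrated --- one may write $F\approx\tilde p Q+\eta$ with $\eta\sim\N(\vec 0,\Sigma_{f|q})$ drawn independently of $Q$. Since $Q\sim\N(q,\Sigma_q)$ by Level 3, $F$ is then a sum of independent Gaussians, hence $F\sim\N(\tilde p q,\ \Sigma_{f|q}+\tilde p\Sigma_q\tilde p^{T})$, which is the asserted approximation.

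The main obstacle is precisely this last step: the conditional covariance depends on the conditioning variable, so $F$ is strictly a Gaussian \emph{variance mixture} rather than a Gaussian, and making the ``$\approx$'' rigorous needs a quantitative bound showing that the spread of $\Sigma_{f|Q}$ about $\Sigma_{f|q}$ is negligible beside the other variance contributions --- an argument resting on large O-D demand and a small coefficient of variation of $Q$. By contrast, the two moment identities above are exact under the Level 2--3 model, so the genuine approximation is confined to the distributional identification, which is why the statement reads ``can be approximated by''.
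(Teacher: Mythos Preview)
Your argument is correct: the tower property yields $f=\tilde p q$, the law of total variance together with the linearity of the multinomial covariance in $Q$ gives $\Sigma_f=\Sigma_{f|q}+\tilde p\Sigma_q\tilde p^{T}$ exactly, and the CLT for multinomials supplies the approximate normality. Your caveat that the genuinely approximate step is the distributional identification (a Gaussian variance mixture replaced by a Gaussian) is also well placed.

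As for comparison, note that the paper does not actually prove this proposition here. It is presented in the review subsection as a result imported from \citet{GESTA}, prefaced only by ``After some derivations, the path/link flow distributions can be obtained.'' So there is no in-paper proof to match against; your derivation is precisely the standard two-step (moments via total-expectation/total-variance, then normal approximation) that one would expect the cited reference to contain, and nothing in the paper suggests a different route.
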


\begin{proposition}
\label{pro:marginal}
The marginal distribution of $X$ and $X_m$ follows,
$$X \sim \N (x, \Sigma_x)$$
$$X_m \sim \N (x, \Sigma_x+ \Sigma_e)$$
where $x = \Delta f$, $\Sigma_x = \Delta \Sigma_{f} \Delta^T  $.
\end{proposition}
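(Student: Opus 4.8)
The plan is to obtain Proposition~\ref{pro:marginal} as an immediate corollary of Proposition~\ref{pro:gesta1} and the top two levels of the hierarchical model in Equation~\ref{eq:multi}, using only two facts: an affine image of a multivariate normal is multivariate normal, and a sum of independent multivariate normals is multivariate normal with means and covariances adding.

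First I would start from Proposition~\ref{pro:gesta1}, which gives the (approximate) marginal law $F \sim \N(f,\Sigma_f)$ with $f=\tilde p q$ and $\Sigma_f=\Sigma_{f|q}+\tilde p\Sigma_q\tilde p^T$. Level~2 of Equation~\ref{eq:multi} supplies the deterministic identity $X=\Delta F$ with $\Delta$ the fixed path/link incidence matrix, so $X$ is the affine (here, linear) image of $F$. Hence $X$ is (approximately) normal with $\mathbb{E}(X)=\Delta\,\mathbb{E}(F)=\Delta f\doteq x$ and $\mathrm{Cov}(X)=\Delta\,\mathrm{Cov}(F)\,\Delta^T=\Delta\Sigma_f\Delta^T\doteq\Sigma_x$, i.e.\ $X\sim\N(x,\Sigma_x)$. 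If $\Delta\Sigma_f\Delta^T$ is rank-deficient (fewer links than independent path-flow directions), $X$ is merely a degenerate normal supported on the column space of $\Delta$, which does not change the statement.

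Next I would handle $X_m$. Level~1 gives $X_m=X+\epsilon_e$ with $\epsilon_e\sim\N(\vec 0,\Sigma_e)$; by Assumption~4 the unknown error is a zero-mean, link-based Gaussian arising from measurement and non-recurrent effects, hence independent of the demand and route-choice randomness that generate $X$. Adding this independent normal to $X$ keeps the vector (approximately) normal, shifts the mean by $\vec 0$, and adds $\Sigma_e$ to the covariance, yielding $X_m\sim\N(x,\Sigma_x+\Sigma_e)$.

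The hard part, such as it is, is the word ``approximately'': $F$ is genuinely a compound object (a multinomial route-choice vector mixed over a truncated, rounded normal demand), so Proposition~\ref{pro:gesta1} already carries an approximation error, and one should check it is not inflated by the operations above. I would argue that applying the fixed linear map $\Delta$ and convolving with the independent error $\epsilon_e$ are non-expansive on distributions in the relevant metrics (total variation or a Wasserstein distance), so the Gaussian approximation for $X$ and $X_m$ holds to the same order as that for $F$ --- an order controlled by the large-demand, CLT-type regime assumed in Assumption~1 and established for GESTA in \citet{GESTA}. I would record this as a short remark rather than reproduce the bound, since the error is inherited directly from the cited GESTA analysis.
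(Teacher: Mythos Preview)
Your argument is correct and is exactly the natural derivation: apply the linear map $\Delta$ to the Gaussian $F$ from Proposition~\ref{pro:gesta1}, then add the independent Gaussian error $\epsilon_e$. Note, however, that the paper does not actually prove Proposition~\ref{pro:marginal}; both Propositions~\ref{pro:gesta1} and~\ref{pro:marginal} are stated without proof in the review subsection and are imported from the companion GESTA paper (the text says only ``after some derivations, the path/link flow distributions can be obtained, which are stated in the following two propositions''). So there is no in-paper proof to compare against, and your write-up is precisely the short derivation one would expect --- your remark about the approximation being inherited non-expansively through $\Delta$ and convolution is a nice touch, though strictly more than the paper itself supplies.
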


\subsection{Review of existing probabilistic ODE formulations}
\label{sec:rev}
In this subsection, we will review existing probabilistic ODE formulations and discuss the research gap. The maximum likelihood formulations proposed by \citet{vardi1996network, hazelton2000estimation} rely on the assumption that the O-D demand follows Poisson distributions that are independent among O-D pairs. The Poisson distributed assumption is actually quite restrictive since it explicitly adds a constraint between the mean and variance, often violated in practice \citep{wang2016two}. More importantly, the maximum likelihood formulation cannot be easily extended to the case of multivariate probability distributions for O-D demand, since the likelihood function will become intractable.

\citet{shao2014estimation, shao2015estimation} proposed a generalized least square formulation to estimate the mean and variance of O-D demand that follows the MVN. The formulation is a bi-level optimization problem. The upper level minimizes the difference between observed and estimated link flow mean/variance. The objective function can be written as,

\begin{equation}
\begin{array}{rrclcl}
\vspace{5pt}
\label{eq:shao}
\displaystyle \min_{q, \Sigma_q} & \multicolumn{3}{l}{\displaystyle \alpha_1f_1(x(q), x^o) + \alpha_2f_2(\Sigma_x(\Sigma_q), \Sigma_x^o)}\\
\end{array}
\end{equation}
where $\alpha_1$ and $\alpha_2$ are constant weights and $f_1(\cdot, \cdot), f_2(\cdot, \cdot)$ measures the difference between the observed and estimated link flow mean and covariance/variance, respectively. $x(q)$ implies that the mean of link flow is a function of O-D demand mean, whereas $\Sigma_x(\Sigma_q)$ implies that the variance/covariance matrix of link flow is a function of O-D demand variance/covariance matrix.

As discussed in the introduction section, Formulation~\ref{eq:shao} does not consider the route choice stochasticity (namely the day-to-day route choice variation) as a result of non-atomic equilibrium. Thus, $\Sigma_x$ is only dependent on the O-D variation and physical properties of the network. The derivative of the objective function can be derived analytically. Formulation~\ref{eq:shao} can be solved using gradient methods. However, if the route choice stochasticity is considered in link flow for atomic users, then $\Sigma_x$ is dependent on both O-D demand mean $q$ and variance $\Sigma_q$ by Propositions~\ref{pro:gesta1} and~\ref{pro:marginal}, the objective function becomes,
\begin{equation}
\begin{array}{rrclcl}
\vspace{5pt}
\displaystyle \min_{q, \Sigma_q} & \multicolumn{3}{l}{\displaystyle \alpha_1f_1(x(q), x^o) + \alpha_2f_2(\Sigma_x(q, \Sigma_q), \Sigma_x^o)}\\
\end{array}
\end{equation}
where both $x(q)$ and $\Sigma_x(q, \Sigma_q)$ are determined by GESTA \citep{GESTA}. Since the function $\Sigma_x(q, \Sigma_q)$ is complex, deriving its gradient is  challenging. Though gradient-free algorithms can be used to find the optimal solution, those methods can have a hard time being applied to large-scale networks. To effectively solve the new objective function subject to the statistical equilibrium based GESTA, this paper proposes the following iterative method.

\subsection{A novel iterative formulation for estimating probabilistic O-D demand}
Instead of estimating the mean and variance/covariance matrix of O-D demand simultaneously, we propose an Iterative Generalized Least Square (IGLS) framework to estimate mean and variance/covariance iteratively \citep{goldstein1986multilevel}. A large number of statistical algorithms are developed for IGLS. \citet{del1989unifying} shows that IGLS shares some properties with the Newton-Raphson algorithm, and IGLS can be used for solving Maximum Likelihood Estimator (MLE) and quasi-MLE problems.

IGLS framework mainly contains two sub-problems: estimating the O-D demand mean and estimating the variance/covariance matrix. IGLS runs the two sub-problems iteratively to update estimators $\hat{x}, \hat{f}, \hat{p}, \hat{q}, \hat{\Sigma}_q$. In the sub-problem of estimating the mean vector $q$, $\hat{\Sigma}_q$ and $\hat{\Sigma}_x$ are seen as given.  In the sub-problem of estimating the variance-covariance matrix $\Sigma_q$, $\hat{q}$ is seen as given. As we will show later, the link flow covariance $\Sigma_x$ follows the Wishart distribution. We then formulate the ODE as an approximated maximum likelihood estimator of the covariance matrix with Lasso regularization. 

Compared to the framework by \citet{shao2014estimation, shao2015estimation}, we argue that the IGLS-based formulation is statistical interpretable and more computationally efficient to solve on large-scale networks. As shown in \citet{del1989unifying}, each iteration in the IGLS resembles one gradient descent step in the Newton-Raphson algorithm. Since the convergence rate of the Newton-Raphson algorithm is quadratic, a good convergence rate can be expected for IGLS. Later, we also show that the IGLS framework can theoretically provide a better performance in terms of the solution algorithm.

Before detailed models for each of the two sub-problems are presented, we first discuss the stopping criterion. Stoping criterion is needed for three processes in the IGLS framework: estimating mean, estimating O-D variance/covariance matrix and the overall IGLS iteration. For the two sub-problems, since both optimization problems are convex (as we will show later), the stopping criterion can be easily defined \citep{boyd2004convex}. For the stopping critera for the overall IGLS iteration, note that our ultimate goal is to estimate the probability distribution of the O-D demand $Q$.  We keep track of the discrepancy between the mean and variance/covariance matrix along the iterations. If the discrepancy is sufficiently small, then the IGLS iteration can be stopped.

\begin{definition}[Stopping Criterion for IGLS]
If the probability distribution of $Q$ is estimated as $\N(\hat{q}^+, \hat{\Sigma_q}^+)$ following the estimator from a previous iteration $\N(q, \Sigma_q)$, define
\begin{eqnarray}
\tau &=& D((\hat{q}^+, \hat{\Sigma}_q^+)^T, (\hat{q}, \hat{\Sigma}_q)^T)
\end{eqnarray}
where $D(\cdot, \cdot)$ is a distance measure between the two estimators. The IGLS iteration terminates when $\tau$ is sufficiently small.
\end{definition}

There are a large number of candidates for the choice of $D(\cdot, \cdot)$. In this paper we choose two distance functions: Hellinger distance and Kullback-Leibler distance (KL distance), computed as the following for any two MVNs $\N(\mu_1, \Sigma_1), \N(\mu_2, \Sigma_2)$ with the same dimension $d$,
\begin{eqnarray}
D_{H}((\mu_1, \Sigma_1)^T, (\mu_2, \Sigma_2)^T) &=& 1 - \frac{|\Sigma_1|^{\frac{1}{4}}  |\Sigma_2|^{\frac{1}{4}}}{\left|\frac{1}{2} \Sigma_1 + \frac{1}{2} \Sigma_2\right|^{\frac{1}{2}}} \exp\left(-\frac{1}{8} (\mu_2- \mu_1)^T \left(\frac{1}{2} \Sigma_1 + \frac{1}{2} \Sigma_2\right)^{-1} (\mu_2 - \mu_1)\right)\nonumber\\
\\
D_{KL}((\mu_1, \Sigma_1)^T, (\mu_2, \Sigma_2)^T) &=& \frac{1}{2} \left(\log \frac{|\Sigma_2|}{\Sigma_1}  - d + \text{tr}\left(\Sigma_2^{-1} \Sigma_1\right) + (\mu_2 - \mu_1)^T \Sigma_2^{-1}(\mu_2 - \mu_1)\right)
\end{eqnarray}

\subsection{Estimating the mean of O-D demand}

In this sub-problem, we estimate O-D demand mean $q$, assuming the estimated variance/covariance of the link flow and O-D demand $\hat\Sigma_x,\hat\Sigma_q$ is given. The estimated variance/covariance of the link flow on those observed links $\hat\Sigma_x^o$ is also given. 

\subsubsection{Estimating probabilistic link flow on observed links}

Before estimating the O-D demand mean, we first estimate the mean of link flow on observed links, $x^o=\mathbb{E} (X^o)$, as an intermediate step from the observed data $\xx_i^o$ to the unknown demand mean $q$. If the unknown errors $\Sigma_e$ can be calibrated exogenously, the probability distribution of $X_m$ can be determined by $\N(x, \Sigma_x+\Sigma_e)$ given the probability distribution of $X$. In most cases, $\Sigma_e$ is never known, and it represents the errors that cannot be explained by the model. We then estimate $X = X_m$ with the hope that $\Sigma_e = \vec{0}$. After the estimation process, the mismatch between $\hat{X}$ and the data can be viewed as the unknown errors $E$. This will be discussed later.

The reason why we start from estimating $x^o$ is that in the classical deterministic ODE, $\norm{\Delta p q - \hat{x}^o}_2^2$ is used as the objective function, where $\hat{x}^o$ represents the daily average of traffic counts. To interpret and understand the probabilistic ODE, we need to rigorously estimate $\hat{x}^o$ before constructing an objective function. 

Given $\hat\Sigma_x^o$ to approximate $\Sigma_x^o$, the likelihood function of observed link flow $\xx_i^o$ can be constructed as,
\begin{eqnarray}
\label{eq:link_mle}
l(\xx^o) =\prod_{i=1}^n \frac{1}{\sqrt{(2\pi)^{|A|} |\Sigma_{x}^o|}} \exp{\left( -\frac{1}{2} (x^o - \xx_i^o)^T\left(\hat\Sigma_x^o \right)^{-1}(x^o - \xx_i^o)\right)}
\end{eqnarray}

A Maximum Likelihood Estimator (MLE) is to estimate $x^o$ to maximize the likelihood,

\begin{equation}
\label{eq:tralinkmean}
\begin{array}{rrclcl}
\vspace{5pt}
\displaystyle \min_{x^o} & \multicolumn{3}{l}{\displaystyle \frac{1}{2}  \sn  (x^o - \xx_i^o)^T \left(\hat\Sigma_x^o \right)^{-1}(x^o - \xx_i^o)} \\
\textrm{s.t.} & x^o & \geq & 0 \\
\end{array}
\end{equation}


\begin{proposition}[MLE]
\label{pro:mle}
Given the data set $\xx^o$ and link flow covariance matrix $\Sigma_x^o$ on observed links, the estimator is $\hat{x}^o = \frac{1}{n}   \xx_i^o$.
\end{proposition}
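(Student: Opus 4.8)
The plan is to recognize Formulation~\eqref{eq:tralinkmean} as the minimization of a strictly convex quadratic over the nonnegative orthant, solve the unconstrained problem in closed form by first-order conditions, and then check that the resulting point — the sample mean — is already feasible, so that the inequality constraint is inactive and the unconstrained minimizer is the constrained one.

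First I would note that, being a covariance matrix, $\hat\Sigma_x^o$ is symmetric positive definite, hence so is $(\hat\Sigma_x^o)^{-1}$. Therefore the objective $g(x^o) = \frac{1}{2}\sum_{i=1}^{n}(x^o-\xx_i^o)^T(\hat\Sigma_x^o)^{-1}(x^o-\xx_i^o)$ is a finite sum of convex quadratic forms and is itself strictly convex on $\mathbb{R}^{|A^o|}$, so it has a unique global minimizer. I would also remark that maximizing the likelihood~\eqref{eq:link_mle} is equivalent to minimizing $-\log l(\xx^o)$, which up to an additive constant not depending on $x^o$ is exactly $g(x^o)$; hence the MLE and the optimal solution of~\eqref{eq:tralinkmean} coincide, justifying the ``MLE'' label.

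Next I would compute the gradient, $\nabla g(x^o) = \sum_{i=1}^{n}(\hat\Sigma_x^o)^{-1}(x^o-\xx_i^o) = (\hat\Sigma_x^o)^{-1}\bigl(n\,x^o - \sum_{i=1}^{n}\xx_i^o\bigr)$, and set it to zero; since $(\hat\Sigma_x^o)^{-1}$ is invertible this forces $n\,x^o = \sum_{i=1}^{n}\xx_i^o$, i.e.\ the unique unconstrained minimizer is $\frac{1}{n}\sum_{i=1}^{n}\xx_i^o$. Finally I would dispose of the constraint $x^o \ge 0$: each observed link-flow vector $\xx_i^o$ is a vector of traffic counts and is therefore componentwise nonnegative, so their arithmetic mean is nonnegative as well and lies in the feasible region. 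A feasible point that minimizes the objective over the whole space also minimizes it over any subset containing that point, so $\hat x^o = \frac{1}{n}\sum_{i=1}^{n}\xx_i^o$ solves~\eqref{eq:tralinkmean}, which is the claim.

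The argument is essentially routine; the only place that needs a deliberate word is the last step — verifying that the positivity constraint is inactive at the optimum — and that rests on the fact that the data $\xx_i^o$ are nonnegative counts. (Were the data allowed to be arbitrary, the constraint could bind and one would have to invoke the KKT conditions, yielding a metric projection of the sample mean onto the nonnegative orthant rather than the sample mean itself; under the stated assumptions this complication does not occur.)
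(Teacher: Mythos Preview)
Your argument is correct and follows essentially the same route as the paper: both observe that the objective is strictly convex (since $(\hat\Sigma_x^o)^{-1}\succ 0$), identify the unconstrained stationary point as the sample mean, and then note that nonnegativity of the data renders the constraint inactive. The only cosmetic difference is that the paper phrases the last step via the KKT conditions (showing the multipliers vanish), whereas you argue directly that a feasible unconstrained minimizer is automatically the constrained minimizer; the content is the same.
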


The detailed analysis of the optimization problem can be found in \ref{ap:ob}. The estimator $\hat{x}^o$ can be derived in a closed form, $\hat{x}^o = \frac{1}{n}   \xx_i^o$. The proof can also be found in \ref{ap:ob}. It is no surprise that the estimator of $x^o$ is the average of observed data $\xx_i^o$. This is consistent with the formulation of deterministic ODE that searches the best $\hat{q}$ to minimize the discrepancy between estimated link flow derived from $\hat{q}$ and daily average observed counts (namely $\frac{1}{n}   \xx_i^o$), both for those observed links.

Having the closed formulation of $\hat{x}^o$, we can further derive the probability distribution of $\hat{x}^o$.

\begin{proposition}
If the observed link flow $\xx_i^o$ i.i.d follows the probability distribution of $\N(x^o, \Sigma_x^o)$, then $\hat{x}^o$ follows:
\begin{eqnarray}
\hat{x}^o \sim \N(x^o, \frac{1}{n} \Sigma^o_x)
\end{eqnarray}
\end{proposition}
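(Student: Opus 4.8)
The plan is to combine the closed-form expression for $\hat{x}^o$ from Proposition~\ref{pro:mle} with the elementary fact that linear images of independent Gaussian vectors are Gaussian. First I would invoke Proposition~\ref{pro:mle} to write $\hat{x}^o = \frac{1}{n}\sum_{i=1}^n \xx_i^o$. By Assumption~5 the vectors $\xx_1^o,\dots,\xx_n^o$ are i.i.d.\ draws from $\N(x^o,\Sigma_x^o)$, so their concatenation is an $n|A^o|$-dimensional Gaussian vector with block-diagonal covariance $\mathrm{diag}(\Sigma_x^o,\dots,\Sigma_x^o)$; since $\hat{x}^o$ is a fixed linear function of this concatenation, it is itself multivariate normal. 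Equivalently, one can compute the characteristic function $\mathbb{E}[\exp(\mathrm{i}\,t^T\hat{x}^o)] = \prod_{i=1}^n \exp\!\big(\mathrm{i}\,(t/n)^T x^o - \tfrac{1}{2}(t/n)^T\Sigma_x^o(t/n)\big) = \exp\!\big(\mathrm{i}\,t^T x^o - \tfrac{1}{2n}\,t^T\Sigma_x^o t\big)$ and recognize it as that of $\N(x^o,\tfrac{1}{n}\Sigma_x^o)$, which simultaneously settles Gaussianity and identifies both parameters.

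Next I would pin down the mean and covariance separately, which is the cleanest route. Linearity of expectation gives $\mathbb{E}(\hat{x}^o) = \frac{1}{n}\sum_{i=1}^n \mathbb{E}(\xx_i^o) = x^o$. For the covariance, independence of the $\xx_i^o$ eliminates every cross term, leaving $\mathrm{Cov}(\hat{x}^o) = \frac{1}{n^2}\sum_{i=1}^n \mathrm{Cov}(\xx_i^o) = \frac{1}{n^2}\cdot n\,\Sigma_x^o = \frac{1}{n}\Sigma_x^o$. Together with the Gaussianity established above, this yields $\hat{x}^o\sim\N\!\big(x^o,\tfrac{1}{n}\Sigma_x^o\big)$.

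There is no genuine obstacle here: this is simply the sampling distribution of the sample mean under a Gaussian model, and Assumption~5 supplies exactly the i.i.d.\ structure the argument needs. The only point I would flag in the write-up is that the demand (and hence link-flow) normality is, strictly speaking, a truncated/rounded approximation; consistent with the standing assumption that truncation and rounding are negligible for sufficiently large flows, I treat $\xx_i^o$ as exact Gaussian samples, so that both the closed form from Proposition~\ref{pro:mle} and the distributional identity above hold as stated.
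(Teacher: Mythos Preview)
Your proof is correct and is precisely the standard argument the paper relies on: the paper does not spell out a proof for this proposition, simply noting that once $\hat{x}^o=\frac{1}{n}\sum_i \xx_i^o$ is established, its distribution ``can be further derived.'' Your write-up fills in exactly this omitted step in the expected way.
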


\subsubsection{Estimating demand mean on an uncongested network}
For an uncongested network, the travel costs are dependent on free-flow speed and road length, and independent of link/path flow. The route choice probability $\tilde{p}$ can be calculated exogenously. Given the route choice probability is known, we estimate the demand mean $q$ by,

\begin{equation}
\label{eq:traOD}
\begin{array}{rrclcl}
\vspace{5pt}
\displaystyle \min_{q} & \multicolumn{3}{l}{\displaystyle  n \left(\Delta^o \tilde{p} q - \hat{x}^o\right)^T \left(\hat\Sigma_x^o \right)^{-1} \left(\Delta^o \tilde{p} q - \hat{x}^o\right)}\\
\textrm{s.t.} & q & \geq & 0 \\
\end{array}
\end{equation}

Formulation \ref{eq:traOD} can be viewed as a generalized least square (GLS) formulation when regarding $n \left(\Sigma_x^o\right)^{-1}$ as the weight matrix in  \citep{robillard1975estimating, cascetta1984estimation}, which minimizes the weighted discrepancy of the mean link flow on observed links between from data $\hat{x}^o$ and from the $q$ estimator, $x = \Delta \tilde{p} q$. The above formulation can also be statistically interpreted as a maximum likelihood estimator (MLE) of the O-D demand $q$, provided with the probability distribution of the mean link flow estimator $\hat{x}^o$.

A deterministic ODE often uses Formulation~\ref{eq:simple} where the diagonal matrix $\Theta_x$ denotes the confident level for each of the observed link flow.

\begin{equation}
\label{eq:simple}
\begin{array}{rrclcl}
\vspace{5pt}
\displaystyle \min_{q} & \multicolumn{3}{l}{\displaystyle  \norm{\Theta_x(\Delta^o \tilde{p} q - \hat{x}^o)}_2^2}\\
\textrm{s.t.} & q & \geq & 0 \\
\end{array}
\end{equation}

This formulation is similar to the Formulation \ref{eq:traOD} if the row vectors of $\Delta^o$ are fully ranked, except for that the weights on each link flow observation can differ. In practice, when $\Sigma_x^o$ cannot be derived directly (e.g., due to insufficient data points), we can use the simplified Formulation \ref{eq:simple} to estimate O-D demand mean $q$.


Another issue for both Formulations~\ref{eq:traOD} and \ref{eq:simple} is that the optimal $\hat{q}$ may not be unique \citep{yang1994equilibrium}. To address the non-uniqueness issue, history O-D information is usually employed. An extended generalized least square can be built by assuming that history O-D information can be acquired and is independent of observed flow given $q$, then we have following formulation.

If the historical O-D demand mean and covariance matrix is given as $q^H, \Sigma_q^H$, respectively, we identify the unique solution $q$ by,
\begin{equation}
\label{eq:ODu1}
\begin{array}{rrclcl}
\vspace{5pt}
\displaystyle \min_{q} & \multicolumn{3}{l}{\displaystyle n \left(\Delta^o \tilde{p} q - \hat{x}^o\right)^T \left(\hat\Sigma_x^o \right)^{-1} \left(\Delta^o \tilde{p} q - \hat{x}^o\right) +  (q^H - q)^T \left(\Sigma_q^{H}\right)^{-1} (q^H - q)}\\
\textrm{s.t.} & q & \geq & 0 \\
\end{array}
\end{equation}
where the historical O-D covariance matrix $\Sigma_q^H$ is usually unknown. The identity matrix is used as an alternative choice.

\citet{cascetta1984estimation} proposed a similar formulation to Formulation~\ref{eq:ODu1}, but the derivation of the inverse of observed link flow covariance matrix ${\Sigma_x^o}$ is unclear. In our formulation,  ${(\Sigma_x^o})^{-1}$ can be derived analytically given $\Sigma_q$ using GESTA.

Based on Formulation~\ref{eq:ODu1}, the number of observed data $n$ and the quality of $q^H$ are two major factors affecting the accuracy of estimated O-D demand $\hat{q}$. Several remarks are made regarding $n$ and $q^H$.
\begin{remark}
Incrementing data quantity does not address the non-uniqueness issue of Formulation~\ref{eq:ODu1}.
\end{remark}
If $\Delta^o$ are fully ranked, Formulation~\ref{eq:traOD} has a unique solution for any $n \geq 1$. However, when $\Delta^o$ is not fully ranked, Formulation~\ref{eq:traOD} has multiple optimal solutions regardless the value of $n$. 
\begin{remark}
Incrementing data quantity increases the accuracy of estimated O-D demand $\hat{q}$.
\end{remark}
To simplify the discussion, we assume $\Delta^o $ is fully ranked, namely $\Delta^o \tilde{p}$ is invertible. If no observation is obtained, $n=0$, then the best estimation of $q$ would be $q^H$. This implies that the accuracy of $\hat{q}$ is solely dependent on the accuracy of $q^H$. When observations are available, the estimated O-D demand $\hat{q}$ is the weighted average of $\left(\Delta^o \tilde{p}\right)^{-1} \hat{x^o}$ and $q^H$. As $n$ increases, $\hat{q}$ becomes close to $\left(\Delta^o \tilde{p}\right)^{-1} \hat{x^o}$, otherwise to $q^H$. If we have infinite data, then $\hat{q} = \left(\Delta^o \tilde{p}\right)^{-1} \hat{x^o}$. Thus, $\Var(\hat{q}) = \vec{0}$ due to the central limit theorem (CLT) when $n=\infty$, implying $\hat{q}$ is perfectly accurate. In summary, incrementing data quantity reduces $\Var(\hat{q})$, from relying solely on the accuracy of $q^H$ ($n=0$) to being perfectly accurate ($n \to \infty$).

\begin{remark}
Formulation \ref{eq:ODu1} is not a Maximum a Posterior (MAP) estimator.
\end{remark}
Different from \citet{menon2015fine},  Formulation \ref{eq:ODu1} cannot be interpreted as a Maximum a Posterior (MAP) estimator. We first build an MLE according to the probability distribution of $\hat{x^o}$, so we need the prior of $\hat{x^o}$ to build the MAP estimaor. However, the information used in Formulation \ref{eq:ODu1} is the historical O-D $q^H$ rather than historical link flow $x^H$. Thus, formulation \ref{eq:ODu1} is not an MAP estimator. Formulation \ref{eq:ODu1} can only be interpreted as a GLS model when the history O-D is independent of observed traffic flow given $q$. Similar arguments can also be found in \citet{yang2015data}.

\begin{remark}
If $q^H \neq q$, the estimator of O-D demand mean $\hat{q}$ from Formulation \ref{eq:ODu1} is biased.
\end{remark}

To summarize, we conclude that data quantity helps increase the accuracy of the estimated O-D demand mean $\hat{q}$ while historical O-D information $q^H$ addresses the non-uniqueness issue. One subtle issue is that even if we observe a large number of data (on a large number of days), uniquely determining $\hat{q}$ may still be impossible.

\subsubsection{Estimating demand mean on a congested network}
In congested networks, the route choice probability $\tilde{p}$ is endogenously determined by GESTA. The probabilistic ODE needs to estimate the demand mean $q$ and the route choice probability $p$ simultaneously. Instead of estimating both separately, we can estimate the path flow $f  = \tilde{p}q$, analogous to the Path Flow Estimator (PFE) in the deterministic ODE settings.

Because the dimension of path flow $f$ is greater than the O-D demand $q$, using historical O-D information does not necessarily address the non-uniqueness issue in this case. In this subsection, we assume the dimension of $f$ is greater than that of link flow on observed locations $x^o$. If it is not the case, then we can always enlarge the path set by generating more paths or shrinking the observation size with network consolidation. The basic formulation is proposed in Formulation~\ref{eq:TraF},

\begin{equation}
\label{eq:TraF}
\begin{array}{rrclcl}
\vspace{5pt}
\displaystyle \min_{f} & \multicolumn{3}{l}{\displaystyle n \left(\Delta^o f - \hat{x}^o\right)^T \left(\hat\Sigma_x^o\right)^{-1} \left(\Delta^o f - \hat{x}^o\right) +   (q^H - Mf)^T \left(\Sigma_q^{H}\right)^{-1} (q^H - Mf)}\\
\textrm{s.t.} & f & \geq & 0 \\
\end{array}
\end{equation}


The fundamental problem resulting the non-uniqueness issue is that the number of  observed links is far smaller than the total number of paths. Even though all links are covered by surveillance, the path flow estimator $\hat{f}$ can still be non-unique. However, it is possible to restrict the feasible set of path flow $f$ by route choice models (namely equilibrium conditions), such as GESTA. The restricted formulation is presented in \ref{eq:ODwithEQ}.
\begin{equation}
\label{eq:ODwithEQ}
\begin{array}{rrclcl}
\vspace{5pt}
\displaystyle \min_{f} & \multicolumn{3}{l}{\displaystyle  n \left(\Delta^o f - \hat{x}^o\right)^T \left(\hat\Sigma_x^o\right)^{-1} \left(\Delta^o f - \hat{x}^o\right) +   (q^H - Mf)^T \left(\Sigma_q^{H}\right)^{-1} (q^H - Mf)} \\
\textrm{s.t.} & f & \in & \Phi^+ \\
\end{array}
\end{equation}
Where $\Phi^+$ is the feasible set of $f$. We adopt GESTA to model the traffic conditions and route choices since it generally works with any specific route choice models. Here we demonstrate the idea using deterministic UE and Logit/Probit-based SUE as the route choice model.
\begin{enumerate}[label=\roman*)]
\item UE-based GESTA\\
UE can be formulated as an optimization program to minimize $Z_1(f)$ \citep{sheffi1985urban},
\begin{equation}
\begin{array}{rrclcl}
\vspace{5pt}
\displaystyle Z_1(f) =  & \multicolumn{3}{l}{\displaystyle  \frac{1}{\Theta} \sum_{a} \int_{0}^{x_a} t_a(w) dw}\\
\textrm{\text{where}} & x & = & \Delta f \\
 & f & \geq & 0 \\
\end{array}
\end{equation}
Then $\Phi_{\text{UE}}^+$ is defined as:
\begin{eqnarray}
\Phi_{\text{UE}}^+ = \{f_1 | Z_1(f_1) \leq Z_1(f_2), \forall f_2 \geq 0 \text{ such that } M f_1 = M f_2\}
\end{eqnarray}
$\Phi_{\text{UE}}^+$ can also be written as a link-based or path-based variational inequality formulation \citep{smith1979existence}. Formulation \ref{eq:ODwithEQ} under UE constraints is known as Mathematical Programming with Equilibrium Constrain (MPEC) \citep{luo1996mathematical}.

\item Logit-based GESTA\\
In the Logit-based GESTA, travelers’ perception of travel costs is assumed to follow Gumbel distribution. The variance/covariance of the path cost $C$ is not considered. \citet{fisk1980some,janson1993most} cast the Logit model to its dual form, a convex optimization problem that minimizes the following objective function:
\begin{equation}
\begin{array}{rrclcl}
\vspace{5pt}
\displaystyle Z_2(f) =  & \multicolumn{3}{l}{\displaystyle  \frac{1}{\Theta} \sum_{ts} \sum_{k} f_{rs}^k \log (f_{rs}^k)}\\
\textrm{\text{where}} & f & \geq & 0 \\
\end{array}
\end{equation}
Then $\Phi^+_\text{Logit}$ is defined as:
\begin{eqnarray}
\Phi^+_\text{Logit} = \{f_1 | Z_2(f_1) \leq Z_2(f_2), \forall f_2\geq 0 \text{ such that } M f_1 = M f_2\}
\end{eqnarray}

\item Probit-based GESTA\\
In the Probit-based GESTA, travelers' perception errors follow Normal distribution \citep{daganzo1977multinomial}, as part of the probability distribution of path cost $C$. There does not exist a explicit form on $\Phi^+_\text{Probit}$. Any pair of $(p, f)$ satisfying the Probit route choice model are in $\Phi^+_\text{Probit}$. The details of Probit choice model can be found in \citet{daganzo1977multinomial, sheffi1985urban}.

\end{enumerate}

Formulation \ref{eq:ODwithEQ} is also known as the bi-level formulation of ODE with many existing studies \citep{nguyen1977estimating, fisk1984game, yang1992estimation, yang1995heuristic}. Since the solution uniqueness for the bi-level formulations varies by route choice models, we discuss them in Section \ref{sec:analysis}. Other properties of Formulation \ref{eq:ODwithEQ} are discussed in the following remarks.

\begin{remark}
Formulation \ref{eq:ODwithEQ} is non-convex.
\end{remark}
Since $\Phi^+$ is clearly not a convex set regardless of the route choice models, Formulation \ref{eq:ODwithEQ} is not convex. A sensitivity-based algorithm by \citet{josefsson2007sensitivity} and a heuristic algorithm by \citet{yang1995heuristic} are commonly used to solve for it. In addition, \citet{nie2010relaxation} relaxes the UE-based ODE to a one-level optimization problem, enhanced by \citet{shen2012new} with a convex relaxation program on a one-level optimization problem.

\begin{remark}
Logit-based SUE can be approximated by a specific Probit-based SUE.
\end{remark}
In the Logit-based SUE, the only parameter for the route choice model is the dispersion factor $\Theta$. Given $\Theta$, a Probit model with a diagonal path cost variance matrix can be used to approximate the Logit model. Details regarding the transformation can be found in \citet{greene2003econometric}.

\subsection{Estimating the variance/covariance matrix of O-D demand}
To estimate O-D demand variance and covariance matrix, we assume the link flow mean $x$ and path flow mean $f$ are provided and known. Consequently, the route choice probability $p$ and O-D demand mean $q$ are also known. We find an MLE to estimate the $\Sigma_q$. We first present the basic formulation to estimate $\Sigma_q$ given that  $\Sigma_x$ follows the Wishart distribution \citep{wishart1928generalised}. Due to the high dimension of $\Sigma_q$, Lasso \citep{tibshirani1996regression} regularized formulation is proposed to search for a sparse estimation of $\Sigma_q$ that makes trade off between variance and bias.

\subsubsection{Basic formulation}
First we define the empirical covariance matrix of the observed link flow to be $S_x^o = \frac{1}{n} \sn (\xx_i^o - \bar{\xx}^o) (\xx_i^o - \bar{\xx}^o)^T$, which is the maximum likelihood estimator of covariance matrix. $\bar{\xx}^o$ is the averaged observed link flow, $\bar{\xx}^o = \frac{1}{n} \sn \xx_i^o$. Note $S_x^o$ is different from the sample covariance matrix $P_x^o=\frac{1}{n-1} \sn (\xx_i^o - \bar{\xx}^o) (\xx_i^o - \bar{\xx}^o)^T$. Since the link flow variance-covariance matrix follows the Wishart distribution, a maximum likelihood estimator can be built to solve for $\Sigma_q$.
\begin{proposition}
Given the variance-covariate matrix for observed link flow $S_x^o$, the maximum likelihood estimator of $\Sigma_q$ is
\begin{equation}
\label{eq:TraS}
\begin{array}{rrclcl}
\vspace{5pt}
\displaystyle \max_{\Sigma_q} & \multicolumn{3}{l}{\displaystyle   \log \det ( \left(\Sigma_x^o\right)^{-1}) - \text{trace} (S_x^o \left(\Sigma_x^o\right)^{-1})} \\
\textrm{s.t.} & \Sigma_x^o & = & \Delta^o \Sigma_{f|q}\left( \Delta^o \right)^T  +\Delta^o \tilde{p} \Sigma_q \tilde{p}^T  \left(\Delta^o\right)^T \\
& \Sigma_q & \succeq & 0
\end{array}
\end{equation}
\end{proposition}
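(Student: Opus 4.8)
The plan is to recognize Formulation~\ref{eq:TraS} as the concentrated (profile) Gaussian log-likelihood of the observed link-flow sample, re-expressed in the GESTA parametrization of $\Sigma_x^o$ by $\Sigma_q$. I would carry this out in three steps: first write the likelihood of $\xx_1^o,\dots,\xx_n^o$; then profile out the nuisance mean $x^o$; then substitute the affine map $\Sigma_q\mapsto\Sigma_x^o$ coming from Propositions~\ref{pro:gesta1}--\ref{pro:marginal} and appeal to invariance of the maximum-likelihood estimator under reparametrization, the constraint $\Sigma_q\succeq 0$ being exactly what restricts the search to admissible covariance matrices. (Throughout, ``likelihood'' means the likelihood of the multivariate-normal model used to approximate the GESTA marginals, so the resulting estimator is an approximate MLE in the sense already noted above.)

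For the first step, the i.i.d.\ sampling assumption on $\xx_i^o$ together with Proposition~\ref{pro:marginal} gives $\xx_i^o\sim\N(x^o,\Sigma_x^o)$ independently, so up to an additive constant the joint log-likelihood is
\[
\ell(x^o,\Sigma_x^o)\;=\;-\frac n2\log\det\Sigma_x^o\;-\;\frac12\sum_{i=1}^{n}(\xx_i^o-x^o)^T(\Sigma_x^o)^{-1}(\xx_i^o-x^o).
\]
For the second step, expanding around $\bar{\xx}^o$ shows the quadratic sum equals $\sum_i(\xx_i^o-\bar{\xx}^o)^T(\Sigma_x^o)^{-1}(\xx_i^o-\bar{\xx}^o)+n(\bar{\xx}^o-x^o)^T(\Sigma_x^o)^{-1}(\bar{\xx}^o-x^o)$, so it is minimized at $x^o=\bar{\xx}^o$ (consistent with Proposition~\ref{pro:mle}); substituting this value, using $\sum_i(\xx_i^o-\bar{\xx}^o)(\xx_i^o-\bar{\xx}^o)^T=nS_x^o$ and the cyclic property of the trace, the concentrated log-likelihood becomes $-\frac n2\big(\log\det\Sigma_x^o+\text{trace}(S_x^o(\Sigma_x^o)^{-1})\big)$ plus a constant; dividing by $n/2>0$ and writing $\log\det((\Sigma_x^o)^{-1})=-\log\det\Sigma_x^o$ gives precisely the objective of Formulation~\ref{eq:TraS}. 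For the third step, since $q$, $p$ and $f$ (hence $\tilde p$ and $\Sigma_{f|q}$) are treated as given in this subproblem, Propositions~\ref{pro:gesta1} and~\ref{pro:marginal} restricted to the observed links yield $\Sigma_x^o=\Delta^o\Sigma_{f|q}(\Delta^o)^T+\Delta^o\tilde p\,\Sigma_q\,\tilde p^T(\Delta^o)^T$, which is exactly the equality constraint; every $\Sigma_q\succeq 0$ produces an admissible $\Sigma_x^o$, and by invariance the $\Sigma_q$ that maximizes the concentrated likelihood over this parametrized family is the MLE of $\Sigma_q$.

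I expect the principal obstacle to lie in the second step, specifically in being careful about which empirical matrix appears and with what normalization: profiling out the mean is precisely what makes the $1/n$-normalized $S_x^o$ (the Gaussian MLE of $\Sigma_x^o$), rather than the $1/(n-1)$-normalized sample matrix $P_x^o$, show up in the objective --- even though $nS_x^o$ itself follows a Wishart law with only $n-1$ degrees of freedom; a naive argument that just maximizes the Wishart density of $S_x^o$ would produce the wrong coefficient on $\log\det\Sigma_x^o$ unless the likelihood contribution of $\bar{\xx}^o\sim\N(x^o,\tfrac1n\Sigma_x^o)$ is also accounted for. A secondary technical point is to verify that $\Sigma_x^o\succ 0$ on the feasible set so that $\log\det((\Sigma_x^o)^{-1})$ is finite; this follows from $\Sigma_{f|q}$ being positive definite in the GESTA construction together with $\Delta^o$ acting injectively on its range, and is worth recording as a standing assumption. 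Finally, the map $\Sigma_q\mapsto\Sigma_x^o$ need not be injective, so the MLE of $\Sigma_q$ may be a solution set rather than a single matrix; this does not affect the proposition, which only asserts that Formulation~\ref{eq:TraS} characterizes it.
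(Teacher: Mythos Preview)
Your proposal is correct and in fact more complete than what the paper offers. The paper does not give a proof of this proposition; it merely remarks, before stating it, that ``the link flow variance-covariance matrix follows the Wishart distribution, a maximum likelihood estimator can be built to solve for $\Sigma_q$,'' and afterwards notes that the equality constraint comes from Propositions~\ref{pro:gesta1} and~\ref{pro:marginal}.

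The only genuine methodological difference is the one you already flagged: the paper gestures at the Wishart density of the empirical covariance as the source of the objective, whereas you derive it by profiling the full Gaussian likelihood over $x^o$. Your route is the cleaner one here, because it explains without ambiguity why the $1/n$-normalized $S_x^o$ (rather than $P_x^o$) appears and why the coefficient in front of $\log\det$ matches that in front of the trace; as you note, a bare Wishart argument with $n-1$ degrees of freedom would give a mismatched coefficient unless the independent $\bar{\xx}^o$ factor is folded back in. Your remarks on the positive-definiteness of $\Sigma_x^o$ and on the possible non-injectivity of $\Sigma_q\mapsto\Sigma_x^o$ are also well taken and go beyond anything the paper states for this proposition.
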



The first constraint in Formulation~\ref{eq:TraS} is obtained from GESTA through Propositions~\ref{pro:gesta1} and \ref{pro:marginal}.
The convexity of Formulation~\ref{eq:TraS} depends on the rank of $\Delta^o$. If $\Delta^o$ is fully ranked, Formulation~\ref{eq:TraS} is non-convex. But if $\Delta^o$ is not fully ranked, then we can first find the optimal $\Sigma_x$ for the objective function, and then solve for  $\Sigma_x = \Delta \Sigma_{f|q} \Delta^T +\Delta \tilde{p} \Sigma_q \tilde{p}^T \Delta^T $. Both steps are convex optimization problems. In addition, $\Sigma_q$ contains $\frac{1}{2} |K_q| (|K_q|-1)$ elements, which is usually in a higher dimension than the number of observed data, so the optimal estimator of $\Sigma_q$ may not be unique.

Next we introduce the regularization and relaxation of Formulation~\ref{eq:TraS} to achieve convexity and uniqueness.

\subsubsection{Sparse model selection}
Since the number of entries in the O-D variance-covariance matrix is usually much greater than the size of data, a Lasso penalization is used to select the O-D variance-covariance matrix as in Formulation~\ref{eq:SL},
\begin{equation}
\label{eq:SL}
\begin{array}{rrclcl}
\vspace{5pt}
\displaystyle \min_{\Sigma_q} & \multicolumn{3}{l}{\displaystyle   \log \det (\Sigma_x^{o}) + \text{trace} (S_x^o  \left(\Sigma_x^o\right)^{-1}) + \lambda \norm{\Sigma_q}_1} \\
\textrm{s.t.} & \Sigma_x^o & = & \Delta^o \Sigma_{f|q}\left( \Delta^o \right)^T  +\Delta^o \tilde{p} \Sigma_q \tilde{p}^T  \left(\Delta^o\right)^T \\
& \Sigma_q & \succeq & 0
\end{array}
\end{equation}

$\lambda$ is a Lasso parameter to adjust the sparsity of $\hat{\Sigma}_q$. $\Sigma_{f|q}$ is constructed using $p$ and $\hat{q}$.  We note Formulation~\ref{eq:SL} obtains a biased but robust estimator of OD variance/covariance matrix. Formulation~\ref{eq:SL} is hard to solve due to its non-convexity \citep{bien2011sparse}. Although non-linear optimization methods can be employed to solve this formulation, none of them can guarantee computationally efficiency thus not suitable for large-scale networks. We would prefer to approximate it using a convex optimization problem with Lasso regularization. Inspired by \cite{yuan2007model}, a second order approximation to the MLE of the covariance matrix is used in Formulation~\ref{eq:LSwithLasso}.

\begin{equation}
\label{eq:LSwithLasso}
\begin{array}{rrclcl}
\vspace{5pt}
\displaystyle \min_{\Sigma_q} & \multicolumn{3}{l}{\displaystyle   \norm{S_x^o - \Sigma_x^o}_F^2 +  \lambda  \norm{\Sigma_q}_1}\\
\textrm{s.t.} & \Sigma_x^o & = & \Delta^o \Sigma_{f|q}\left( \Delta^o \right)^T  +\Delta^o \tilde{p} \Sigma_q \tilde{p}^T  \left(\Delta^o\right)^T \\
& \Sigma_q & \succeq & 0
\end{array}
\end{equation}

where $\norm{A }_F = \sqrt{\text{Tr}(A^T A)}$ and $\norm{A}_1 = \sum_{ij} |A_{ij}|$. The former one is known as Frobenius Norm, equivalent to the element-wise $L_2$ norm \citep{jennings1992matrix}. The latter one is the element-wise $L_1$ norm.
\begin{proposition}[Convexity]
The optimization problem \ref{eq:LSwithLasso} is convex.
\end{proposition}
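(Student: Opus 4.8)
The plan is to verify separately that the feasible region and the objective of Formulation~\ref{eq:LSwithLasso} are convex, and then to invoke the standard fact that minimizing a convex function over a convex set is a convex program \citep{boyd2004convex}.

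First I would establish that the feasible set is convex. The key observation is that in this subproblem $p$ (hence $\tilde p$) and $\hat q$ are given, so the matrix $\Sigma_{f|q}$ --- which by Proposition~\ref{pro:gesta1} is built only from $p$ and $\hat q$ --- is a constant and carries no dependence on the decision variable $\Sigma_q$. Consequently the equality constraint expresses $\Sigma_x^o$ as the sum of the constant term $\Delta^o \Sigma_{f|q}(\Delta^o)^T$ and the linear map $\Sigma_q \mapsto \Delta^o \tilde p \Sigma_q \tilde p^T (\Delta^o)^T$; that is, $\Sigma_x^o$ is an affine function of $\Sigma_q$. Eliminating $\Sigma_x^o$ by substitution, the problem becomes a minimization over $\Sigma_q$ ranging in the positive semidefinite cone $\{\Sigma_q \succeq 0\}$, which is a convex cone in the space of symmetric matrices; hence the feasible set is convex.

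Second I would show the objective is convex in $\Sigma_q$. The map $\Sigma_q \mapsto S_x^o - \Sigma_x^o(\Sigma_q)$ is affine by the previous step. The squared Frobenius norm $\norm{\cdot}_F^2$ is convex (being the square of a norm, equivalently a positive semidefinite quadratic form in the matrix entries), and a convex function precomposed with an affine map is convex, so $\norm{S_x^o - \Sigma_x^o}_F^2$ is convex in $\Sigma_q$. The term $\norm{\Sigma_q}_1 = \sum_{ij} |(\Sigma_q)_{ij}|$ is a norm, hence convex, and $\lambda \ge 0$. A nonnegative combination of convex functions is convex, so the whole objective is convex; combining the two parts gives the claim.

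The main (and essentially only) point requiring care is recognizing that $\Sigma_{f|q}$ does not depend on $\Sigma_q$ once $p$ and $\hat q$ are fixed, so that the relation defining $\Sigma_x^o$ is affine --- not nonlinear --- in $\Sigma_q$. This is exactly what distinguishes Formulation~\ref{eq:LSwithLasso} from the non-convex Formulation~\ref{eq:SL}, where the $\log\det$ and $\text{trace}(S_x^o (\Sigma_x^o)^{-1})$ terms involve $\Sigma_x^o$, and hence $\Sigma_q$, nonlinearly; once that is noted, the remainder is a routine application of the standard convexity-preserving operations.
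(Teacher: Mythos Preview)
Your argument is correct and follows essentially the same route as the paper: show the positive-semidefinite constraint defines a convex feasible set, substitute the affine relation for $\Sigma_x^o$ into the objective, and conclude that the objective is convex in $\Sigma_q$. Your version is simply more explicit about why the substituted objective is convex (affine precomposition of $\norm{\cdot}_F^2$, convexity of $\norm{\cdot}_1$, nonnegative combination), and you rightly flag that $\Sigma_{f|q}$ is fixed once $p$ and $\hat q$ are fixed, which is exactly what makes the constraint affine in $\Sigma_q$.
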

\begin{proof}
$\Sigma_q$ can only be positive semi-definite matrix, which forms a convex set. Then we plug $\Sigma_x  =  \Delta \Sigma_{f|q} \Delta^T +\Delta \tilde{p} \Sigma_q \tilde{p}^T \Delta^T $ into the objective function. The objective function with respect to $\Sigma_q$ is also convex, so the entire formulation is convex.
\end{proof}

Formulation \ref{eq:LSwithLasso} is a desired optimization problem to solve for a sparse O-D demand variance/covariance matrix $\Sigma_q$, because the convexity allows its computational efficiency. In principle, we can use proximal methods \citep{parikh2013proximal} to solve Formulation \ref{eq:LSwithLasso}. Details of the solution algorithms can be found in \ref{ap:proximal}.

\subsection{Incorporating day-to-day travel time data in probabilistic ODE}
The day-to-day travel time/speed data can be added to the Formulation~\ref{eq:ODwithEQ} to further enhance the ODE. Real-time traffic speed data vendotrs, such as INRIX and HERE, can provide traffic speed data covering major roads in most of U.S. cities. Some studies \citep{balakrishna2006off, ma2006accelerating, kostic2015using} regarded the observed travel time/speed as another objective to minimize, and thus enhance the Formulation~\ref{eq:ODwithEQ} to become Formulation~\ref{eq:ODwithTT}.
\begin{equation}
\label{eq:ODwithTT}
\begin{array}{rrclcl}
\vspace{5pt}
\displaystyle \min_{f} & \multicolumn{3}{l}{\displaystyle  w_1 \left(\Delta^o f - \hat{x}^o\right)^T \left(\Sigma_x^o \right)^{-1} \left(\Delta^o f - \hat{x}^o\right) +  w_2 (q^H - Mf)^T \left( \Sigma_q^{H} \right)^{-1} (q^H - Mf) + w_3 (c^o - \hat{c}^o )^T  \Sigma_c^o (c^o - \hat{c}^o )  } \\
\textrm{s.t.} & f & \in & \Phi^+ \\
~ & c & = & t(\Delta f) \\
\end{array}
\end{equation}
where $t(\cdot)$ is the link performance function that maps the link flow to link costs (such as the well known BPR functions). $w_1$, $w_2$ and $w_3$ are weights assigned to each objective.

Similar to estimating the covariance matrix of link flow by Formulation \ref{eq:LSwithLasso}, one can also estimate the covariance matrix of travel cost/time given an estimator for its mean. However, a bigger issue is that the mapping from traffic speed to traffic hourly volume on the road segment is not a one-to-one mapping. A link performance function, though uniquely maps travel cost/time to volume in both ways, can be very sensitive in determining volume given near free-flow travel time. When travel speed/time data based on probe vehicles is highly biased, the error can be amplified through the link performance function. Therefore, the ODE relying on travel speed data in the static network settings is practically challenging. We believe that applying travel speed/time data can be more useful when extending GESTA and probabilistic ODE to dynamic network settings where the traffic dynamics is captured using microscopic or mesoscopic flow models. We hope to address probabilistic dynamic ODE in a future research paper. 


\section{Some properties of the formulations}
\label{sec:analysis}
In this section, we discuss how to evaluate the accuracy and effectiveness of our estimated O-D demand mean and variance/covariance matrix. We analyze variance/covariance by its decomposition into three main sources, O-D demand variance, route choice variance and unknown (unexplained) error. The observability of the proposed probabilistic ODE framework is also discussed.

\subsection{Goodness of fit}
For classical ODE, the goodness of fit indicator measures how close the estimated O-D demand mean $\hat{q}$ can, if loaded into the network following a deterministic traffic assignment model (UE or SUE), reproduce the observed traffic conditions. Commonly used indicators are summarized in \citet{antoniou2015towards}. Similarly, for the proposed probabilistic ODE, the goodness of fit can be measured by how close the estimated probabilistic O-D demand $\hat{Q}$, if loaded into the network following GESTA, can reproduce the probability distribution of observed traffic flow. Define the estimated link flow on observed links from the probabilistic ODE $\hat{X}^o \sim \N(\hat{x}^o, \hat{\Sigma}_x^o)$ and estimated link flow on observed links directly from data $\XX^o \sim \N(\bar{\xx}^o, P_x^o)$, $P_x^o=\frac{1}{n-1} \sn (\xx_i^o - \bar{\xx}^o) (\xx_i^o - \bar{\xx}^o)^T$. The goodness of fit indicator can be computed by the Hellinger distance or Kullback-Leibler distance between $\hat{X^o}$ and $\XX^o$.

\subsection{Variance analysis}
In Section~\ref{sec:formulation}, we do not consider $\Sigma_e$. With the real world data, the observed data cannot be fully explained by the ODE, and thus contain unexplained errors. After the probabilistic ODE process, we can analytically decompose the link flow variance to check how much variance can be explained by the ODE.

\begin{proposition}[Link flow variance decomposition]
The variance of link flow can be decomposed into three parts, O-D demand variance, route choice variance, and unknown errors.
\begin{eqnarray}
X_m &=& x + \eta + \tau + \varepsilon_e\\
\eta &\sim& \N(0, \Delta \tilde{p} \Sigma_q \tilde{p}^T \Delta^T)\\
\tau &\sim& \N(0, \Delta \Sigma_{f|q} \Delta^T)\\
\varepsilon_e &\sim& \N(0, \Sigma_e)
\end{eqnarray}
\end{proposition}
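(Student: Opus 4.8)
The plan is to derive the decomposition by ``unrolling'' the three-level hierarchical model of Equation~\ref{eq:multi}, attributing the fluctuation introduced at each level to its own additive Gaussian term, and then reading off the covariance of each term from the approximations already established in Propositions~\ref{pro:gesta1} and~\ref{pro:marginal}. In effect the claim is just a restatement of those two propositions in additive form, so the only genuine work is checking that the three terms are (approximately) independent, so that their covariances add up to the marginal covariance of $X_m$.

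Concretely, I would introduce the demand fluctuation $\zeta \doteq Q - q$, which by Level~3 of Equation~\ref{eq:multi} satisfies $\zeta \sim \N(\vec{0}, \Sigma_q)$, and the route-choice fluctuation $\xi \doteq F - \tilde{p} Q$. By Level~2, conditional on $Q$ the path flow $F$ is multinomial with mean $\tilde{p} Q$ and covariance $\Sigma_{f|Q}$, so $\mathbb{E}(\xi \mid Q) = \vec{0}$ and $\mathrm{Cov}(\xi \mid Q) = \Sigma_{f|Q}$. Using $x = \Delta f = \Delta \tilde{p} q$ from Propositions~\ref{pro:gesta1} and~\ref{pro:marginal} together with Level~1, I then simply expand
\[
X_m = \Delta F + \epsilon_e = \Delta \tilde{p} q + \Delta \tilde{p}\, \zeta + \Delta \xi + \epsilon_e = x + \eta + \tau + \varepsilon_e,
\]
with $\eta \doteq \Delta \tilde{p}\, \zeta$, $\tau \doteq \Delta \xi$, and $\varepsilon_e \doteq \epsilon_e$. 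The error term is $\varepsilon_e \sim \N(\vec{0}, \Sigma_e)$ directly by the assumption on unknown errors, giving the last line. Since $\zeta$ is (approximately) MVN, its linear image $\eta$ is MVN with mean $\vec{0}$ and covariance $\Delta \tilde{p} \Sigma_q \tilde{p}^T \Delta^T$, giving the first line. For $\tau$, GESTA approximates the conditional multinomial law of $F \mid Q$ by a normal and freezes $\Sigma_{f|Q}$ at its value $\Sigma_{f|q}$ at the mean demand --- the very approximation underlying $\Sigma_f = \Sigma_{f|q} + \tilde{p} \Sigma_q \tilde{p}^T$ in Proposition~\ref{pro:gesta1} --- so $\mathrm{Cov}(\xi) \approx \Sigma_{f|q}$ and $\tau$ is approximately $\N(\vec{0}, \Delta \Sigma_{f|q} \Delta^T)$, giving the middle line.

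It remains to justify that $\eta$, $\tau$, and $\varepsilon_e$ are mutually independent. The unknown error $\epsilon_e$ is exogenous to Levels~2--3 by assumption, so $\varepsilon_e$ is independent of the other two. Because $\mathbb{E}(\xi \mid Q) = \vec{0}$, the law of total expectation gives $\mathrm{Cov}(\zeta, \xi) = \mathbb{E}\big(\zeta\, \mathbb{E}(\xi^T \mid Q)\big) = \vec{0}$, so $\eta$ and $\tau$ are uncorrelated, hence independent under the joint-normal approximation. As a consistency check, the three covariances sum to $\Delta \tilde{p} \Sigma_q \tilde{p}^T \Delta^T + \Delta \Sigma_{f|q} \Delta^T + \Sigma_e$, which is exactly the marginal covariance $\Sigma_x + \Sigma_e$ of $X_m$ from Proposition~\ref{pro:marginal}. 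The main obstacle is thus not algebraic but the honesty of the approximations: $F \mid Q$ is truly multinomial, not Gaussian, and $\Sigma_{f|Q}$ genuinely depends on the realized $Q$, so the clean statement for $\tau$ holds only in the large-demand regime where GESTA's normal approximation is accurate and $\Sigma_{f|Q}$ may be replaced by $\Sigma_{f|q}$. I would therefore state the decomposition as holding under that same regime and lean on Propositions~\ref{pro:gesta1} and~\ref{pro:marginal} rather than re-deriving the multinomial-to-normal passage.
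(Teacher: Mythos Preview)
The paper does not actually supply a proof of this proposition; it is stated as an immediate consequence of the hierarchical model in Equation~\ref{eq:multi} together with Propositions~\ref{pro:gesta1} and~\ref{pro:marginal}, with the derivation implicitly deferred to the companion GESTA paper. Your argument --- unrolling the three levels, identifying $\eta = \Delta\tilde{p}(Q-q)$, $\tau = \Delta(F-\tilde{p}Q)$, $\varepsilon_e = \epsilon_e$, and reading off each covariance from the earlier propositions --- is exactly the natural derivation the paper takes for granted, so your approach is correct and matches the paper's (unstated) reasoning. Your explicit treatment of the independence of $\eta$ and $\tau$ via $\mathbb{E}(\xi\mid Q)=\vec{0}$, and your honest caveat that the clean Gaussian form for $\tau$ relies on the large-demand multinomial-to-normal approximation and the freezing $\Sigma_{f|Q}\approx\Sigma_{f|q}$, are in fact more careful than anything the paper spells out here.
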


There are many ways to quantify the variance ratio. In this study, we determine the ratio based on matrix traces, which is widely adopted in the statistics literature. Trace-based variance ratio is closely related to the spectral analysis of recurrent link/path flow data. Details and examples can be found in \citet{GESTA}.



\subsection{Observability}
ODE is notoriously difficult because it is underdetermined. Studies on O-D observability problem specifically discusses the issue of solution non-uniqueness \citep{singhal2007identifiability, yang2015data}. In this subsection, we discuss the uniqueness property of the proposed probabilistic ODE.

As we discussed in Section~\ref{sec:formulation}, under no congestion, Formulation \ref{eq:ODu1}  is able to  estimate O-D mean $q$ uniquely once prior information $q^H$ is introduced \citep{bell1991estimation, yang1994equilibrium}. Studies also suggested estimate $\hat{q}$ by taking the pseudo-inverse of $\Delta^o$ that encodes a singular value decomposition (SVD) process in its formulation \citep{nie2005inferring}. For a congested network with the Formulation \ref{eq:ODwithEQ}, its solution $\hat{f}$ may not be unique. The observability of Formulation \ref{eq:ODwithEQ} varies by the constraints $\Phi^+$, dependent on the specific route choice model adopted under GESTA.


\begin{enumerate}[label=\roman*)]
\item UE-based GESTA\\
Generally path flow under the UE condition is not unique given O-D demand $q$ \citep{smith1979existence}. When UE is used as the constraint for Formulation \ref{eq:ODwithEQ}, it cannot guarantee the optimal path flow estimator $\hat{f}$ to be unique. However, we can find an extreme point solution from the feasible domain, and then this solution to Formulation \ref{eq:ODwithEQ} is unique \citep{tobin1988sensitivity}. The extreme point solution can be obtained through column generation. The optimal estimator $\hat{f}$ is also locally stable and the upper level object function is strongly convex \citep{tobin1988sensitivity, patriksson2004sensitivity}. Thus, if we use history O-D demand as the initial point and the history O-D is near the true O-D demand, then the optimization process is likely to find the optimal solution without trapping into a local minimum \citep{yang1995heuristic}.
\item Logit-based GESTA\\
Since the Logit model is strictly convex on $f$, the optimal solution to Formulation \ref{eq:ODwithEQ} is unique \citep{patriksson2004sensitivity}.
\item Probit-based GESTA\\
Since $\hat\Sigma_x$ is given in Formulation \ref{eq:ODwithEQ}, the probability distribution of path costs $C$ is uniquely determined. As a result, the solution to the route choice probability $p$ is unique under the Probit model. If the history O-D information is used, the optimal estimator of O-D demand mean $\hat{q}$ is unique. Since both $p$ and $\hat{q}$ are unique, the optimal estimator of path flow $\hat{f}$ is also unique.
\end{enumerate}

As for Formulation \ref{eq:LSwithLasso} to estimate O-D demand variance/covariance matrix, the optimal solution is non-unique since Lasso regularization is not strictly convex \citep{tibshirani2013lasso}. However, practically, Lasso regularization can largely shrink the solution domain towards being unique.

\begin{proposition}
\label{pro:nonunique}
The optimal solution $(\hat{q}, \hat{\Sigma}_q, \hat{f}, \hat{x})$ to the IGLS framework consisting of both formulations \ref{eq:ODwithEQ}, \ref{eq:LSwithLasso} may not be unique for any given observed link flow data set $\xx^o$.
\end{proposition}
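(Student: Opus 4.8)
The statement is negative, so it suffices to exhibit one instance — a network (with incidence matrices $\Delta^o,M$, route choice model, and a data set $\xx^o$) — for which the IGLS framework admits at least two distinct optimal tuples $(\hat q,\hat\Sigma_q,\hat f,\hat x)$. The plan is to locate the non-uniqueness in the covariance sub-problem \ref{eq:LSwithLasso}, since there it is structural rather than incidental, and then to argue that a component being non-unique forces the whole tuple to be non-unique. Concretely, at a fixed point of the outer iteration the inputs $\hat q,\hat f,\hat x$ (hence $\tilde p$ and $\Sigma_{f|q}$) are held fixed, and \ref{eq:LSwithLasso} reduces to minimizing $\|S_x^o-\Sigma_x^o\|_F^2+\lambda\|\Sigma_q\|_1$ over the PSD cone, where $\Sigma_q$ enters only through the affine map $\Sigma_q\mapsto \Sigma_x^o=\Delta^o\Sigma_{f|q}(\Delta^o)^T+\Delta^o\tilde p\,\Sigma_q\,\tilde p^T(\Delta^o)^T$. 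So I must (i) make the linear part $\Sigma_q\mapsto\Delta^o\tilde p\,\Sigma_q\,\tilde p^T(\Delta^o)^T$ have a nontrivial kernel, and (ii) find a kernel direction along which $\|\cdot\|_1$ does not strictly increase while $\Sigma_q$ stays PSD.

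For (i) I would do a dimension count: the symmetric matrix $\Sigma_q$ carries $\binom{|K_q|+1}{2}$ free parameters, whereas $\Sigma_x^o$ carries only $\binom{|A^o|+1}{2}$, so whenever $|K_q|>|A^o|$ — the generic ODE situation — the linear operator above is not injective, and its kernel $\mathcal K$ is a nontrivial subspace that does \emph{not} depend on $\xx^o$. For (ii), pick a small enough $\lambda$ and a data set $S_x^o$ for which \ref{eq:LSwithLasso} has an optimizer $\hat\Sigma_q$ lying in the interior of the PSD cone ($\hat\Sigma_q\succ 0$); then $\hat\Sigma_q+tH$ is PSD for all small $|t|$ and every $H\in\mathcal K$, and by choosing $H$ supported on off-diagonal coordinates where the sign pattern of $\hat\Sigma_q$ is constant and whose $\pm$ entries cancel in the $\ell_1$ sum, $t\mapsto\|\hat\Sigma_q+tH\|_1$ is locally constant. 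Hence the whole segment $\{\hat\Sigma_q+tH:|t|\le\varepsilon\}$ minimizes \ref{eq:LSwithLasso}. Crucially, every such $\Sigma_q$ produces the \emph{same} $\Sigma_x^o$, hence the same weight matrix $(\hat\Sigma_x^o)^{-1}$ in \ref{eq:ODwithEQ}, so the mean sub-problem returns the same $(\hat q,\hat f,\hat x)$: the entire segment consists of fixed points of the IGLS iteration with distinct $\Sigma_q$-components, proving non-uniqueness. I would also record a second, independent source for completeness: under UE-based GESTA, path flow is not determined by O-D demand (Section~\ref{sec:analysis}), so \ref{eq:ODwithEQ} can have non-unique $\hat f$ without any selection rule.

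The main obstacle is step (ii): the $\ell_1$ term is the only part of \ref{eq:LSwithLasso} that separates points inside $\mathcal K$, so I must simultaneously keep $\Sigma_q$ in the PSD cone and keep $\lambda\|\Sigma_q\|_1$ from strictly increasing along the perturbation. The clean resolution is to work at a strictly positive definite optimizer and to exhibit $H\in\mathcal K$ with a cancelling sign pattern — existence of such an $H$ reduces to a rank/parity count on $\Delta^o\tilde p$, which I would verify on a small explicit network (e.g. a two-O-D-pair, single-observed-link variant of the toy network of Figure~\ref{fig:net1}). A secondary point to address is the phrase ``for any given observed link flow data set'': since $\mathcal K$, and therefore the failure of injectivity, are data-independent, no amount or choice of link-flow observations can pin down $\hat\Sigma_q$ in such a network — which is precisely the observability gap the proposition asserts.
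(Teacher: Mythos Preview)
Your approach is correct and rests on the same underlying idea as the paper --- a dimension count showing that the linear map $\Sigma_q\mapsto \Delta^o\tilde p\,\Sigma_q\,\tilde p^T(\Delta^o)^T$ (and likewise $q\mapsto\Delta^o\tilde p q$, $f\mapsto\Delta^o f$) is not injective whenever the number of observed links is smaller than the number of O-D pairs or paths. The paper does not give a formal proof: it simply states this dimension argument in the paragraph following the proposition, and handles the Lasso term in a single sentence just before the proposition by noting that the $\ell_1$ penalty is not strictly convex (citing Tibshirani), so the regularized problem inherits non-uniqueness.

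Where you go beyond the paper is in step~(ii): you actually confront the possibility that the $\ell_1$ term might select a unique point within the kernel $\mathcal K$, and you sketch how to rule this out by perturbing a strictly PD optimizer along a kernel direction $H$ whose signed entries cancel in the $\ell_1$ sum. This is more careful than the paper's treatment, and it is the right thing to worry about --- ``not strictly convex'' does not by itself imply multiple minimizers. Your construction is essentially sound (the cancellation condition is one linear constraint on $\mathcal K$, so it is satisfiable once $\dim\mathcal K\ge 2$, and positive definiteness gives the PSD slack you need), though in a fully written proof you would also need to ensure the support condition --- that $H$ vanishes on any coordinates where $\hat\Sigma_q$ is exactly zero --- which you only gesture at. Your observation that every such $\Sigma_q$ yields the same $\Sigma_x^o$, hence the same fixed point of the outer IGLS loop, is the right way to lift sub-problem non-uniqueness to the full tuple; the paper leaves this implicit. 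Your secondary remark about UE path-flow non-uniqueness also matches the paper's earlier discussion in Section~\ref{sec:analysis}.
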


The major reason for the non-uniqueness is that the number of rows (namely the number of links that are covered with sensors) in $\Delta$ or $\Delta^o$ is much smaller than the number of its columns (namely the number of paths) in a general large-scale network. Consequently, $\Delta^o P q$ is unique, but $q$ is not unique in Formulation \ref{eq:traOD}. $\Delta^o f$ is unique, but $f$ is not unique in Formulation \ref{eq:TraF}. Similarly, $\Sigma_x$ is unique, but $\Sigma_q$ is not unique in Formulation \ref{eq:TraS}. Generally, the entire IGLS framework estimates both the mean and variance/covariance matrix, and thus has to search a much larger domain space than a deterministic ODE. Therefore, its observability is worse off. This will be further demonstrated in the numerical experiments.

Though Proposition~\ref{pro:nonunique} declares a challenge for estimating probabilistic O-D demand, we argue that by the proposed IGLS framework, the O-D mean estimator is no worse than a best possible estimator by an error that reduces with respect to the sample size, and thus no worse than the O-D demand estimator using deterministic ODE methods. 
\begin{proposition}
\label{pro:better}
Suppose observations of link flow on observed links that are i.i.d drawn from the probability distribution of $X$ on each day, and they are used to estimate the mean and variance/covariance matrix of link flow. $\hat{\Sigma}_x \succ 0$. For an arbitrary route choice probability vector $p\geq 0$ (or equivalently an underlying route choice model), the statistical risk of the estimated O-D mean $\hat q$ from Formulation~\ref{eq:traOD} (or Formulation~\ref{eq:ODwithEQ} without history O-D information) is of ${\cal O}\left(\frac{1}{n}\right)$ where $n$ is the sample size (namely the number of days with observations).
\end{proposition}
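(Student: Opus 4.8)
The plan is to exploit the iterative (IGLS) decomposition: in the mean sub-problem the route choice $p$, the covariance $\Sigma_q$, and hence the weight matrix $\hat\Sigma_x^o\succ 0$ are held fixed, so Formulation~\ref{eq:traOD} is an ordinary constrained GLS problem whose solution $\hat q$ depends on the data only through the sample mean $\hat x^o=\tfrac1n\sum_{i=1}^n\xx_i^o$. Writing $A:=\Delta^o\tilde p$ and $W:=(\hat\Sigma_x^o)^{-1}$ and discarding the irrelevant scalar $n$ in front of the objective, $\hat q=g(\hat x^o)$ where $g(b):=\operatorname*{arg\,min}_{q\ge 0}(Aq-b)^TW(Aq-b)$. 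Let $q^\star:=g(x^o)$, with $x^o=\mathbb E[\xx_i^o]$, be the population target (the true O-D mean when the data are generated consistently with $p$, and otherwise the GLS projection of $x^o$ onto the model). I interpret the statistical risk as the mean squared error $\mathbb E\|\hat q-q^\star\|^2$; the same rate then also bounds the excess objective value by strong convexity. The strategy is to combine an elementary $\mathcal O(1/n)$ bound on the fluctuation of $\hat x^o$ with Lipschitz stability of the minimizer $g(\cdot)$ under perturbation of its linear term.

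First I would record, from the i.i.d.\ sampling assumption (Assumption~5) and the distribution of $\hat x^o$ stated after Proposition~\ref{pro:mle}, that $\mathbb E\|\hat x^o-x^o\|^2=\tfrac1n\operatorname{tr}(\Sigma_x^o)=\mathcal O(1/n)$. Next I would show $g$ is globally Lipschitz. When $A$ has full column rank the objective is strongly convex in $q$ with modulus $2\lambda_{\min}(A^TWA)>0$, and the optimality condition is the variational inequality $\langle 2A^TW(Aq-b),\,q'-q\rangle\ge 0$ for all $q'\ge 0$ over the fixed convex cone $\mathbb R^{|K_q|}_+$. The standard stability estimate for strongly monotone variational inequalities whose operator depends affinely on the parameter then yields $\|g(b_1)-g(b_2)\|\le\kappa\|b_1-b_2\|$ with $\kappa=\|A^TW\|/\lambda_{\min}(A^TWA)$, uniformly and with no restriction on which components of the solution are active on the boundary. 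Combining the two facts, $\mathbb E\|\hat q-q^\star\|^2=\mathbb E\|g(\hat x^o)-g(x^o)\|^2\le\kappa^2\mathbb E\|\hat x^o-x^o\|^2=\kappa^2\operatorname{tr}(\Sigma_x^o)/n=\mathcal O(1/n)$; the constant $\kappa$ depends on $p$ through $A=\Delta^o\tilde p$, but the rate does not, which is why "arbitrary $p$" is harmless. Since a deterministic ODE fed the same observations can use them only through first-order statistics, and $\hat x^o$ is their maximum-likelihood summary (Proposition~\ref{pro:mle}), $\hat q$ cannot be worse than the deterministic estimate by more than this same $\mathcal O(1/n)$ order; the same reduction covers Formulation~\ref{eq:ODwithEQ} once $p$ is held fixed, after substituting the appropriate design matrix.

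The main obstacle is the non-uniqueness flagged in Proposition~\ref{pro:nonunique}: if $A=\Delta^o\tilde p$ is rank deficient, then $A^TWA$ is only positive semidefinite, the program has a whole affine set of minimizers, and the strong-monotonicity bound no longer pins down $\hat q$ itself. I would handle this in one of two ways. Either measure the risk only in the identifiable direction, $\mathbb E\|A(\hat q-q^\star)\|^2$: there $A\hat q$ is the $W$-orthogonal projection of $\hat x^o$ onto the polyhedral cone $A(\mathbb R^{|K_q|}_+)$, projections onto closed convex sets are nonexpansive in the $W$-norm, so $\mathbb E\|A(\hat q-q^\star)\|^2\le(\lambda_{\max}(W)/\lambda_{\min}(W))\,\mathbb E\|\hat x^o-x^o\|^2=\mathcal O(1/n)$ unconditionally. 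Or fix a measurable selection of the solution set (e.g.\ the minimum-norm minimizer) and invoke Lipschitz continuity of that selection, equivalently of the solution set in Hausdorff distance, for parametric convex quadratic programs. A secondary, milder point is that $\hat\Sigma_x^o$ may itself be estimated; since it is assumed positive definite and, if taken as the empirical covariance, converges at rate $\mathcal O(n^{-1/2})$, while $g$ is locally Lipschitz in $W$ around a positive definite matrix, this contributes only an additional $\mathcal O(1/n)$ term and does not change the conclusion.
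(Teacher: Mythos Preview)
Your proposal is correct and arrives at the same conclusion, but by a genuinely different route than the paper. The paper does not use Lipschitz stability of the variational inequality at all; instead it writes down an explicit closed-form approximation to the constrained GLS solution, $\hat q=\max\bigl(\tilde D^{+}\hat\Sigma_x^{-1/2}\hat x^{o},\,0\bigr)$ with $\tilde D=\hat\Sigma_x^{-1/2}\Delta^{o}\tilde p$ and $\tilde D^{+}$ the Moore--Penrose pseudoinverse, and then bounds $\|q-\hat q\|_2^2$ directly by a chain of elementary matrix-norm inequalities (dropping the $\max(\cdot,0)$, pulling out $\|(\tilde p^{T}{\Delta^{o}}^{T}\hat\Sigma_x^{-1}\Delta^{o}\tilde p)^{-1}\|_2^2\,\|\tilde p^{T}{\Delta^{o}}^{T}\|_2^2$ as constants, and leaving the sole data-dependent factor $\mathbb E\|V^{o}-v^{o}\|_2^2$, which is $\mathcal O(1/n)$ by the distribution of the sample mean). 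So both arguments isolate the same $1/n$ fluctuation of $\hat x^{o}$; the paper reaches it through an explicit estimator and algebra, you through the abstract perturbation bound for strongly monotone VIs. Two further differences are worth noting. First, the paper measures the risk against the \emph{true} mean $q$ (implicitly assuming $x^{o}=\Delta^{o}\tilde p q$), whereas you measure against the population GLS target $q^\star=g(x^{o})$; these coincide when the model is correctly specified but your formulation is arguably the more careful one. Second, the paper's pseudoinverse argument tacitly requires $\tilde p^{T}{\Delta^{o}}^{T}\hat\Sigma_x^{-1}\Delta^{o}\tilde p$ to be invertible (full column rank of $A$), with no discussion of the degenerate case; your treatment of the rank-deficient situation via the identifiable component $A\hat q$ or a measurable selection is an actual improvement. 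What the paper's route buys is concreteness and no appeal to VI stability theory; what yours buys is a cleaner handling of the nonnegativity constraint and of rank deficiency.
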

\begin{proof}[Proof sketch]
Note that the observed link set $A^o$ does not change from day to day. First we define a risk function to measure the performance of a specific estimator. The risk is low when the estimator provides an accurate solution given any observed link flow data from a probability distribution of $X$, whereas the risk is high when the estimator is either inaccurate or not robust to the observed link flow. We then rewrite the ODE formulation and bound the risk. We show that the risk of the estimator for O-D demand mean is of ${\cal O}(\frac{1}{n})$ regardless of the quality of the estimated link variance/covariance matrix $\Sigma_x$. A detailed proof is provided in \ref{ap:risk}.
\end{proof}

Proposition~\ref{pro:better} is one of the major features that distinguish this research from other existing probabilistic ODE methods. Though probabilistic ODE works with a much larger solution space than the deterministic O-D ODE, Proposition~\ref{pro:better} guarantees that using the proposed IGLS framework, the estimator for O-D demand mean is no worse than the best possible estimator by ${\cal O}(\frac{1}{n})$, and thus the mean estimated by deterministic ODE. Provided with a large data sample, the proposed probabilistic ODE will not ``get lost" due to enlarged searching solution domain regardless of the variance/covariance matrix. In other words, the estimator for O-D variance/covariance matrix can be seen as additional information to be inferred using day-to-day traffic data, in addition to the mean estimator. The variance/covariance matrix does not impair the performance of estimated O-D mean vector. This is one critical feature that distinguishes our research from \citet{shao2014estimation}, where the formulation solving for both mean and covariance matrix simultaneously may not necessarily guarantee a robust estimator for the O-D demand mean.


\section{Solution algorithms}
\label{sec:soluion}
In this section we present the solution algorithm for the proposed IGLS framework. The goal is to compute the estimators for O-D mean and variance/covariance matrix $(\hat{q}, \hat{\Sigma}_q)$. The proposed formulations are path based. The number of paths with positive flow increases exponentially when the network grows. For small networks, path enumeration is possible. When the networks are large, we can simply enumerate $K$ shortest paths \citep{yen1971finding, eppstein1998finding} for each O-D pair and then search for the solution in the prescribed path set. In addition, the proposed IGLS framework can also fit the column generation method \citep{watling2015stochastic, rasmussen2015stochastic}. At each iteration, one or several additional paths that possess minimal path cost at the time of iteration can be generated and added to the prescribed path set.

For the sub-problem of estimating O-D demand mean vector $q$, two heuristic algorithms can be used to directly solve the bi-level formulation \citep{yang1995heuristic, josefsson2007sensitivity}. A single-level convex relaxation to the formulation can also be adopted \citep{shen2012new}. In addition, two algorithms, Iterative Shrinkage-Thresholding Algorithm (ISTA) \citep{nesterov1983method} and Fast Iterative Shrinkage-Thresholding Algorithm (FISTA) \citep{nesterov2005smooth} solve for the sub-problem of estimating O-D demand variance-covariance matrix $\Sigma_q$.

The solution algorithm is summarized as follows,

\ \\
  \begin{tabular}{p{4cm}p{10cm}}
  \textbf{Algorithm}&  \\\hline
  \textit{Step 0} & \textit{Initialization.} Iteration $\nu=1$, generate a path set for each O-D pair. Set the initial value of estimated O-D mean and variance/covariance matrix $(\hat{q}, \hat{\Sigma}_q)$. \\\hline
  \textit{Step 1} & \textit{Estimating O-D demand mean.}  Fix $\hat{\Sigma}_q$ and $\hat{\Sigma}_x$, and estimate path flow and O-D demand following Formulation~\ref{eq:traOD} or \ref{eq:ODwithEQ}.\\\hline
  \textit{Step 2} & \textit{Estimating O-D variance/covariance matrix.}  Fix the estimated path flow $\hat{f}$, estimate O-D variance/covariance matrix following Formulation~\ref{eq:LSwithLasso} and \ref{ap:proximal}.\\ \hline
  \textit{Step 3} & \textit{Network loading.}  Perform the network loading according to the current assignment $p$ to obtain flow $\hat{x}, \hat{\Sigma}_x, \hat{f}, \hat{\Sigma}_f$.\\\hline
  \textit{Step 4} & \textit{Convergence check.} Check the estimated probabilistic O-D $(\hat{q}, \hat{\Sigma}_q)$. If the convergence criterion is met, go to \textit{Step 6}; if not, $\nu=\nu+1$, go to \textit{Step 1}.\\ \hline
  \textit{Step 6} & \textit{Output.} Output $(\hat{q}, \hat{\Sigma}_q)$. \\ \hline
  \end{tabular}
\ \\

\section{Numerical experiments}
\label{sec:exp}
We first examine the probabilistic ODE method on two small networks. Results are presented, discussed, and compared among different route choice models. The sensitivity of the data quantity and historical O-D information are also tested and analysed. The impact of penalty term on Lasso regulrazation is analysed. We also compare the efficiency of different gradient-based methods to solve Formulation~\ref{eq:LSwithLasso}. In addition, the proposed method is also applied on two large-scale real-world networks to examine its efficiency and scalability.

The true O-D demand in the real world is notoriously difficult to obtain. We adopt two methods to validate the proposed probabilistic ODE. In the first method, we construct probabilistic O-D demands then regard it as the ``true'' O-D demand. We run GESTA with Probit as the route choice model to obtain the probability vector $p$, again as the ``true'' choice probabilities. Then we randomly sample a set of O-D demand from its probability distribution, further randomly sample the route choice for each of the trips to obtain link/path flow, and finally add a perturbation of error (as the unknown error) to the link flow. The perturbation of error from $-\varepsilon$ to $\varepsilon$ is  generated as follows: the perturbed value is $\xi_p = \xi (1 + \texttt{rand}) \varepsilon$ when the actual value is $\xi$, where $\texttt{rand}$ is a sample uniformly distributed between $[-1, 1]$. We do this random sampling for a sequence of many trials, each of which is seen as the observation on one day. A subset of the link flow is used as the observations. The performance of the probabilistic ODE is assessed by comparing the estimated O-D demand to the ``true'' O-D demand \citep{antoniou2015towards}. In the second method, we estimate the probabilistic O-D demand using real-world day-to-day traffic flow count data using this IGLS framework, and check if the estimated O-D demand, along with GESTA, can reproduce the actual traffic flow observed for a set of days.

To measure the error of estimated O-D mean, we use Percentage Root Mean Square Error (PRMSE). Kullback-Leibler distance (KL distance) is used to measure the error of estimated O-D probability distribution.
\begin{eqnarray}
PRMSE(q, q^{true}) &=& 100 \% \times \sqrt{\frac{\sum_{rs}\in K_q (q_{rs} - q_{rs}^{true})^2}{|K_q|}} \times \frac{|K_q|}{\sum_{rs \in K_q} q_{rs}^{true}}\\
D_{KL}((q, \Sigma_q)^T, (q^{true}, \Sigma_q^{true})^T) &=& \frac{1}{2} \left(\log \frac{|\Sigma_q^{true}|}{|\Sigma_q|}  - d + \text{tr}\left((\Sigma_q^{true})^{-1} \Sigma_q\right) + (q^{true} - q)^T (\Sigma_q^{true})^{-1}(q^{true} - q)\right)\nonumber\\
\end{eqnarray}

In the numerical experiments, we test a few different settings for the sub-problem of estimating O-D demand mean, as well as for the sub-problem of estimating variance/covariance. ``w/o EC" implies that Formulation~\ref{eq:TraF} is adopted without an equilibrium constraint. We use ``Logit" and ``Probit'' to denote Logit-based GESTA and  Probit-based GESTA, respectively. When estimating O-D demand variance, ``w/o Lasso" represents the setting without the Lasso regularization and ``w/ Lasso" with the Lasso regularization.

\subsection{A small three-link network}
We first work with a toy network with three links, three paths and two O-D pairs as shown in Figure~\ref{fig:net2}. The Bureau of Public Roads (BPR) link travel time function is adopted,
\begin{eqnarray}
t_a(X_a) = t^0_a \left[1 + \alpha \left(\frac{X_a}{cap_a}\right)^\beta \right]
\end{eqnarray}
where $t^0_a$ is the free-flow travel time on link $a \in A$, $\beta = 4$, $\alpha = 0.15$ are constant parameters. $cap_a$ denotes the capacity of link $a$. Link settings are $t^0_1 =  t^0_2 = 10, t^0_3 = 5$, $cap_a = 360,\forall a\in A$. OD pairs $(1\to3)$ and $(2\to3)$ are considered, demand means are $q_{1\to3} = 700, q_{2\to3} = 500$. The variance of the O-D demand is set to be $25\%$ of the O-D demand mean. Probit-based GESTA is used as the ``true'' underlying statistical traffic assignment model.

\begin{figure}[h]
\centering
\includegraphics[scale = 0.7]{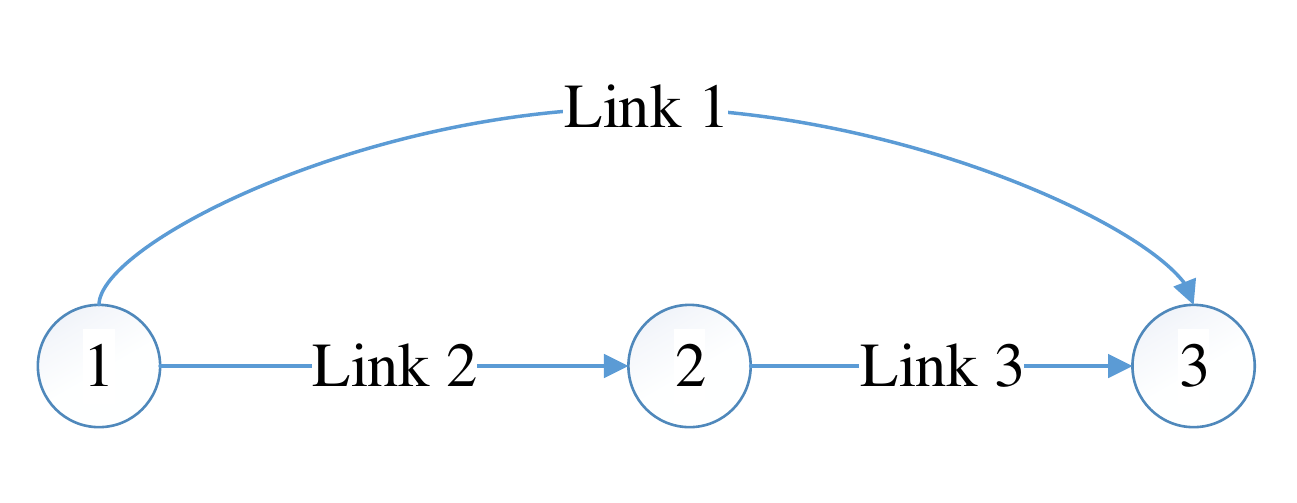}
\caption{\footnotesize A three-link toy network}
\label{fig:net2}
\end{figure}

\begin{figure}[h]
\centering
\includegraphics[scale = 0.6]{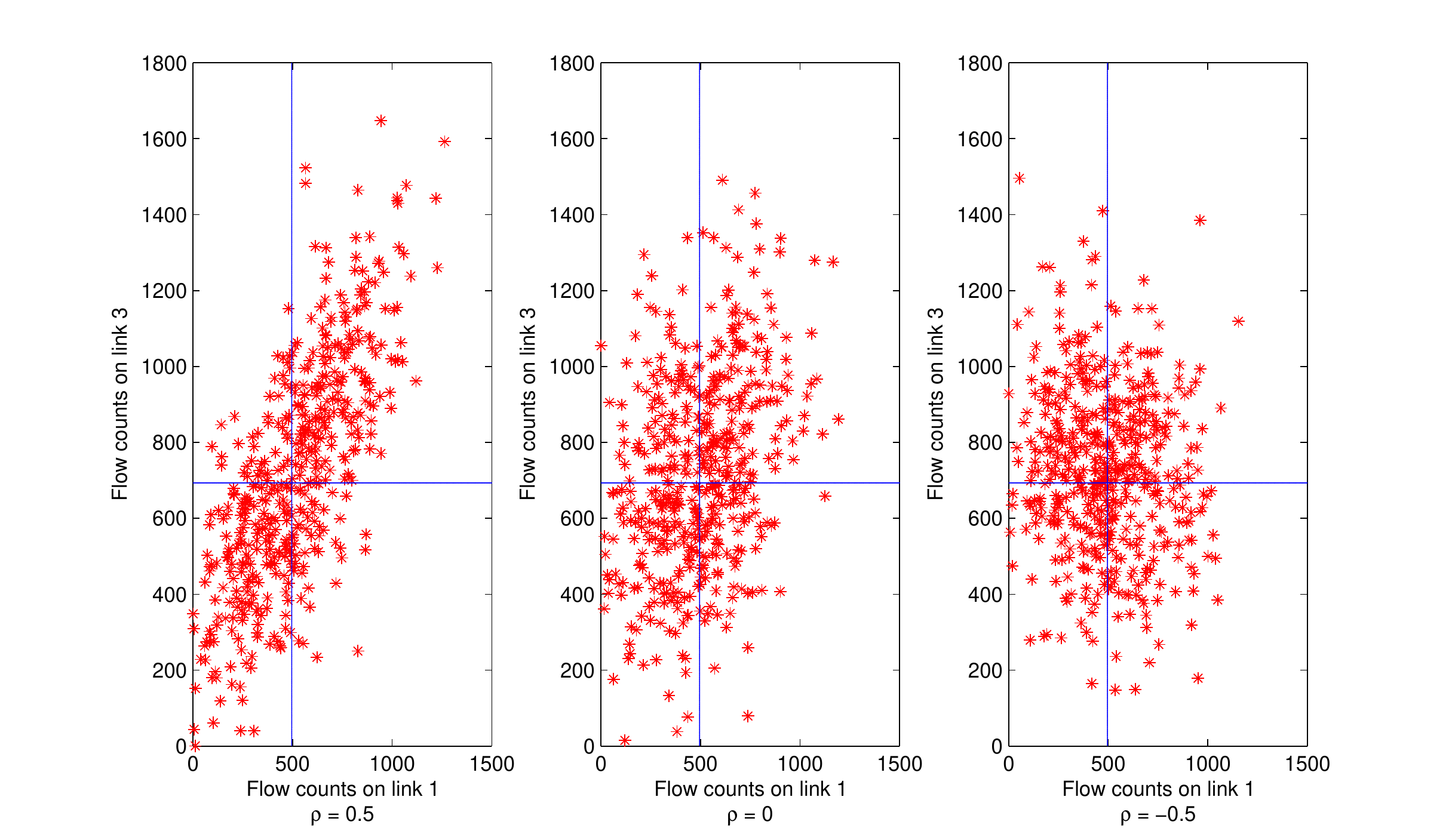}
\caption{\footnotesize Synthesized ``true'' link flow data for different correlation $\rho$}
\label{fig:2ddata}
\end{figure}

\subsubsection{Estimation results}
\label{sec:small_result}
Suppose we observe $500$ days' traffic counts on Link $1$ and Link $3$. Those observations are used to estimate the probabilistic O-D demand by different modeling settings. We tested three demand patterns where the ``true'' synthesized correlation of the demand between the two O-D pairs $\rho$ is $-0.5, 0, 0.5$ respectively. For each correlation, the observed data is synthesized and presented in Figure~\ref{fig:2ddata}. The ODE would not know the true underlying Probit route choice model. Instead, ODE would ``speculate'' a specific route choice model, Logit, Probit or no route choice model, to examine their respective performance. Furthermore, the historical O-D information is unknown. Given traffic observations of 500 days and a speculated route choice model, we apply the proposed probabilistic ODE method to estimate the probability distribution of O-D demand, all results are presented in Table~\ref{tab:basic_exp}. Note that when we apply the ODE with the Logit model, we identify the dispersion factor $\Theta=1$ such that it produces the best estimation performance of all possible values for $\Theta$. In fact, the value of $\Theta$ only marginally impacts the ODE performance, so the choice of its value does not affect our findings as much.

\begin{table}[h!]
\begin{center}
    \caption{\footnotesize Results of probabilistic ODE on the four-link toy network (no historic O-D demand information is used)}
    \label{tab:basic_exp}
    \begin{tabular}{c|cccccccc}
    \hline
    True $\rho$ & Settings &  $\hat{q}_{1 \to 3}$ & $\hat{q}_{2 \to 3}$ & $\hat{\sigma}^2_{1\to 3}$ & $\hat{\sigma}^2_{2\to 3}$ & $\hat{\rho}$ & RMPSE & KL-distance \\
    \hline \hline
    ~ & True value &  $700$ & $500$ & $175$ & $125$ & NA & NA & NA \\
    \hline
    \multirow{3}{*}{$0.5$} & w/o EC - w/o Lasso & $722.17$ & $500.41$ & $186.69$ & $134.21$ & $0.56$ & $3.62 \%$ & $3.64$ \\
                           & Logit - w/o Lasso     & $682.36$ & $499.63$ & $207.94$ & $134.21$ & $0.50$ & $2.08 \%$ & $1.17$ \\
                           & Probit - w/o Lasso     &  $699.50$   & $499.63$   & $200.94$    & $134.21$    & $0.52$  & $0.07 \%$     & $0.01$           \\
    \hline
    \multirow{5}{*}{$0$} & w/o EC - w/o Lasso & $715.91$ & $500.46$ & $143.05$ & $ 138.74$ & $0.03$ & $1.87 \%$ & $0.74$ \\
                           & Logit - w/o Lasso     & $681.28$ & $500.46$ & $162.49$ & $ 138.75$ & $0.02$ & $2.21 \%$ & $1.01$ \\
                           & Probit - w/o Lasso     & $700.30$    & $ 500.46$    & $152.15$    & $ 138.75$    & $0.03$   & $0.06 \%$     & $0.01$           \\
                           & Logit - w/ Lasso     & $681.28$ & $500.46$ & $ 144.52$ & $128.75$ & $0.00$ & $2.21 \%$ & $1.01$ \\
                           & Probit - w/ Lasso     & $700.02$ & $ 500.46$ & $ 132.27$ & $ 128.75$ & $0.00$ & $0.05 \%$ & $0.004$ \\
    \hline
    \multirow{3}{*}{$-0.5$} & w/o EC - w/o Lasso & $703.41$ & $499.06$ & $173.34$ & $132.60$ & $-0.41$ & $0.43 \%$ & $0.04$ \\
                           & Logit - w/o Lasso     & $681.05$ & $499.06$ & $184.13$ & $132.60$ & $-0.39$ & $2.23 \%$ & $1.47$ \\
                           & Probit - w/o Lasso     & $701.71$ & $499.06$ & $ 174.19$ & $ 132.60$ & $-0.41$ & $0.23 \%$ & $0.02$ \\
    \hline
    \end{tabular}
\end{center}
\end{table}

First of all, we obtain not only the O-D mean vector but also the O-D variance/covariance (namely the correlation in this example) by applying the probabilistic ODE to make the best use of all 500 days' data. As can be seen in Figure~\ref{fig:2ddata}, the daily average flow counts are the same for different $\rho$. If we apply the deterministic ODE, then the 500 days' data are taken the average to estimate the O-D demand mean. Day-to-day traffic data are not fully used, and the correlations of O-D demand among O-D pairs are overlooked. The probabilistic ODE provides more insights of O-D demand that would be needed for both transportation planning and operation.

When comparing different settings of the probabilistic ODE, all those settings yield acceptable estimations. The RMPSEs of the probabilistic ODE, under different true demand patterns, route choice models, and whether equilibrium constraints are considered in PFE, are all below $4\%$. In general, the probabilistic ODE with different settings accurately estimates not only the mean but also the variance and covariance of O-D demand.

Since the ``true'' link flow is generated based on the Probit model, it is no surprise that the ODE with the Probit model yields the best performance among all route choice models, consistently for all $\rho$ values. The ODE with the Logit model also provides a good estimate, close enough to the ODE with the Probit model, as long as the dispersion factor $\Theta$ is set properly. However, the Logit based ODE is biased since it only approximates, but cannot fully capture the true demand correlation between the two O-D pairs. If no equilibrium constraints are used, the accuracy of estimate O-D demand can go both ways. This is also no surprise as a result of a much larger domain space for the path flow compared to Logit-based or Probit-based GESTA. For this small network, it outperforms the Logit model when $\rho = -0.5$ but is less accurate when $\rho = 0.5$.


Lasso regularization leads to a more accurate estimation when the true demand is independent between the two O-D pairs. The sparse O-D variance/covariance can be better interpreted with more insights, and may furthermore allow causal inference/analysis of trips made among traffic analysis zones. Lasso regularization will, however, make the estimator biased, a disadvantage of using sparse variance/covariance matrix. This is why the ODE performance with Lasso when the correlation is 0.5 or -0.5 is worse than the ODE without Lasso. If all the variance and covariance are substantial, using Lasso leads to a substantial bias for the estimator. Note that variance and bias is always a trade-off to make in the probabilistic ODE. When the weight parameter $\lambda$ for Lasso is carefully chosen (for example by cross validation), the bias may be small in exchange for a much more reliable estimator (namely a much smaller variance) comparing to the settings without Lasso regularization.

\subsubsection{Variance decomposition}

After obtaining the estimated probabilistic O-D demand, we can conduct variance analysis for the flow on each link. As an example to demonstrate the variance decomposition, we consider the estimated probabilistic O-D demand using observations drawn from $\rho = 0.5$ and the Probit model as the route choice model (listed in Table~\ref{tab:basic_exp}). The decomposition of link flow variance is presented in Figure~\ref{fig:decp}. Most of the day-to-day flow variance on Link $2$ comes from the route choice variation. In other words, change in demand levels does not affect the link flow on Link 2 as much. This is because Link $1$ has lower cost than Links 2 and 3 combined. Demand from node $1$ will prefer using Link $1$. As a result, the flow variance ratio attributed to O-D demand on Link $2$, $p_2^2 q_{1 \to 3}$, is low (where $p_2$ is the probability of choosing Link $2$ for demand from 1 to 3), while the flow variance ratio attributed to route choices, $p_1 p_2 q_{1 \to 3}$, is much higher (where $p_1$ is the probability of choosing Link $1$ for demand from 1 to 3). Link 1 is not used by the demand from node 2 to 3, and thus this demand and its route choice have an indirect impact on the flow variation on Link 1. Comparing to Link 1, Link 3 is directly affected by the demand of both O-D pairs. Consequently, Link $3$ has the highest day-to-day flow variance of all links.

\begin{figure}[h]
\centering
\includegraphics[scale = 0.35]{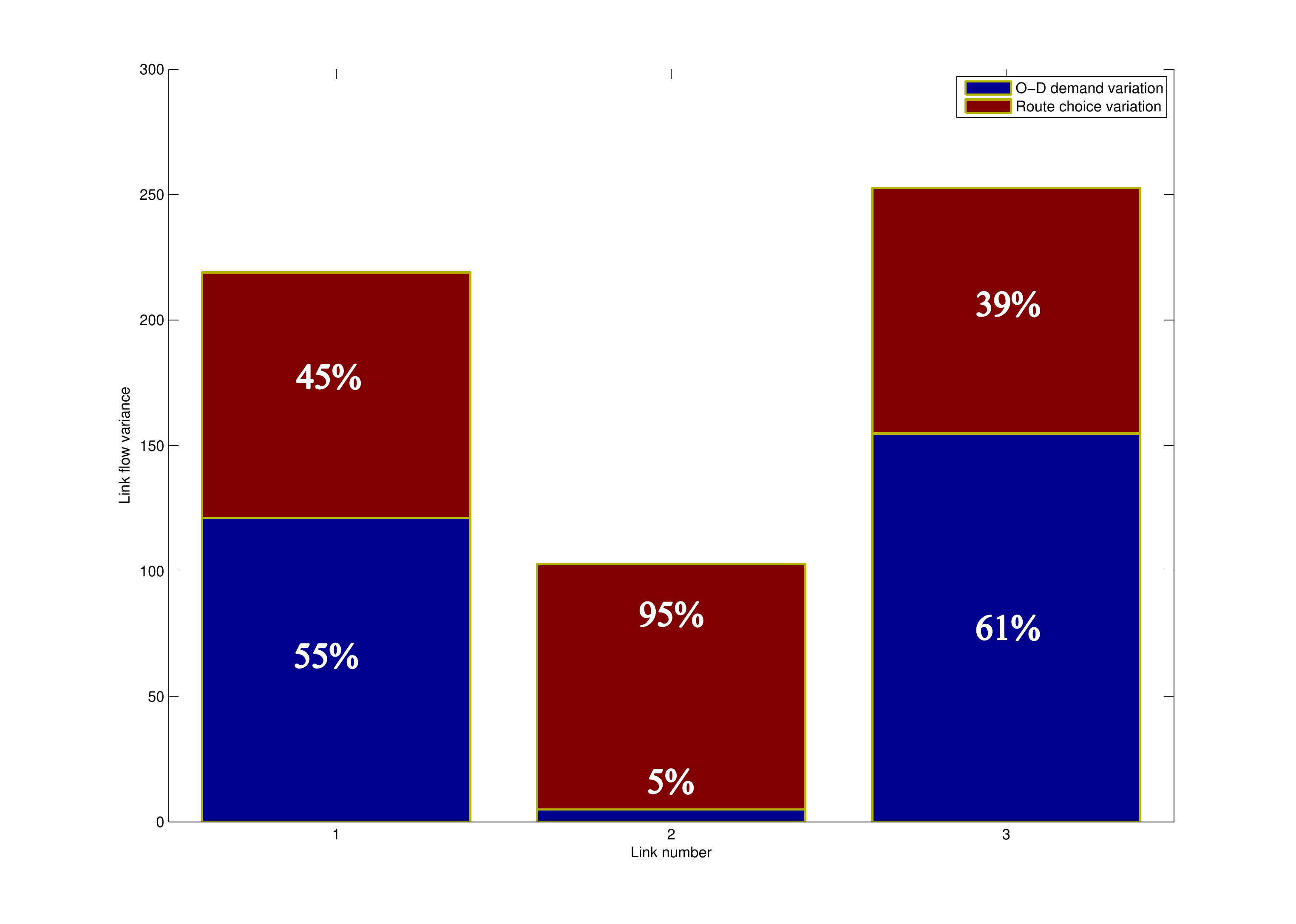}
\caption{\footnotesize Link flow variance decomposition}
\label{fig:decp}
\end{figure}

\subsubsection{Sensitivity analysis}
So far, we have not used any historical O-D demand information in the probabilistic ODE. Historic O-D information may have substantial impact to the performance of the probabilistic ODE. We now examine how sensitive the ODE results are with respect to the quality of historical O-D demand mean $q^H$. We change the historic O-D demand from $70 \%$ to $130\%$ of the ``true'' synthetic O-D demand. Other settings are the same as in previous experiments. For three ODE settings (no route choice, Probit and Logit), the KL distance from the estimated O-D demand to the ``true'' O-D demand are shown in Figure~\ref{fig:history} with respect to the quality of historical O-D demand data, represented by how far the historical O-D demand mean provided to the ODE is away from the ``true'' O-D demand mean. The variance/covariance matrix of the historical O-D demand is set to the identity matrix.

Shown in Figure~\ref{fig:history}, the probabilistic ODE without equilibrium constraints is the most sensitive to biased history O-D demand mean, as a result of a larger domain space. Its resultant KL distance is almost twice as much as the ODE with Probit-based GESTA, given the same inaccurate historical O-D demand mean. Provided with the accurate historical O-D demand mean, the ODE of both with Probit-based GESTA and without equilibrium constraints can accurately estimate the O-D demand mean. However, this is not the case for ODE with Logit-based GESTA. Generally, the ODE with Logit-based GESTA results a less KL distance than the ODE with Probit-based GESTA when the provided historical O-D demand is less than the ``true'' demand, and it results a great KL distance when the provided historical O-D demand is greater than the ``true'' demand. Clearly, the ODE with Logit-based GESTA embeds a prior bias on the demand mean, and tends to estimate more demand than the provided historical demand.  Overall, the ODE with Probit-based GESTA looks like the most robust estimator to the historical information in this case, possibly because it happens to use the ``true'' route choice model, namely the Probit model.

\begin{figure}[h]
\centering
\includegraphics[scale = 0.5]{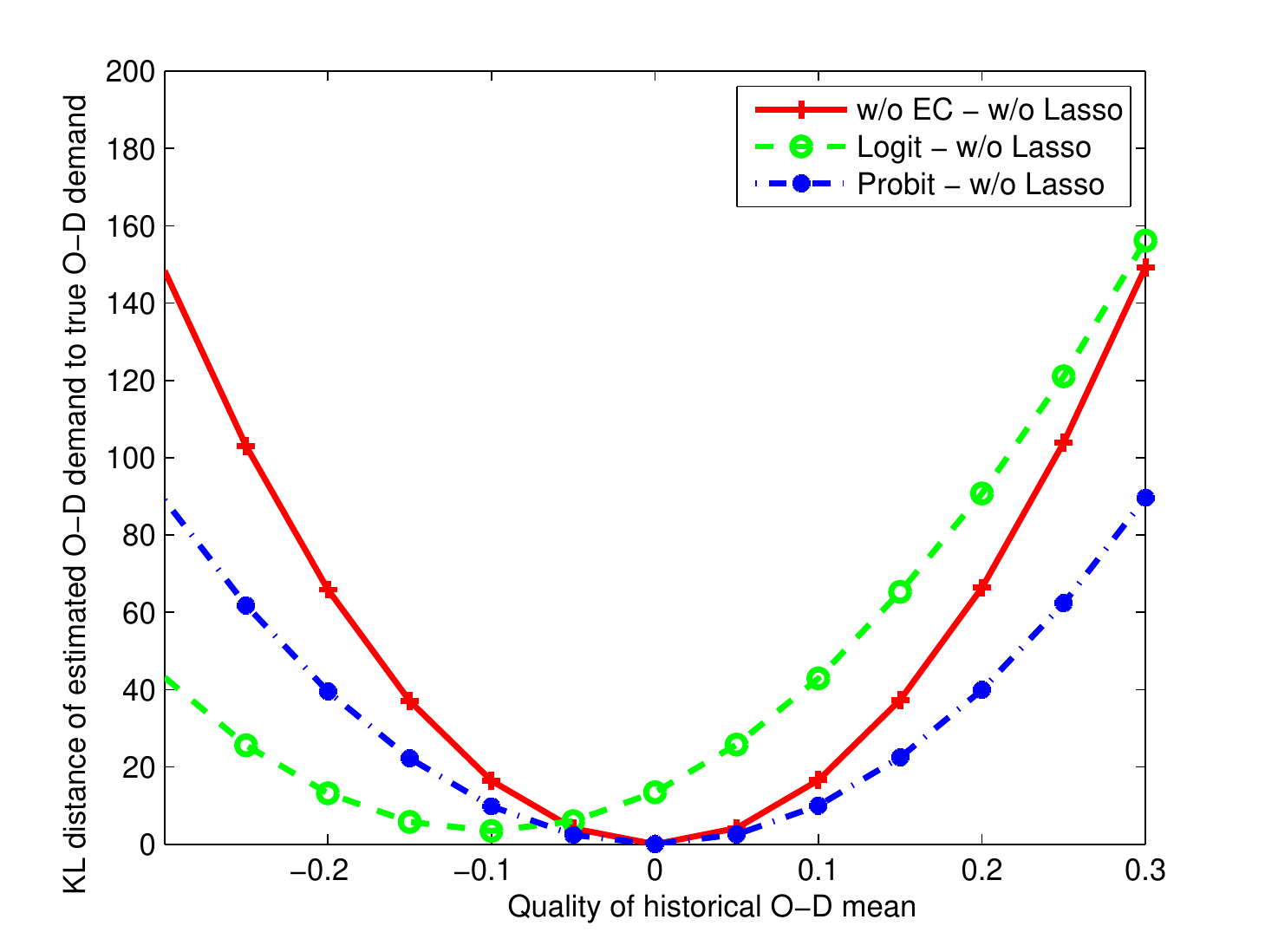}
\caption{\footnotesize Sensitivity analysis on the quality of history O-D demand mean}
\label{fig:history}
\end{figure}

Next, we examine the impact of data sample size to the ODE results. In this experiment, we again do not use the historical O-D information. The ODE without equilibrium constraints does not seem to improve as the sample size increases. This implies that 100 days of observation, in this case, does not necessarily guarantee reliable O-D demand estimate. However, if the GESTA with a route choice model is adopted, then increasing sample size can improve the ODE results. In this case, when the data sample size is more than 20, the ODE results are reasonably good. And when the size reaches 70, the ODE can provide the solution very close to the ``true'' probability distribution of the O-D demand.

\begin{figure}[h]
\centering
    \begin{subfigure}[b]{0.475\textwidth}
        \includegraphics[width=\textwidth]{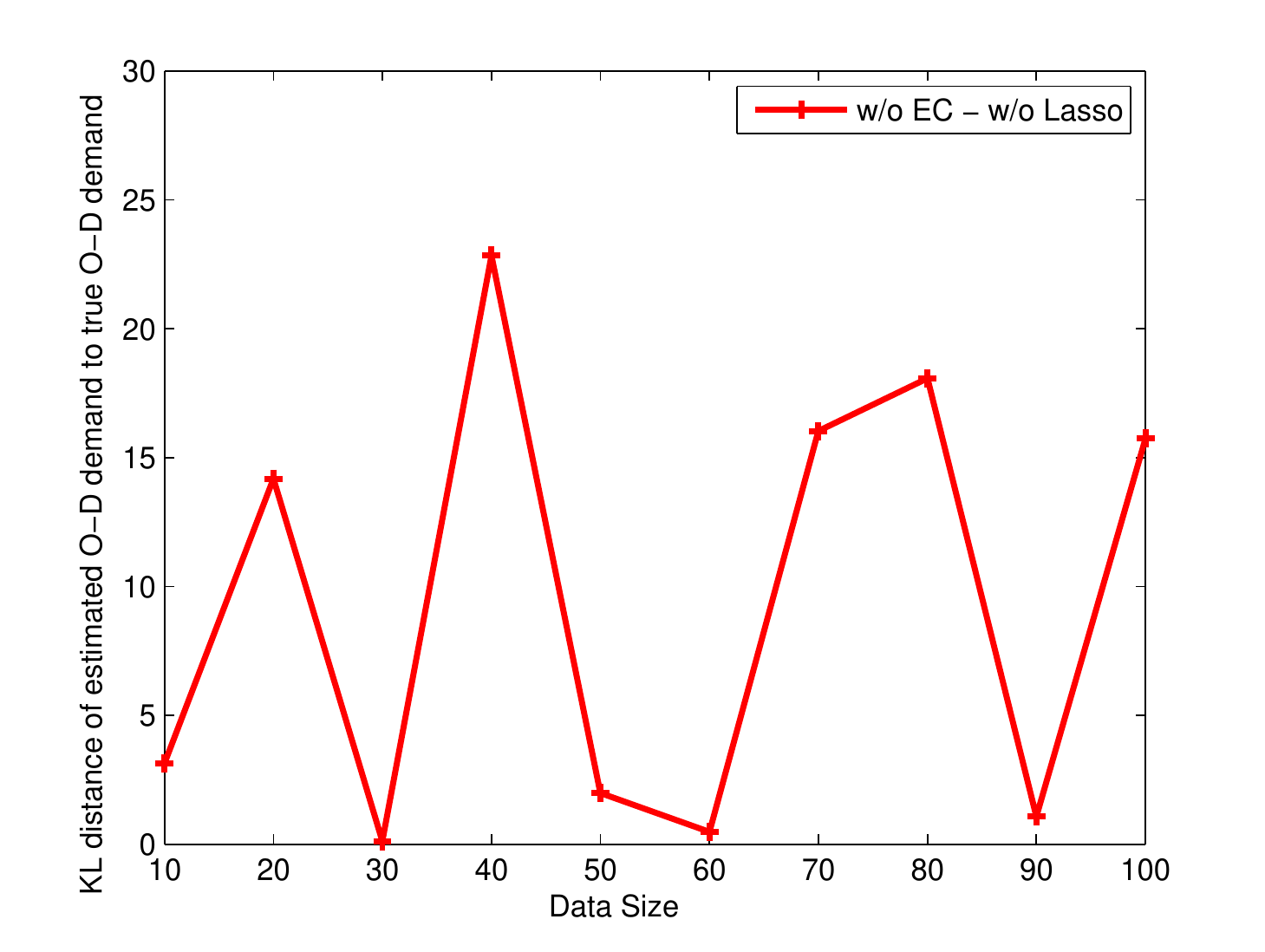}
    \end{subfigure}
    \begin{subfigure}[b]{0.475\textwidth}
        \includegraphics[width=\textwidth]{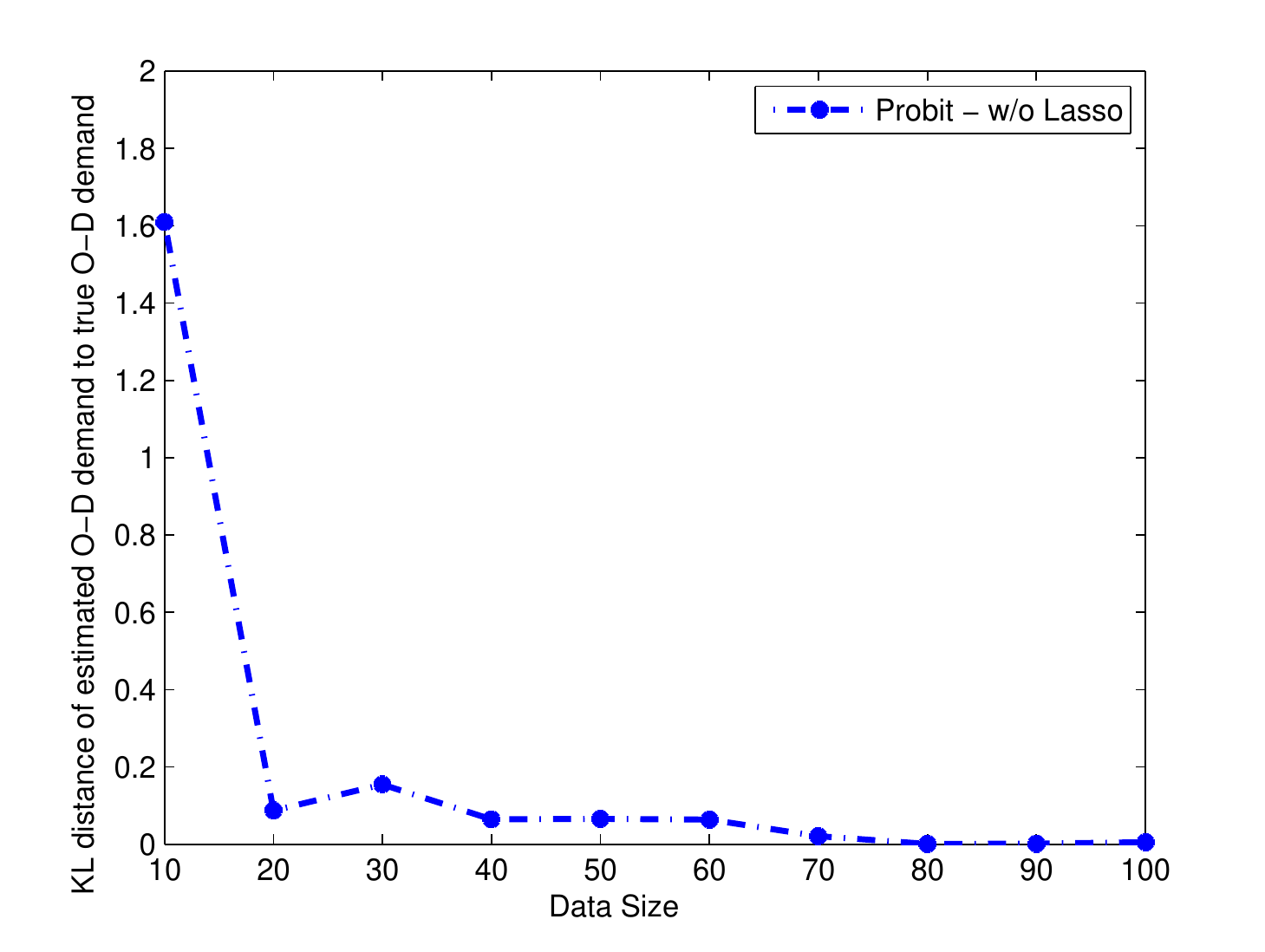}
    \end{subfigure}
    \caption{\footnotesize Sensitivity analysis on the data sample size}
    \label{fig:data}
\end{figure}

%
%

\subsubsection{Efficiency of solution algorithms: ISTA versus FISTA}

We now examine the performance of solution algorithms, ISTA and FISTA,  when solving Formulation~\ref{eq:LSwithLasso}. Changes in the objective function value against the number of iterations using both algorithms are presented in Figure~\ref{fig:fista}. Clearly, FISTA is more efficient in solving the minimization problem than ISTA. In the following experiments, we use FISTA as the sole solution algorithm.

\begin{figure}[h]
\centering
\includegraphics[scale = 0.6]{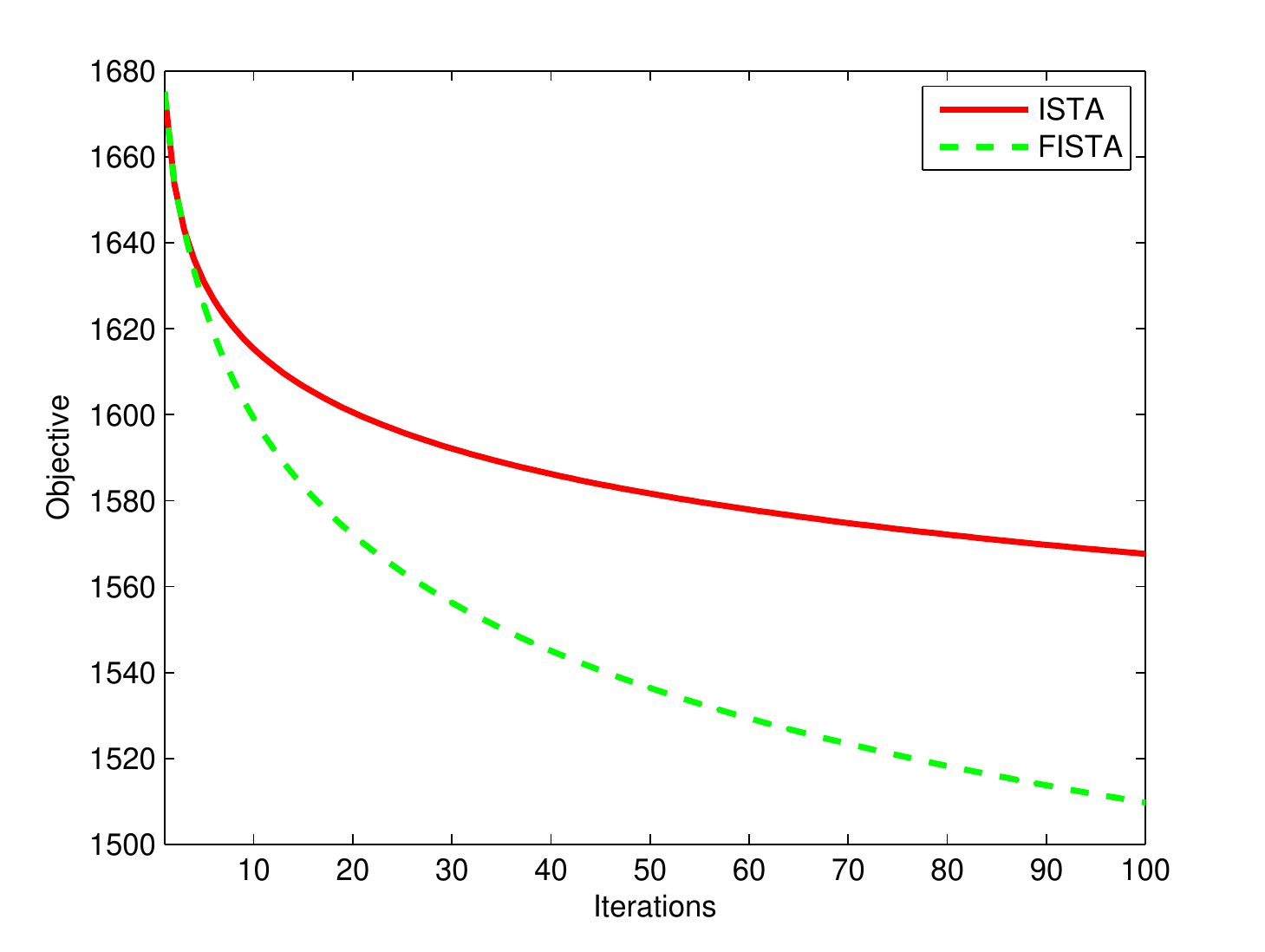}
\caption{\footnotesize Efficiency of solution algorithms: ISTA versus FISTA}
\label{fig:fista}
\end{figure}

\subsection{A second small network}
A second small toy network is used to demonstrate the effects of Lasso penalty term $\lambda$ on the estimation results. This toy network contains $6$ links, $7$ nodes and $5$ O-D pairs, as shown in Figure~\ref{fig:network3}. All 5 O-D pairs share the same link, Link 1. The free-flow travel time $t_a^0$ is $10$ for each link $a$. The capacity for Link $1$ is 1,800, and the capacity of Links 2, 3, 4, 5 and 6 are randomly drawn from $250$ to $750$. The standard BPR link performance function is adopted. The ``true'' O-D demand is synthesized by randomly drawing from $300$ to $700$ for each O-D pair. The ``true'' O-D variance/covariance matrix is set to contain $6$ zero entries out of the total $25$ entries. We synthesize the ``true'' variance of the O-D demand using its mean, and the ``true'' correlation factor is randomly drawn from $-0.5$ to $0.5$.

\begin{figure}[h!]
	\centering
	\includegraphics[scale = 0.3]{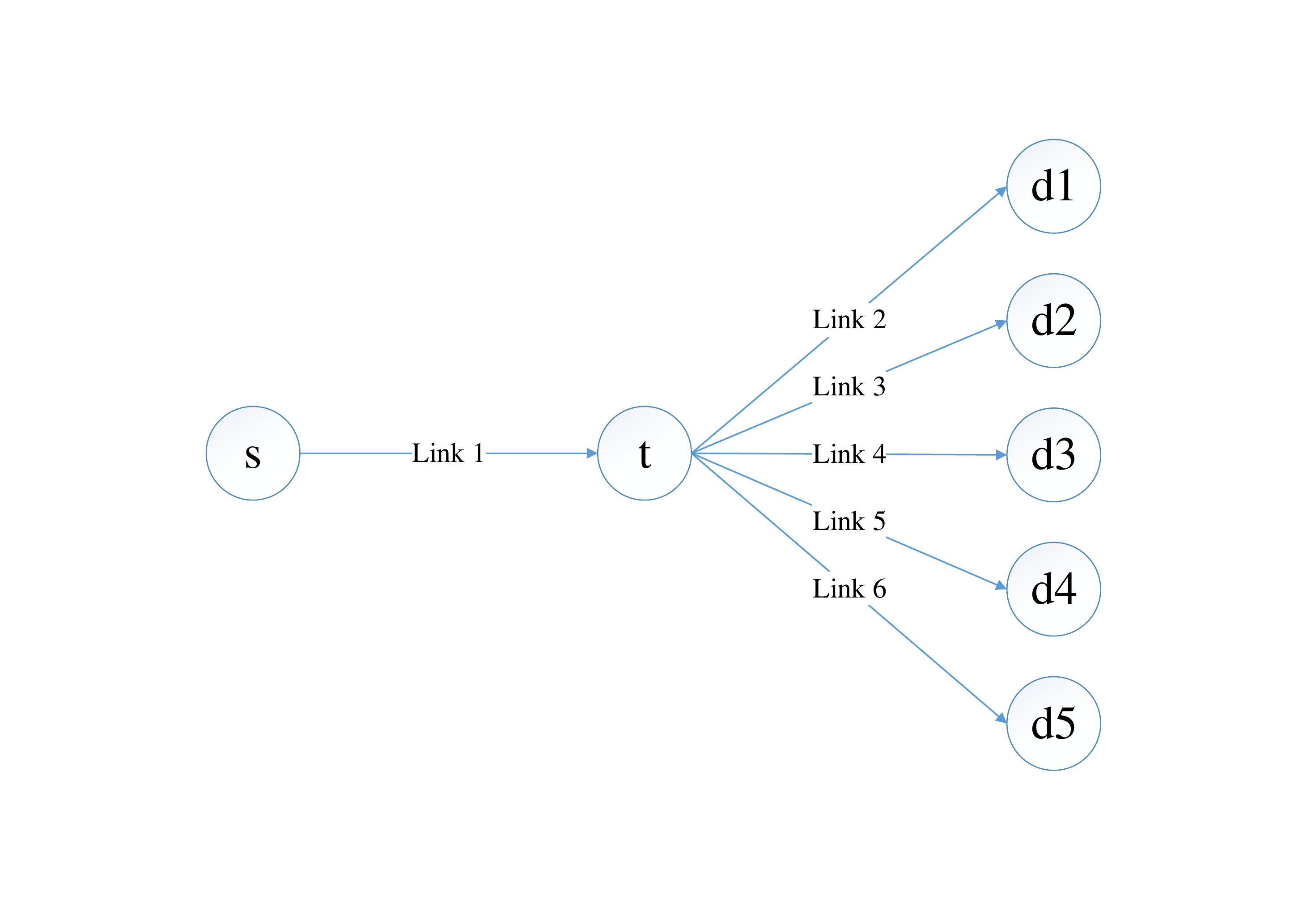}
	\caption{\footnotesize A second small network with five O-D pairs}
	\label{fig:network3}
\end{figure}

Suppose we have a full coverage on the network, namely, the flow counts on all the links are observed for in all $1000$ days. We estimate the probabilistic O-D demand using the proposed IGLS framework without using historical O-D information. Since we have a full coverage for all links, the estimated mean of the O-D demand is fairly accurate, whereas the estimated variance/covariance matrix is largely dependent on the LASSO penalization term $\lambda$. The relation between values of $15$ variance/covariance entries (due to the symmetry of this matrix) and $\lambda$ is presented in Figure~\ref{fig:path}, also known as coefficient paths for LASSO. Each path represents the value of one entry in the O-D variance/covariance matrix under different LASSO penalty $\lambda$. As can be seen from Figure~\ref{fig:path}, when the Lasso penalty is high,  three variance/covariance entries are the most significant, implying these three pairs are the most correlated. The results are consistent with the ``true'' O-D demand. On the other hand, if Lasso penalty $\lambda$ is too small, most of the entries are selected and non-zero, which is inconsistent with the ``true'' demand that has 6 zero entries. Only when the Lasso penalty $\lambda\in [0.05, 0.15]$, the variance/covariance matrix can be estimated accurately.



\begin{figure}[h!]
\centering
\includegraphics[scale = 0.6]{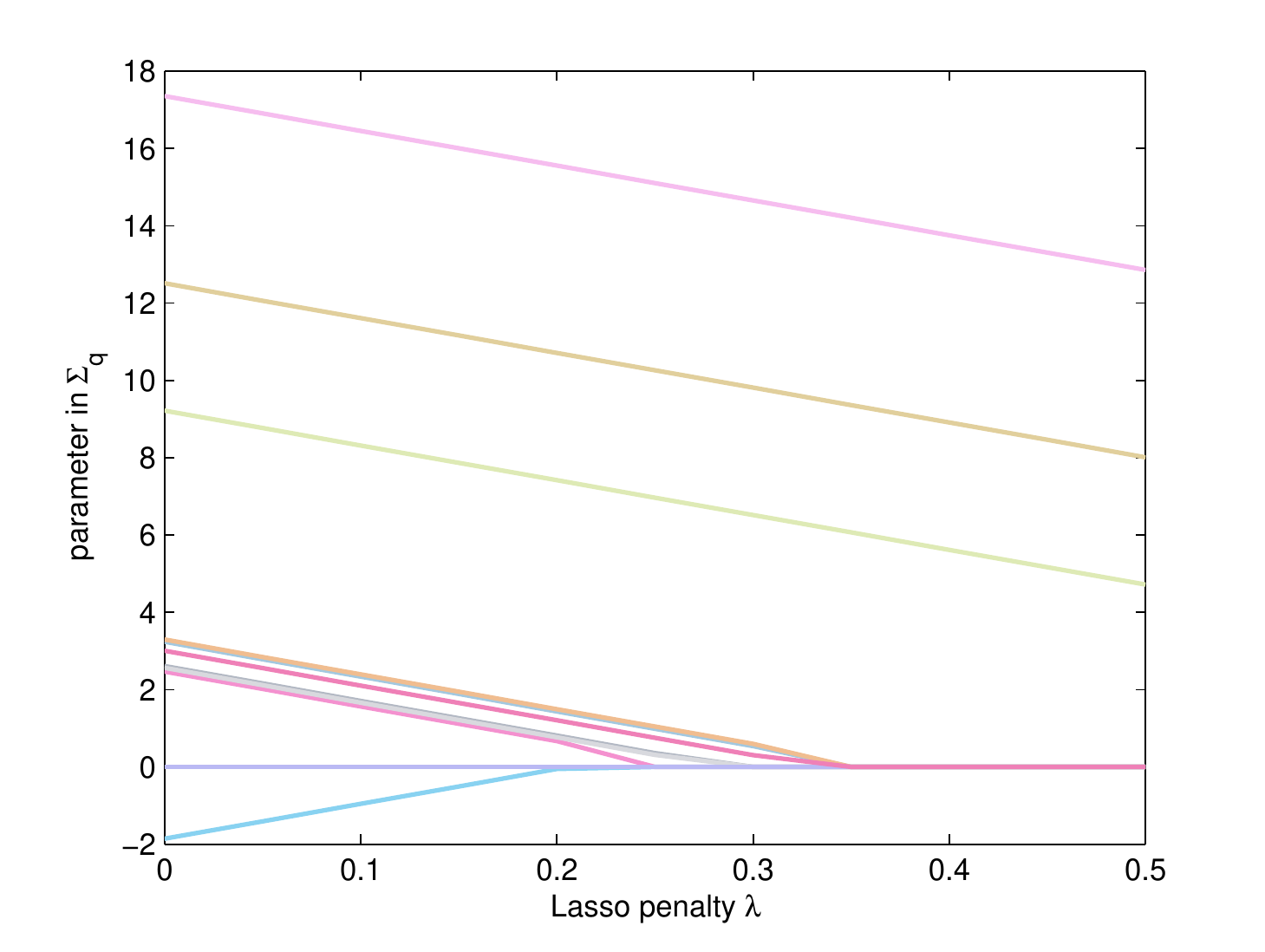}
\caption{\footnotesize Coefficient paths in $\Sigma_q$}
\label{fig:path}
\end{figure}

\subsection{A large-scale network: California SR-41 corridor network}

The proposed framework is now applied to a real-world network to demonstrate its computational efficiency. The SR-41 corridor network is located in the City of Fresno, California. This network consists of one major freeway and two parallel arterial roads connected with local streets. The network is presented in Figure~\ref{fig:network_sr41}, containing 2,413 links and 7,110 O-D pairs. Its O-D demand mean was calibrated by \citet{liu2006streamlined, zhang2008developing}.

\begin{figure}[h!]
\centering
\includegraphics[scale = 0.4]{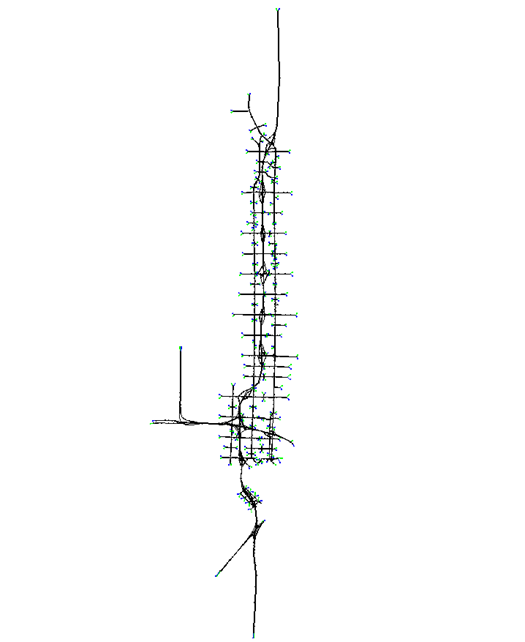}
\caption{\footnotesize The California SR-41 corridor network}
\label{fig:network_sr41}
\end{figure}

We assume that the ``true'' O-D demand variance is the same as its mean (similar to a Poisson distribution), and that $10\%$ of O-D pairs (randomly chosen) are mutually correlated with a correlation randomly drawn from $-0.5$ to $0.5$. We randomly choose $50\%$ of the links on the network to be observed for 1,000 days. Again, a standard BPR function is used for all links. We use Logit-based GESTA as the underlying statistical traffic assignment model, paired with Lasso regularization and historical O-D demand mean. The historical O-D demand variance/covariance matrix is set to the identity matrix. As for the historical O-D demand mean, we uniformly sample a perturbation value from -20\% to 20\%, independently for each O-D pair. It is also used as the initial values for the probabilistic ODE process. The path set is generated by running $3$-shortest paths algorithm for each O-D pair before the ODE process. In all 17,835 paths are considered. The path set for the entire network is assumed to be pre-determined and fixed during this estimation process. The proposed method is developed under MATLAB 2014a and runs on a regular desktop computer (Inter(R) Core i5-4460 3.20 GHz $\times$2, RAM 8 GB).  As a result, the average computation time for one iteration of updating both mean and variance/covariance matrix is $301.82$ seconds. The memory usage over first $10$ IGLS iterations is presented in Figure~\ref{fig:memory}. In general, the sub-problem of estimating O-D demand mean consumes less memory than the sub-problem of estimating the variance/covariance matrix. Peak memory usage is around $4.5$GB for this network settings. The memory usage is closely related to the sparsity level of the O-D variance/covariance matrix.

\begin{figure}[h!]
\centering
\includegraphics[scale = 0.5]{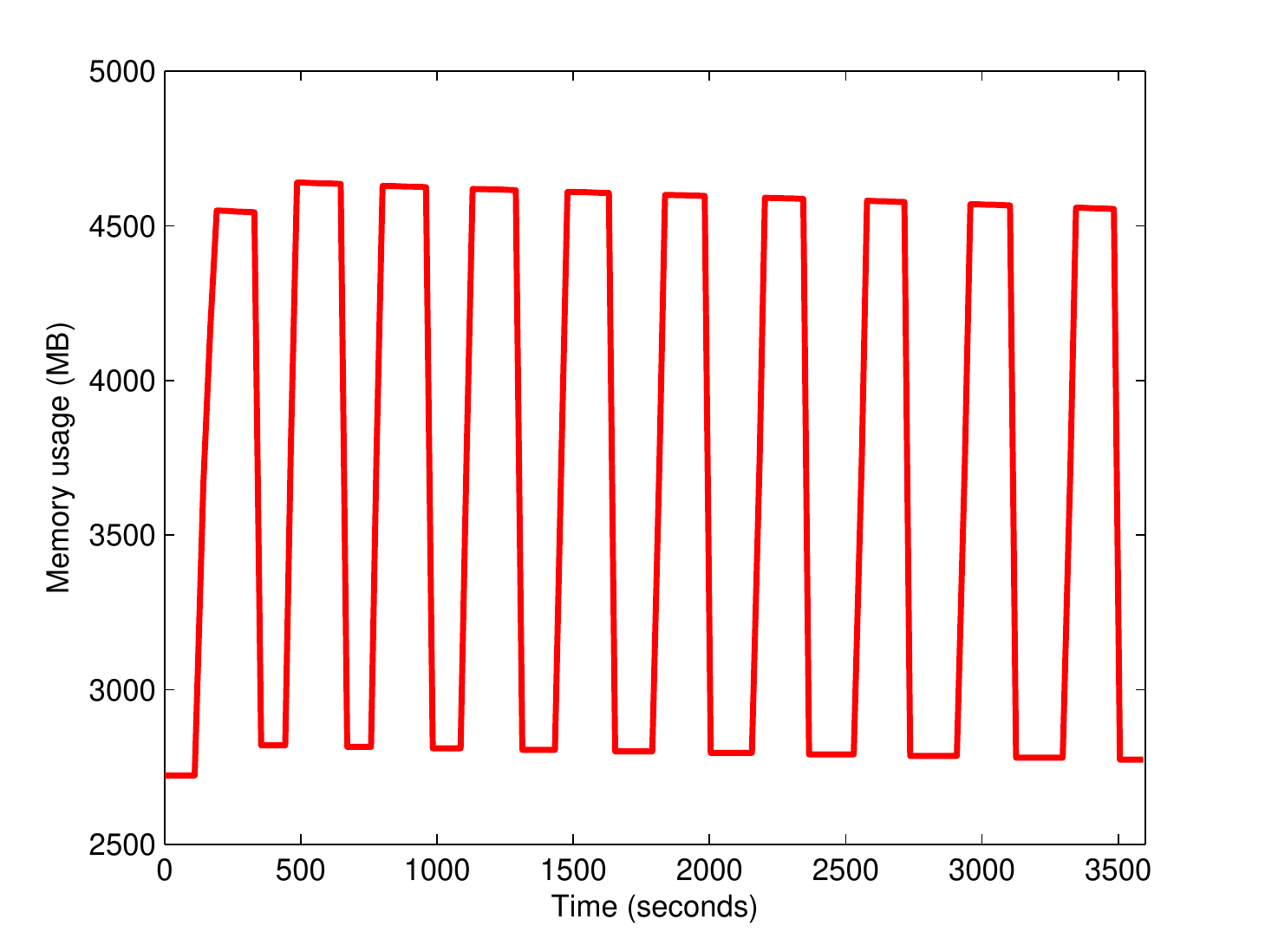}
\caption{\footnotesize SR-41 network: memory usage over first $10$ IGLS iterations}
\label{fig:memory}
\end{figure}

We perform $99$ iterations for the entire IGLS framework. Under each IGLS iteration, we perform $9$ iterations for each of the sub-problems. The convergence of both O-D demand mean and variance/covariance matrix is presented in Figure~\ref{fig:sr41iteration}, at the level of iterations for sub-problems. As can be seen, both sub-problems can be solved very efficiently. The entire process of 900 iterations takes $486$ minutes, but the estimate is reasonably good within approximately $300$ minutes. In addition, we plot the estimated path flow mean against ``true'' flow mean, as well as estimated O-D demand variance/covariance against ``true'' variance/covariance in Figure~\ref{fig:sr41accuracy}. Figure~\ref{fig:sr41ove} plots and the estimated link flow mean against ``true'' flow mean, as well as the estimated link flow variance, against ``true'' variance/covariance of the marginal distributions of link flow. The proposed probabilistic ODE seems computationally plausible on a sizable network and is able to achieve reasonably accurate results, approaching the synthesized ``true'' day-to-day demand mean and covariance.

\begin{figure}[h!]
\centering
\includegraphics[scale = 0.35]{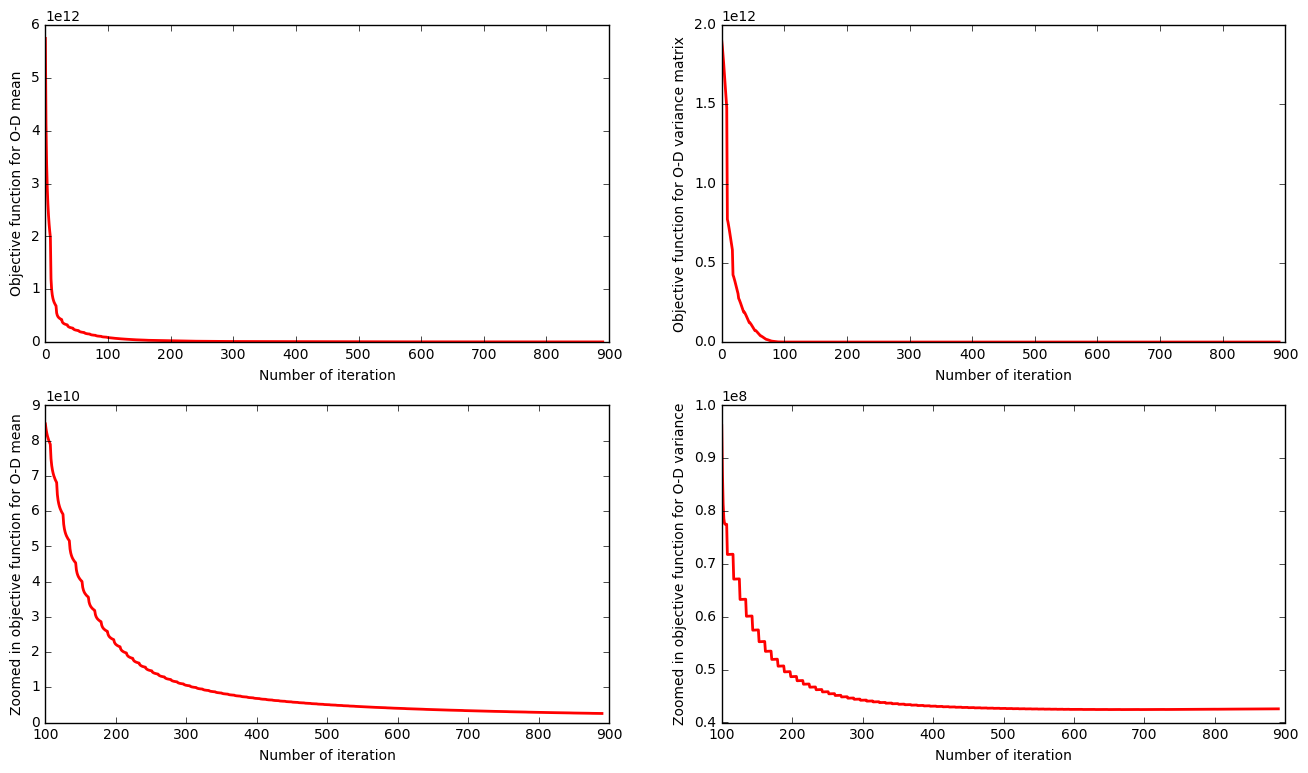}
\caption{\footnotesize Convergence for both O-D demand mean and covariance matrix}
\label{fig:sr41iteration}
\end{figure}

\begin{figure}[h!]
\centering
\includegraphics[scale = 0.55]{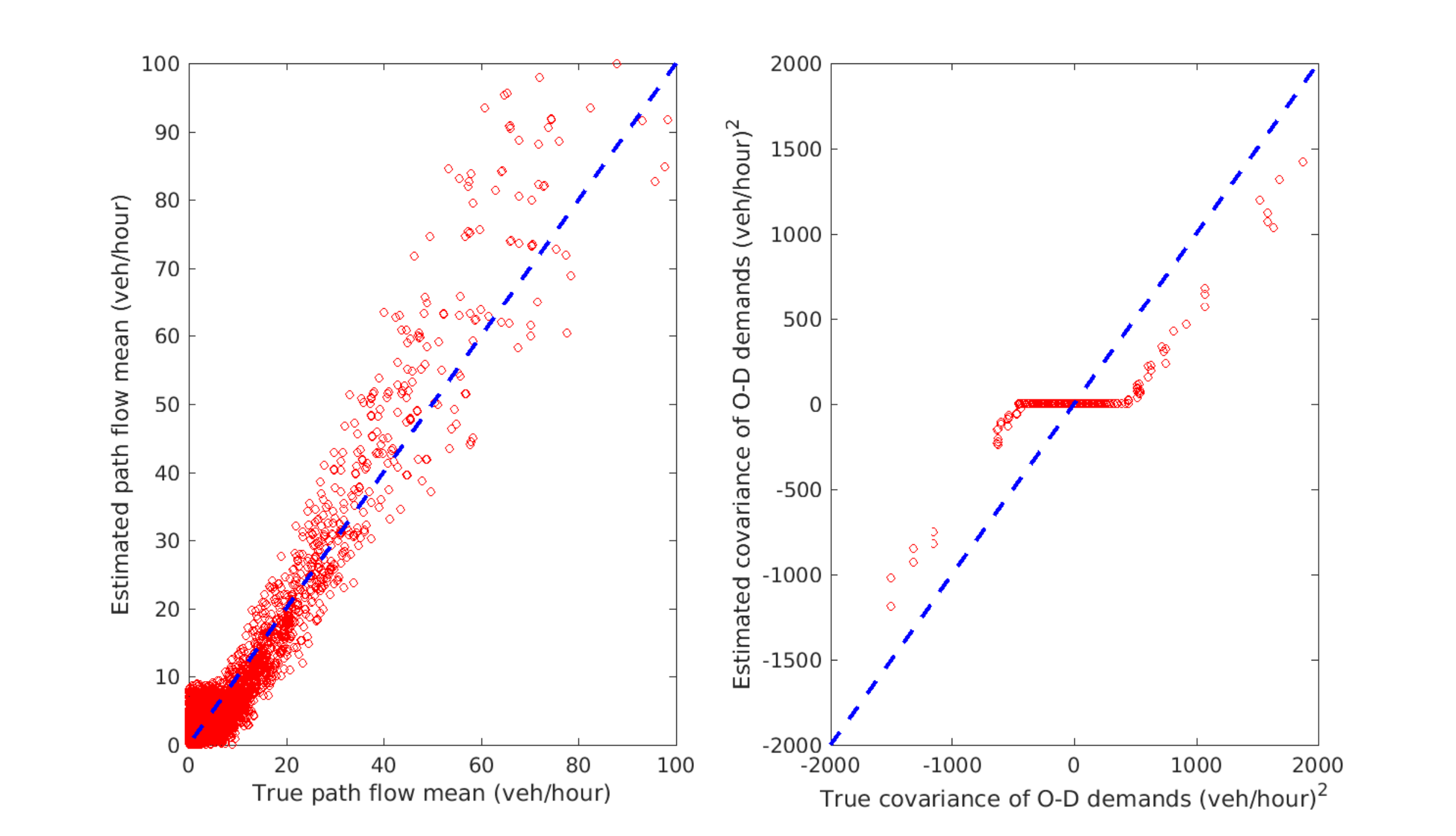}
\caption{\footnotesize Estimated and ``true'' path flow mean and O-D demand variance}
\label{fig:sr41accuracy}
\end{figure}

\begin{figure}[h!]
	\centering
	\includegraphics[scale = 0.5]{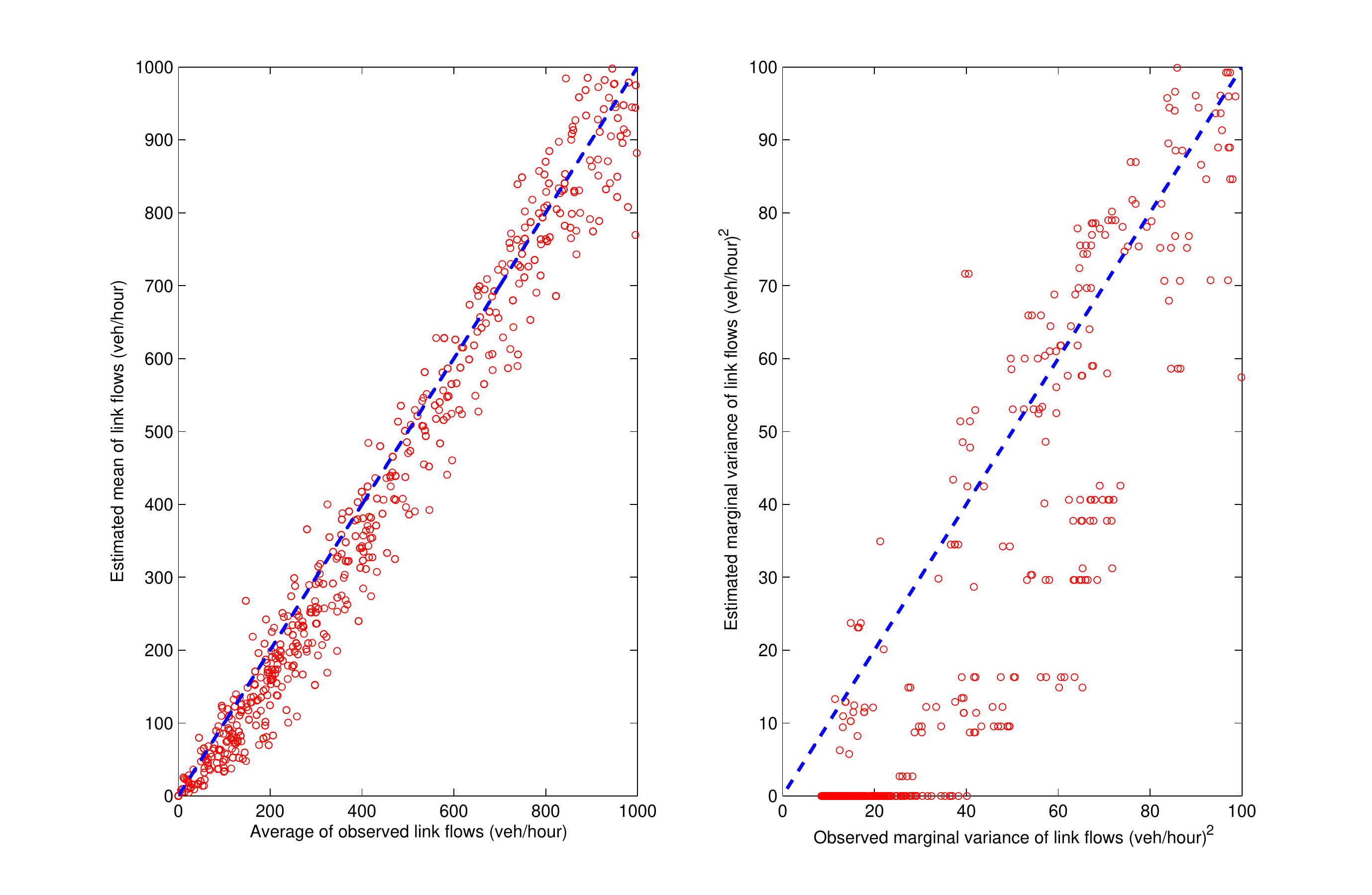}
	\caption{\footnotesize Estimated and ``true'' link flow (Left: mean; Right: variance of the marginal distributions)}
	\label{fig:sr41ove}
\end{figure}

One interesting result is that the Lasso regularization on the demand variance/covariance matrix unavoidably leads to a biased estimation, as can be seen from both Figure~\ref{fig:sr41accuracy} and Figure~\ref{fig:sr41ove}. Most of the variance/covariance (for both O-D demand and link flow) are either estimated as zeros, or substantially underestimated, as a result of Lasso shrinking. However, without the Lasso regularization, most of those variance/covariance entries in the matrix that are substantial can go way off the chart. Again, a proper Lasso penalty ensures a good trade-off between bias and variance of the estimation. In practice, trial-and-error may be needed to identify a proper Lasso penalty.

We also conduct another experiment for this network setting with positively correlated O-D demand, namely 10\% of O-D pairs (randomly chosen) are mutually correlated with a correlation randomly drawn from $0$ to $0.5$. All other settings are the same as before. This experiment aims at examining the robustness of the proposed probabilistic OD estimator. The estimation results are presented in Figure~\ref{fig:sr41again}. Clearly the proposed method accurately estimates the probabilistic O-D demand in terms of both the mean and variance-covariance matrix.

\begin{figure}[h!]
	\centering
	\includegraphics[scale = 0.55]{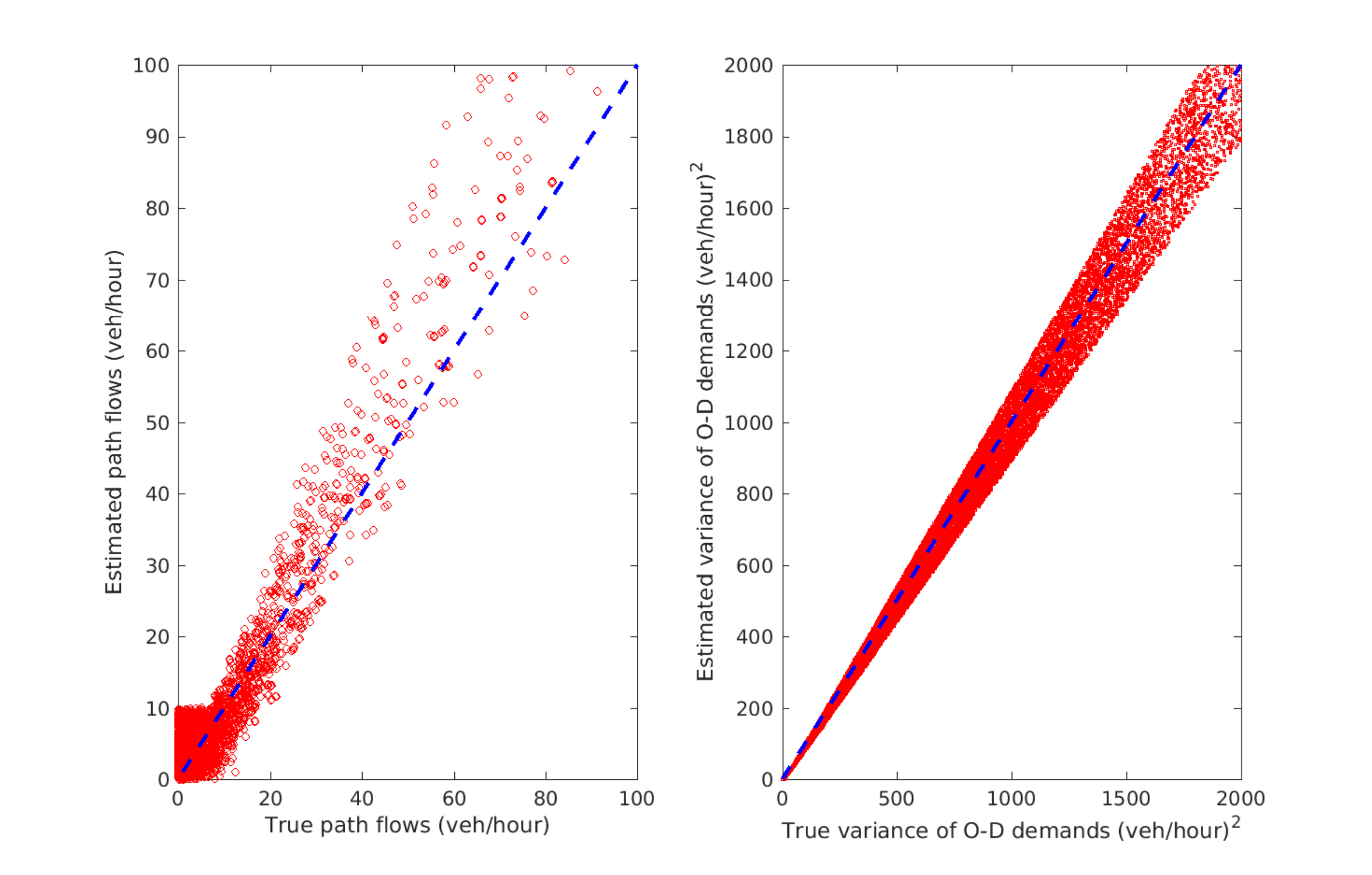}
	\caption{\footnotesize Estimated and ``true'' path flow mean and O-D demand variance when all true correlations are positive}
	\label{fig:sr41again}
\end{figure}

\subsection{A second large-scale network: Washington D.C. Downtown Area}
Previous experiments are conducted in a simulated environment where observations data were synthesized. In this subsection, we apply our probabilistic ODE to a real world network in Washington D.C. by using the actual day-to-day traffic count data. 

\begin{figure}[h!]
\centering
\includegraphics[scale = 0.4]{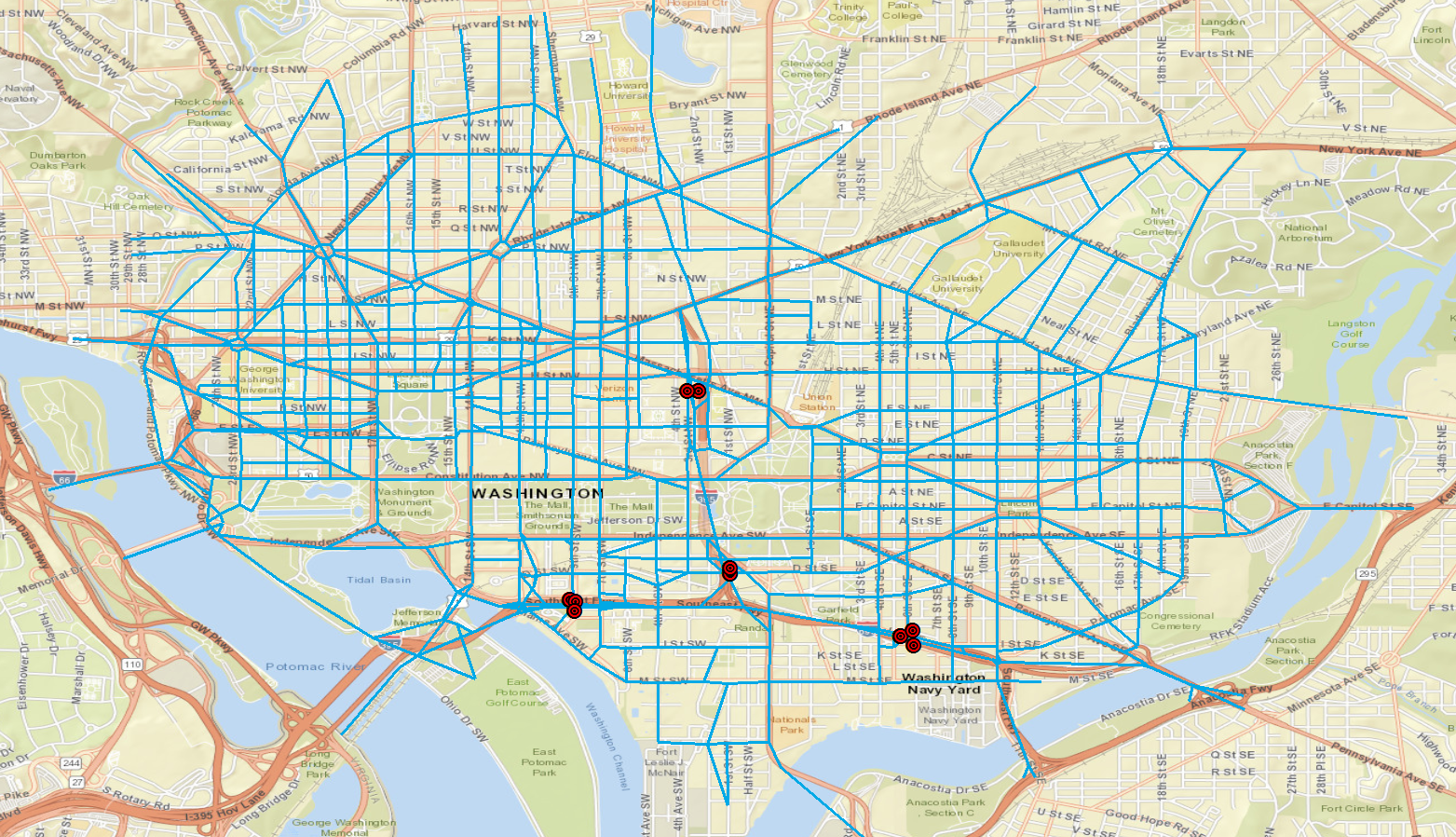}
\caption{\footnotesize The Washington D.C. Downtown network}
\label{fig:dc}
\end{figure}

This network is generally a grid network that consists of $984$ road junctions, 2,585 road segments and 4,900 O-D pairs, The overview of the network is shown in Figure~\ref{fig:dc}. Red dots on the map represent those active fixed-location sensors. There are in all $51$ sensors in the region, while only $10$ of those sensors are working under healthy conditions and collecting data continuously from 2008. We obtain the traffic counts data from August 2008 through December 2015. Around 2,000 data samples were observed for each sensor. We box plot the aggregated traffic counts of all sensors during the morning peak over the years (four of them are presented in Figure~\ref{fig:dc_count}). The box plot shows that the mean and variance of each sensor do not change as much over the $7$ years. Thus, we decide to use all the days for each sensor to estimate the probabilistic O-D demand representing the demand over a course of 7 years.

\begin{figure}[h!]
\centering
\includegraphics[scale = 0.5]{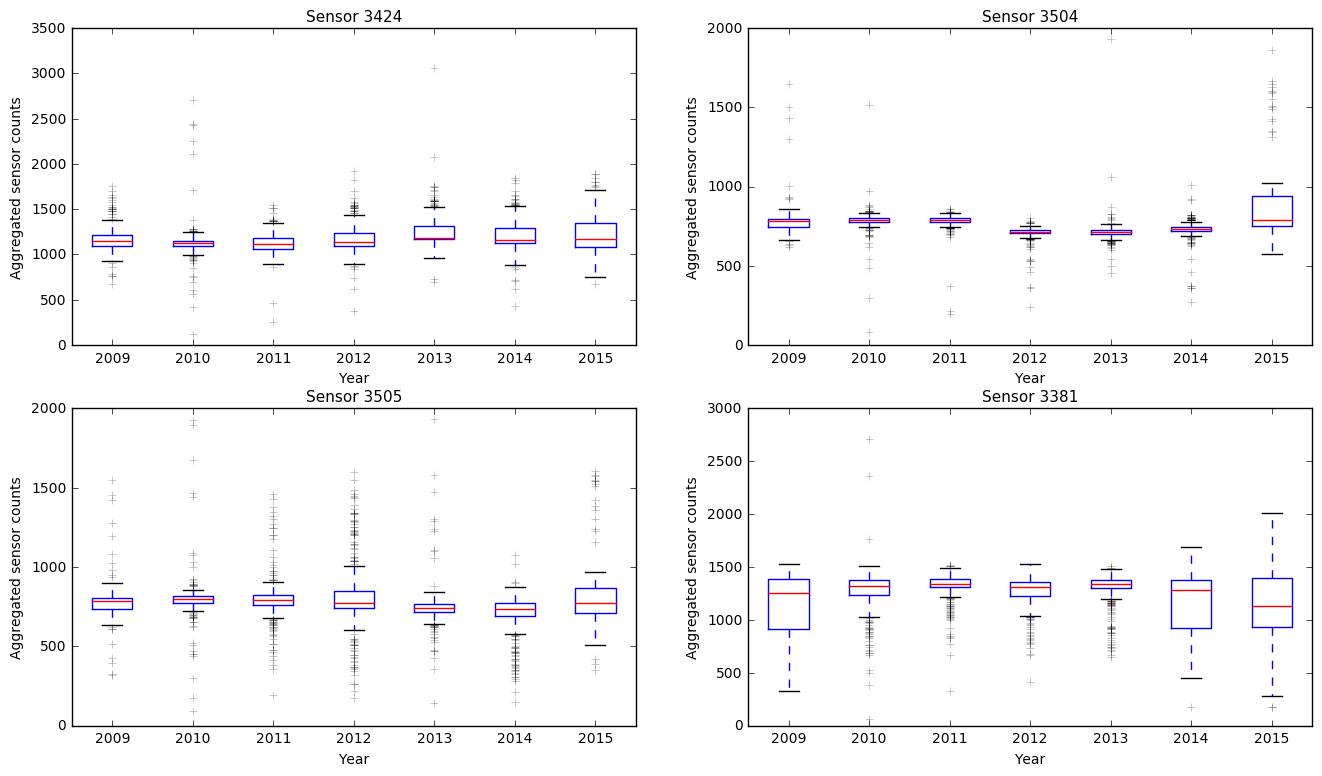}
\caption{\footnotesize Aggregated traffic counts during the morning peak for four selected sensors from 2009 to 2015}
\label{fig:dc_count}
\end{figure}

We again use Logit-based GESTA as the underlying statistical traffic assignment model, paired with Lasso regularization and historical O-D demand mean. The historical O-D demand variance/covariance matrix is set to the identity matrix. The historical O-D demand mean is obtained from the planning model of Year 2013 developed by Metropolitan Washington Council of Governments, which is also used as the initial demand mean for the solution process. The initial demand variance/covariance matrix is randomly generated.

The convergence of both sub-problems are presented in Figure~\ref{fig:dc_converge}. Overall, the solution algorithm performs well on both sub-problems. It takes $40.18$ minutes to complete all 900 iterations using the same programming environment and aforementioned computer. The estimated O-D demand, under the probabilistic ODE framework, is able to reproduce the day-to-day observations on those observed links,  as shown in Figure~\ref{fig:dc_result}. Again, the variance of the marginal distributions of most observed links is underestimated (or estimated as zeros) as expected, due to Lasso regularization. In addition, the estimation seems robust to a few outliers identified in Figure~\ref{fig:dc_count}. Overall, the results of the probabilistic ODE are compelling and satisfactory. However, we speculate that the initial variance/covariance matrix can be critical to the final estimation results. Some prior knowledge about the variance/covariance of demand among O-D pairs can be obtained from traditional planning models, which may help improve the estimation results for real-world networks.

\begin{figure}[h!]
	\centering
	\includegraphics[scale = 0.5]{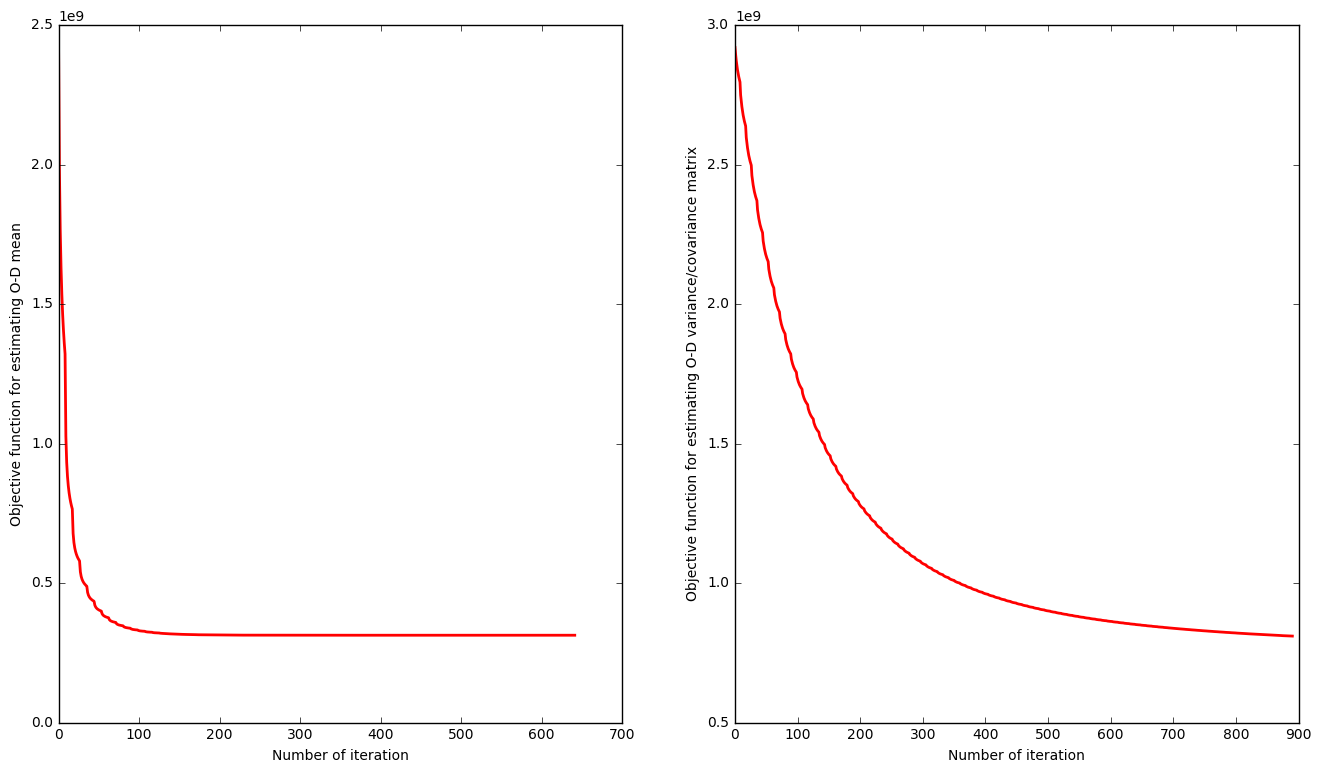}
	\caption{\footnotesize Convergence of both sub-problems for the D.C network}
	\label{fig:dc_converge}
\end{figure}


\begin{figure}[h!]
	\centering
	\includegraphics[scale = 0.3]{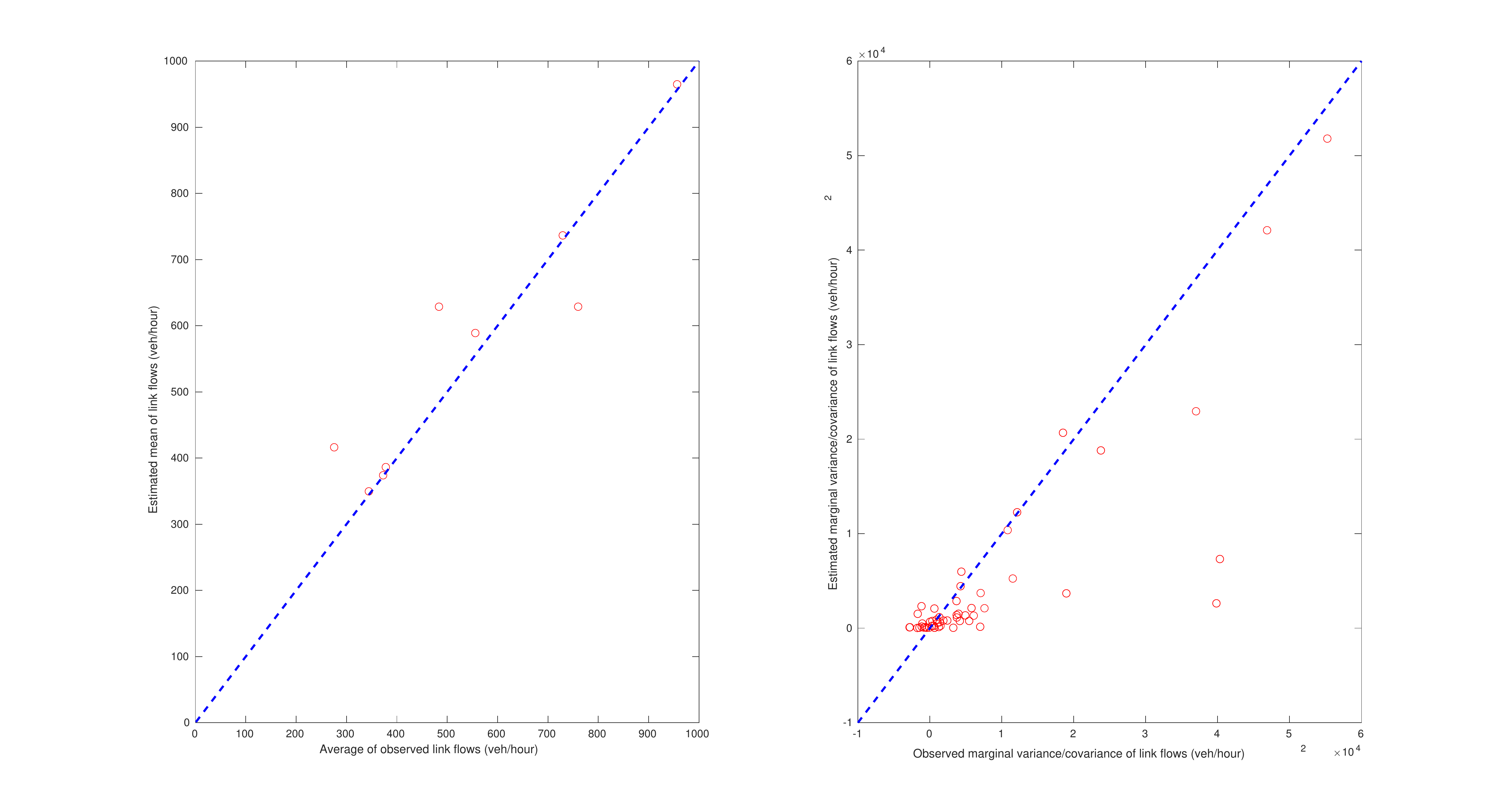}
	\caption{\footnotesize Estimated and observed link flow during the morning peak (Left: mean; Right: variance/covariance)}
	\label{fig:dc_result}
\end{figure}

\section{Conclusions}
\label{sec:con}
This paper develops a novel theoretical framework for estimating the mean and variance/covariance matrix of O-D demand considering the day-to-day variation induced by travelers' independent route choices. The essential idea is to see the traffic data on each day as one data point and to use data points collected years along to estimate the probability distribution of O-D demand, as well as the probability distributions of links, paths and their generalized costs. As opposed to a real-valued estimation of flow and costs from traditional ODE, the probabilistic ODE estimates their probability distributions that are central to reliable network design, operation and planning. The probabilistic ODE framework is large-scale data friendly in the sense that it can make the best use of large-scale day-to-day traffic data to support complex decision making.

The framework estimates O-D mean vector and variance/covariance matrix iteratively, also known as iterative generalized least squares (IGLS) in statistics. IGLS holds great potential to converge faster than a formulation that estimates both simultaneously. It also decomposes a complex estimation problem into two sub-problems, which are relatively easier to solve. In the sub-problem of estimating O-D demand mean, we illustrate how to incorporate day-to-day traffic flow observations into the formulations and explain how the data size and historical O-D information effect the estimation results. In the sub-problem of estimating the O-D demand variance-covariance matrix, a convex optimization formulation is presented to approximate the solution. Lasso regularization is employed to obtain sparse covariance matrix for better interpretation and computational efficiency. We also discuss the observability of the probabilistic ODE problem. The non-uniqueness property of the probabilistic ODE under the IGLS framework is examined.  Though probabilistic ODE works with a much larger solution space than the deterministic O-D ODE,
we show that its estimator for O-D demand mean is no worse than the best possible estimator by an error that reduces with the increase in sample size. 

The probabilistic ODE is examined on two small networks and two real-world large-scale networks. The solution converges quickly under the IGLS framework. In all those experiments, the results of the probabilistic ODE are compelling, satisfactory and computationally plausible. We also conduct the sensitivity analysis of estimation performance with respect to data sample size and historic O-D information. Increase the sample size and quality of historical O-D information can effectively approach the ``true'' probability distribution of O-D demand and path/link flow. Lasso regularization on the covariance matrix estimation leans to underestimate variance and covariance. A proper Lasso penalty ensures a good trade-off between bias and variance of the estimation. In practice, trial-and-error may be needed to identify a proper Lasso penalty.

In the near future, we plan to address a few computational issues before it can be widely deployed for practitioners. We will intensively test this probabilistic ODE method in other large-scale networks with a better data coverage than the D.C. network tested in this paper. Various modeling settings need to be tested, such as different route choice models, with and without Lasso regularization, with and without historical O-D, with and without traffic speed data. In addition, we speculate that the initial variance/covariance matrix can be critical to the estimator. Some prior knowledge about the variance/covariance of demand among O-D pairs can be obtained from traditional planning models. We plan to test how this prior knowledge can help improve the estimation results. In addition, this paper assumes the traffic observations on a set of days are i.i.d, but the set can be flexible. We can fit two probabilistic O-D distributions using solely workdays and weekends. We can even construct an unsupervised learning mechanism to cluster the traffic observations and then fit probabilistic O-D distribution for each cluster.

Furthermore, we plan to extend this research to estimate the probability distributions of time-varying O-D demand where mesoscopic traffic flow dynamics can be incorporated into the network modeling instead of naive BPR functions.

\section*{Acknowledgement}
This research is funded in part by Traffic 21 Institute and Carnegie Mellon University's Mobility21, a National University Transportation Center for Mobility sponsored by the US Department of Transportation. The contents of this report reflect the views of the authors, who are responsible for the facts and the accuracy of the information presented herein. The U.S. Government assumes no liability for the contents or use thereof. We would also like to thank anonymous reviewers for their valuable suggestions.

\cleardoublepage
\bibliography{report}
\cleardoublepage

\appendix
\section{Estimating the link flow on observed links}
\label{ap:ob}
Identifying the estimator for the link flow on observed links can be cast into a quadratic optimization problem as shown in Equation~\ref{eq:rewrite}.

\begin{equation}
\label{eq:rewrite}
\begin{array}{rrclcl}
\vspace{5pt}
\displaystyle \min_{x} & \multicolumn{3}{l}{\displaystyle  \frac{1}{2} x^T B x +  b^T x} \\
\textrm{s.t.} & x & \geq & 0 \\
\end{array}
\end{equation}
where
\begin{eqnarray}
B &=&n   \Sigma_{x}^{-1}\\
b &=& - \sn  \Sigma_x^{-1} \xx_i
\end{eqnarray}

First the above optimization problem is strongly convex, since $\Sigma_x =  \Delta^T \Sigma_{f|q} \Delta  + \Delta P \Sigma_{q} \Delta^T P^T + \Sigma_e$ and $\Sigma_e$, $\Sigma_{f|q}$ semi-positive definitive and $\Sigma_e$ is positive definitive. $\Sigma_x \succ 0$ always holds, and so does $\Sigma_x^{-1} \succ 0$

Proposition \ref{pro:mle} implies that $\hat{x^o}$ can be derived in a closed form.  This is proven by using the KKT condition of the minimization problem above.
\begin{proof}
\begin{eqnarray}
\sn {\Sigma_x^{o}}^{-1} (x^o - \xx_i^o) - \sum_{a=1}^{|A^o|} \lambda_a &=& 0\\
\lambda_a &\geq& 0\\
\lambda_a x_i^o &=& 0\\
x_i^o &\geq& 0
\end{eqnarray}
Since $\xx_i^o \succ 0$, $x^o = \frac{1}{n} x_i > 0$. Then, $\lambda_a = 0$, all four KKT conditions are satisfied. Since the minimization problem is strongly convex, therefore $\hat{x^o} = \frac{1}{n}   \xx_i^o$ is the optimal solution to the optimization problem.
\end{proof}
\cleardoublepage
\section{Estimating the covariance matrix with Lasso regularization}
\label{ap:proximal}
We discuss the procedure for estimating the covariance matrix with Lasso regularization. We first use the proximal method Iterative Shrinkage-Thresholding Algorithm (ISTA) \citep{nesterov1983method, beck2009fast}. ISTA can be modified to the Fast Iterative Shrinkage-Thresholding Algorithm (FISTA) \citep{nesterov2005smooth}. FISTA's convergence rate is of $O(\frac{1}{n})$, compared to $O(\frac{1}{n^2})$ for ISTA.

Rewrite Formulation \ref{eq:LSwithLasso}.
\begin{equation}
\begin{array}{rrclcl}
\vspace{5pt}
\displaystyle \min_{\Sigma_q} & \multicolumn{3}{l}{\displaystyle   f(\Sigma_q) +  \lambda  g(\Sigma_q)}\\
\textrm{s.t.} & \Sigma_q & \succeq & 0
\end{array}
\end{equation}
where
\begin{eqnarray}
f(\Sigma_q) &=& \norm{\Delta \Sigma_{f|q} \Delta^T +\Delta P \Sigma_q P^T \Delta^T +  \Sigma_e - \Sigma_x^o}_F^2 \\
&=& \text{Tr}\left[\left(\Delta \Sigma_{f|q} \Delta^T +\Delta P \Sigma_q P^T \Delta^T - \Sigma_x^o\right)\left(\Delta \Sigma_{f|q} \Delta^T +\Delta P \Sigma_q P^T \Delta^T +  \Sigma_e - \Sigma_x^o\right)^T \right] \\
&=& \text{Tr}\left[\Delta P \Sigma_q P^T \Delta^T \left( 2\Delta \Sigma_{f|q} \Delta^T + \Delta P \Sigma_q P^T \Delta^T  + 2\Sigma_x^o   \right) \right]  + \mathcal{C} \\
\end{eqnarray}
where $\mathcal{C} $ is a constant independent of $\Sigma_q$.

Using results presented in \citet{petersen2008matrix}, the derivative of $f(\Sigma_q)$,
\begin{eqnarray}
\frac{\partial f(\Sigma_q)}{\partial \Sigma_q} &=& 2 P^T \Delta^T \left(\Delta \Sigma_{f|q} \Delta^T  - \Sigma_x^o\right)\Delta P  + 2 P^T \Delta^T \Delta P \Sigma_q P^T \Delta^T \Delta P \\
&=&2  P^T \Delta^T \left(\Delta P \Sigma_q P^T \Delta^T + \Delta \Sigma_{f|q} \Delta^T - \Sigma_x^o\right)\Delta P
\end{eqnarray}
Then define the soft-thresholding operator $S_\lambda(\beta)$:
\begin{eqnarray}
S_\lambda(\beta) = \begin{cases}
\beta_i - \lambda  & \text{if $\beta_i > \lambda$} \\
0  & \text{if $|\beta_i| \leq \lambda$} \\
\beta_i + \lambda  & \text{if $\beta_i <  -\lambda$}
\end{cases}
\end{eqnarray}
ISTA's updating rule is:
\begin{eqnarray}
\Sigma_q^+ =  S_\lambda\left(\Sigma_q - \frac{\partial f(\Sigma_q)}{\partial \Sigma_q}\right)
\end{eqnarray}
FISTA's updating rule is:
\begin{eqnarray}
v &=& \Sigma_q^{(k-1)} + \frac{k-2}{k+1} \left(\Sigma_q^{(k-1)} - \Sigma_q^{(k-2)}\right)\\
\Sigma_q^{(k)} &=& S_\lambda\left(  v - \frac{\partial f(\Sigma_q)}{\partial \Sigma_q}\right)
\end{eqnarray}

When applying the updating rules, proper step sizes need to be chosen to ensure $\Sigma_q  \succeq  0$, backtracking methods can be used to select the step sizes \citep{boyd2004convex}. Convergence analysis for both ISTA and FISTA can be found in \citet{nesterov1983method, nesterov2005smooth}.

\cleardoublepage
\section{Proof of Proposition~\ref{pro:better}}
\label{ap:risk}

Denote $\X$ as the sample space where we observe the data from, $x$ is one possible observation from the sample space. The ODE method $d(x)$ takes the observations $x$ as input and outputs the O-D estimators. The loss function $L(\cdot, \cdot)$ measures the discrepency between the estimated value and the true value. The risk is defined as the expectation of the loss between the true O-D demand mean $q$ and the estimated O-D demand mean by all possible observations. The intuition to define risk is that, we want to minimize the loss for all possible observations, rather than one specific observation (the empirical loss for specific numerical experiments). ODE methods that minimize the risk will have robust performance, even provided with noisy and limited data inputs. The risk function is defined by,
\begin{eqnarray}
R(q, d)   = \int_{\X} L(q, d(x))  p_{Q}(x) dx
\end{eqnarray}

The loss function can be any non-negative and strictly convex function. In this study, we use the quadratic loss since it is the most commonly used. However, the following proof would generally work for other norm operators.
\begin{eqnarray}
L(x, y) = \norm{x-y}_2^2
\end{eqnarray}

We first consider Formulation~\ref{eq:traOD} without historical O-D information,
\begin{equation}
\label{eq:sod}
\begin{array}{rlclcl}
\vspace{5pt}
\displaystyle \min_{q, x, f} & \frac{1}{2} (x^o - \Delta^o \tilde{p} q)^T \hat{\Sigma}_x^{-1} (x^o - \Delta^o \tilde{p} q) \\
\textrm{s.t.} & q \geq 0\\
\end{array}
\end{equation}
where $\hat{\Sigma}_x^{-1}$ is the estimated inverse link variance/covariance matrix. $\hat{\Sigma}_x^{-1}$ can be estimated during the IGLS iteration, we rewrite the formulation in the form of $L^2$ norm, 
\begin{eqnarray}
\label{eq:sod2}
\begin{array}{rlclcl}
\vspace{5pt}
\displaystyle \min_{q, x, f} & \frac{1}{2} \norm{ \hat{\Sigma}_x^{-\frac{1}{2}}x^o - \hat{\Sigma}_x^{-\frac{1}{2}}\Delta^o \tilde{p} q}_2^2  \\
\textrm{s.t.} & q \geq 0\\
\end{array}
\end{eqnarray}

Formulation~\ref{eq:sod2} is a standard non-negative least square problem. Many efficient algorithms \citep{lawson1995solving} are practically ready to solve the formulation. Here we propose a fairly good estimation of O-D demand mean $q$ in Lemma~\ref{lem:q} rather than directly solving for Formulation~\ref{eq:sod2}. We will later see that the proposed estimator $\hat{q}$ approximates the true O-D mean $q$ when the data size $n$ is sufficiently large.
\begin{lemma}
\label{lem:q}
When the data size $n$ is sufficiently large, the O-D demand mean can be estimated by,
\begin{eqnarray} \label{sampledx}
\hat{q} = \max\left(\tilde{D}^+  v^o, 0\right)
\end{eqnarray}
where $v^o = \hat{\Sigma}_x^{-\frac{1}{2}} x^o$, $\tilde{D} = \hat{\Sigma}_x^{-\frac{1}{2}}\Delta^o p$ and $\tilde{D}^+$ is the Moore–Penrose pseudoinverse of matrix $\tilde{D}$. Detailed proof can be found in \citet{nie2005inferring}.
\end{lemma}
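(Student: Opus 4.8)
The plan is to prove the lemma as an approximation statement: when $n$ is large, the estimator $\hat q = \max(\tilde D^+ v^o,0)$ is (i) a minimizer of the non-negative weighted least-squares Formulation~\ref{eq:sod2}, and (ii) close to the true O-D mean $q$, with the gap controlled by the sampling error of the link-flow average. Throughout I read the lemma's ``$x^o$'' as the data average $\hat x^o$, so $v^o=\hat\Sigma_x^{-\frac12}\hat x^o$ and $\tilde D=\hat\Sigma_x^{-\frac12}\Delta^o\tilde p$ is exactly the design matrix in Formulation~\ref{eq:sod2}; the only structural hypothesis used is $\hat\Sigma_x\succ0$ (Assumption of Proposition~\ref{pro:better}), which makes $\hat\Sigma_x^{-\frac12}$ well defined.

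First I would record the two ingredients that drive everything. From the proposition stating $\hat x^o\sim\N(x^o,\tfrac1n\Sigma_x^o)$, we get $\hat x^o = x^o + {\cal O}_p(1/\sqrt n)$; and under the statistical equilibrium with route-choice probability $p$, the true mean link flow obeys $x^o=\Delta^o\tilde p\,q$ \emph{exactly} by Propositions~\ref{pro:gesta1} and \ref{pro:marginal}. Multiplying by the fixed matrix $\hat\Sigma_x^{-\frac12}$ yields $v^o=\tilde D\,q+{\cal O}_p(1/\sqrt n)$, i.e. $v^o$ lies within ${\cal O}_p(1/\sqrt n)$ of the column space of $\tilde D$. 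Next I would identify the unconstrained minimizers of $\tfrac12\|v^o-\tilde D q\|_2^2$ with the affine set $\{q:\tilde D^\top\tilde D q=\tilde D^\top v^o\}$, whose minimum-Euclidean-norm element is precisely $\tilde D^+v^o$. Combining the two, the least-squares residual at $\tilde D^+v^o$ equals the distance from $v^o$ to $\mathrm{col}(\tilde D)$, which is ${\cal O}_p(1/\sqrt n)$; hence $\tilde D^+v^o$ is asymptotically a global unconstrained minimizer, and $\tilde D^+v^o=\tilde D^+\tilde D\,q+{\cal O}_p(1/\sqrt n)$.

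The non-negativity constraint is then handled by cases. When $\tilde D$ has full column rank --- equivalently $\Delta^o\tilde p$ is invertible, the uniqueness regime discussed in the remarks after Formulation~\ref{eq:ODu1} and attainable after enlarging the path set or consolidating observed links --- we have $\tilde D^+\tilde D=I$, so $\tilde D^+v^o=q+{\cal O}_p(1/\sqrt n)$; since the true $q>0$ componentwise, for $n$ large this vector is nonnegative, the operator $\max(\cdot,0)$ is inactive, and $\hat q$ coincides with the unconstrained minimizer, which is feasible and hence solves Formulation~\ref{eq:sod2}, with $\|\hat q-q\|_2^2={\cal O}_p(1/n)$ --- the bound that feeds Proposition~\ref{pro:better}. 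In the rank-deficient regime (the generic large-scale case of Proposition~\ref{pro:nonunique}), $\tilde D^+v^o$ is the minimum-norm member of the least-squares solution set and the clipping $\max(\cdot,0)$ restores feasibility; here I would cite \citet{nie2005inferring} for the detailed SVD-based argument, noting that any residual discrepancy is exactly the observability gap already acknowledged, not an error introduced by the estimator.

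The main obstacle is this last point: $\max(\tilde D^+v^o,0)$ need not be the \emph{exact} constrained minimizer when the minimum-norm unconstrained solution has negative entries, so a fully rigorous argument requires either the full-column-rank hypothesis or the weaker assumption that $q$ lies in the row space of $\tilde D$, which is why the lemma is stated for ``$n$ sufficiently large'' and defers to \citet{nie2005inferring}. A secondary subtlety to dispatch is that $\hat\Sigma_x$ is itself updated across IGLS iterations; but because every step above uses only $\hat\Sigma_x\succ0$ and never its accuracy, the conclusion is insensitive to the quality of $\hat\Sigma_x$, which is precisely the robustness property needed downstream.
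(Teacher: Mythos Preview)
The paper's own treatment of this lemma is not a proof at all: it simply states ``Detailed proof can be found in \citet{nie2005inferring}'' and moves on, using the lemma only to supply a closed-form surrogate estimator for the NNLS problem so that the risk bound in Proposition~\ref{pro:better} can be computed explicitly. Your proposal therefore offers substantially more than the paper itself does --- an actual asymptotic argument built on the sampling law $\hat x^o\sim\N(x^o,\tfrac1n\Sigma_x^o)$, the model identity $x^o=\Delta^o\tilde p\,q$ from Propositions~\ref{pro:gesta1} and~\ref{pro:marginal}, and the standard pseudoinverse characterization of least-squares minimizers. That argument is essentially sound in the full-column-rank case, and you correctly flag that in the rank-deficient case $\max(\tilde D^+v^o,0)$ need not coincide with the constrained minimizer, which is exactly where the paper (and you) defer to \citet{nie2005inferring}.

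One minor inaccuracy in your commentary, not in the mathematics: you write that full column rank of $\tilde D$ is ``attainable after enlarging the path set or consolidating observed links,'' but enlarging the path set does not change the column count of $\tilde D=\hat\Sigma_x^{-1/2}\Delta^o\tilde p$, which equals the number of O-D pairs $|K_q|$; the relevant condition is adequate sensor coverage $|A^o|\geq|K_q|$ together with $\Delta^o\tilde p$ having independent columns. The paper's remark you cite about enlarging the path set concerns the path-flow estimator (Formulation~\ref{eq:TraF}), not this O-D-level formulation. This does not affect the validity of your proof, only the surrounding narrative.
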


If $|K| \leq |A^o|$, then $\tilde{D}^+ = \tilde{D}^{-1}$. Otherwise, $\tilde{D}^+ = (\tilde{D}^T \tilde{D})^{-1} \tilde{D}^T$. Based on Lemma~\ref{lem:q}, we prove Proposition~\ref{pro:better}.

\begin{proof}
Our first target is to bound the risk when the data sample size increases. The risk for a given estimation method $d(x)$ presented in \ref{sampledx} is,
\begin{eqnarray}
R(q, d) &=& \int_{\X} L\left(q, d(x) \right)   p_{Q}(x) dx\\
&=& \int_{\X} \norm{q - \max\left(\tilde{D}^+ v^o , 0\right)}_2^2  p_{Q}(x) dx\\
&\leq&\int_{\X} \norm{q - \tilde{D}^+ v^o }_2^2  p_{Q}(x) dx\\
&=& \int_{\X} \norm{q - \left((\hat{\Sigma}_x^{-\frac{1}{2}}\Delta^o \tilde{p})^T(\hat{\Sigma}_x^{-\frac{1}{2}}\Delta^o \tilde{p})\right)^{-1} (\hat{\Sigma}_x^{-\frac{1}{2}}\Delta^o \tilde{p})^T v^o }_2^2  p_{Q}(x) dx\\
&=& \int_{\X} \norm{q - \left(\tilde{p}^T{\Delta^o}^T\hat{\Sigma}_x^{-1}\Delta^o \tilde{p}\right)^{-1} \tilde{p}^T {\Delta^o}^T \hat{\Sigma}_x^{-1} x^o }_2^2  p_{Q}(x) dx \label{eq:inv}\\
&\leq&\int_{\X} \norm{\left(\tilde{p}^T{\Delta^o}^T\hat{\Sigma}_x^{-1}\Delta^o \tilde{p}\right)^{-1}}_2^2 \norm{ \left( \tilde{p}^T {\Delta^o}^T \hat{\Sigma}_x^{-1} \Delta^o \tilde{p}  \right)q -  \tilde{p}^T {\Delta^o}^T \hat{\Sigma}_x^{-1} x^o }_2^2  p_{Q}(x) dx\\
&\leq&\int_{\X} \norm{\left(\tilde{p}^T{\Delta^o}^T\hat{\Sigma}_x^{-1}\Delta^o \tilde{p}\right)^{-1}}_2^2 \norm{  \tilde{p}^T {\Delta^o}^T \hat{\Sigma}_x^{-1} x  -  \tilde{p}^T {\Delta^o}^T \hat{\Sigma}_x^{-1} x^o }_2^2  p_{Q}(x) dx\\
&=&\int_{\X} \norm{\left(\tilde{p}^T{\Delta^o}^T\hat{\Sigma}_x^{-1}\Delta^o \tilde{p}\right)^{-1}}_2^2 \norm{  \tilde{p}^T {\Delta^o}^T (\hat{\Sigma}_x^{-1} x -   \hat{\Sigma}_x^{-1} x^o) }_2^2  p_{Q}(x) dx\\
&=&\int_{\V} \norm{\left(\tilde{p}^T{\Delta^o}^T\hat{\Sigma}_x^{-1}\Delta^o \tilde{p}\right)^{-1}}_2^2 \norm{  \tilde{p}^T {\Delta^o}^T (v - v^o) }_2^2  p_{Q}(v) dv\\
&=& \norm{\left(\tilde{p}^T{\Delta^o}^T\hat{\Sigma}_x^{-1}\Delta^o \tilde{p}\right)^{-1}}_2^2  \norm{p^T {\Delta^o}^T}_2^2 \Expect \norm{V^o-v^o}_2^2
\end{eqnarray}

In Equation~\ref{eq:inv}, $\tilde{p}^T{\Delta^o}^T\hat{\Sigma}_x^{-1}\Delta^o \tilde{p}$ is invertible when $\Delta^o$ is fully ranked. Since $\norm{p^T {\Delta^o}^T}_2^2 $ is independent of the data size $n$, we can see it as a constant. As for $\norm{\left(\tilde{p}^T{\Delta^o}^T\hat{\Sigma}_x^{-1}\Delta^o \tilde{p}\right)^{-1}}_2^2$,when the sample size increases, $\hat{\Sigma}_x$  approximates $\Sigma_x$. As long as the observed data $x$ is bounded, $\hat{\Sigma}_x$ can be bounded, independent of the sample size $n$. For $v^o$, in the sub-problem of estimating the O-D mean vector, we have,
\begin{eqnarray}
V^o = \hat{\Sigma}_x^{-\frac{1}{2}} \bar{\xx}^o \sim \N(v^o,  \frac{\hat{\Sigma}_x^{-\frac{1}{2}} \Sigma_x \hat{\Sigma}_x^{-\frac{1}{2}}}{n} )
\end{eqnarray}

Again $\hat{\Sigma}_x^{-\frac{1}{2}} \Sigma_x \hat{\Sigma}_x^{-\frac{1}{2}}$ can be bounded. When $n \to \infty$, by Law of large number (LLN), we have,
\begin{eqnarray}
 V^o \xrightarrow{Prob} v^o
\end{eqnarray}
Also after assuming $\norm{\left(\tilde{p}^T{\Delta^o}^T\hat{\Sigma}_x^{-1}\Delta^o \tilde{p}\right)^{-1}}_2^2  \norm{\tilde{p}^T {\Delta^o}^T}_2^2 \norm{\hat{\Sigma}_x^{-\frac{1}{2}} \Sigma_x \hat{\Sigma}_x^{-\frac{1}{2}}}_2^2 \leq M $, we have

\begin{eqnarray}
R(q,d) \leq \frac{M}{n}\in {\cal O}\left(\frac{1}{n}\right), \quad \forall p
\end{eqnarray}

This implies that as long as the estimation of $\Sigma_x$ is bounded, any estimator $d(x)$ can achieve the same level of accuracy provided with a sufficiently large $n$.

For Formulation~\ref{eq:ODwithEQ}, suppose we use heuristic methods to solve the bi-level formulation as in \citet{yang1995heuristic}. Each iteration in solving the upper level problem is equivalent to solving Formulation~\ref{eq:traOD} with certain route choice probability $p$. Note the bound applies for all route choice probability $p$.  Therefore the statistical risk of the estimated O-D mean is still of ${\cal O}\left(\frac{1}{n}\right)$.
\end{proof}

\end{document}